  \newtheorem{assumption}{Assumption}
  \newtheorem{definition}{Definition}
  \newtheorem{lemma}{Lemma}
  \newtheorem{proposition}{Proposition}
  \newtheorem{theorem}{Theorem}
  \newtheorem{property}{Property}
  \newtheorem{remark}{Remark}
\DeclareMathOperator*{\argmin}{arg\,min}
\newcommand{\E}{\ensuremath{\mathbb{E}}}
\newcommand{\I}{\ensuremath{\mathbb{I}}}
\newcommand{\1}{\ensuremath{{\bf 1}}}
\newcommand{\0}{\ensuremath{{\bf 0}}}
\newcommand{\cindep}{\raisebox{0.05em}{ 
  \!\rotatebox[origin=c]{90}{$\models$} \!
}}
\newcommand\ci{\protect\mathpalette{\protect\independenT}{\perp}}
\def\independenT#1#2{\mathrel{\rlap{$#1#2$}\mkern2mu{#1#2}}}
\newcommand\mci{\protect\mathpalette{\protect\mindependenT}{\perp}}
\def\mindependenT#1#2{\mathrel{\rlap{$#1#2$}\mkern2mu{#1#2}_m}}
\title{Dynamic Local Average Treatment Effects}
\author{Ravi B. Sojitra\\
Management Science and Engineering\\
Stanford University\\
\tt{ravisoji@stanford.edu}
\and
Vasilis Syrgkanis\\
Management Science and Engineering\\
Stanford University\\
\tt{vsyrgk@stanford.edu}
}
\date{
  First draft: May 2, 2024 \\
  This draft: \today
}
\begin{document}
\maketitle

\begin{abstract}
We consider Dynamic Treatment Regimes (DTRs) with One Sided Noncompliance that arise in applications such as digital recommendations and adaptive medical trials.
These are settings where decision makers encourage individuals to take treatments over time, but adapt encouragements based on previous encouragements, treatments, states, and outcomes.
Importantly, individuals may not comply with encouragements based on unobserved confounders.
For settings with binary treatments and encouragements, we provide nonparametric identification, estimation, and inference for Dynamic Local Average Treatment Effects (LATEs), which are expected values of multiple time period treatment effect contrasts for the respective complier subpopulations.
Under One Sided Noncompliance and sequential extensions of the assumptions in \cite{imbens1994identification}, we show that one can identify Dynamic LATEs that correspond to treating at single time steps.
In Staggered Adoption settings, we show that the assumptions are sufficient to identify Dynamic LATEs for treating in multiple time periods.
Moreover, this result extends to any setting where the effect of a treatment in one period is uncorrelated with the compliance event in a subsequent period.
{\let\thefootnote\relax\footnote{{
Ravi B. Sojitra was partially supported by NSF Award IIS-2337916.
Vasilis Syrgkanis was supported by NSF Award IIS-2337916 and partially by a 2022 Amazon Research Award.
We also thank the following colleagues and communities for helpful comments: 
Michael Baiocchi,
Lea Bottmer,
Adel Daoud,
Sukhjin Han,
Guido Imbens,
Ramesh Johari,
Apoorva Lal,
Lihua Lei,
Aidan Perreault,
Joel Persson,
Jann Spiess,
Anna Thomas,
Stefan Wager,
Yiqing Xu,
EC24 Workshop on Frontiers of Online Advertising,
Stanford Causal Machine Learning Seminar,
Stanford Metrics Lunch,
Stanford Data Science,
Stanford Causal Science Center.
}}}
\end{abstract}

\section{Introduction}

A Dynamic Treatment Regime (DTR) such as personalized digital recommendation or medical treatment is one in which each individual receives treatments over time that are personalized based on intermediate treatment responses (e.g. \citep{murphy2003optimal,robins1986new}).
Decision makers in these settings would like to assess the average effects of switching to different treatment sequences, but personalization reduces the amount of randomness available in treatments to inform this assessment.
Moreover, even when treatment is assigned randomly, there is often noncompliance where individuals do not use treatments assigned to them.
We focus on settings with One Sided Noncompliance where access to one of two interventions is controlled so that in every time period only individuals encouraged (i.e. assigned by the experimenter) in that time period may use the intervention.

Randomized experiments with noncompliance have a long history in the causal inference literature (see e.g. \citep[Chapters 23-24]{imbens2015causal}).
Moreover, experiments with One Sided Noncomliance date back to Zelen's Randomized Content Designs \citep{Zelen1979,zelen1990randomized} and later under the name Randomized Encouragement Designs \citep{powers1984effects,holland1988causal,bloom1984accounting}.
In such designs, randomized encouragements can be thought as an instrument, creating exogenous variation in the chosen treatment, and therefore instrumental variable techniques developed traditionally in the econometrics and the social sciences are applicable \citep{Wright1921CorrelationAndCausation,wright1934method,haavelmo1944probability}.
\cite{heckman1990varieties} showed that the average treatment effect (ATE) cannot be recovered via instrumental variable methods unless one is willing to make strong assumptions on the structural functions.
Seminal work by \cite{imbens1994identification} and \cite{angrist1996identification} formalized the causal estimand that can be recovered by applying the instrumental variable method and, in particular, showed that the average treatment effect of the complier population, known as the local average treatment effect (LATE), can be recovered by the standard instrumental variable methods. 

With the increased adoption of adaptive trials and micro-randomized trials \citep{klasnja2015microrandomized} in a variety of fields, such as medicine \citep{bhatt2016adaptive,berry2012adaptive,pallmann2018adaptive}, digital experimentation \citep{agarwal2016multiworld}, and digital marketing \citep{kalyanam2007adaptive}, developing a similar understanding of adaptive trials with noncompliance is increasingly relevant. The goal of our work is to provide nonparametric identification and estimation results for dynamic analogues of LATEs, in adaptive trials with one sided noncompliance, essentially extending the work of \cite{angrist1996identification} to the dynamic treatment regime and characterizing the identifiability of several types of dynamic analogues of LATEs. In fact, we examine the more general setting of adaptive \emph{stratified} trials with one sided noncompliance, where the binary encouragement offered to each unit at each period, as well as the chosen treatment by the unit, can depend on all previous observed short-term outcomes, states, encouragements and chosen treatments, as well as initial unit characteristics. The latter setting can also capture observational data scenarios with noncompliance, as long as the factors that went into the determination of the encouragement at each period are observed.

Adaptive stratified trials with one sided noncompliance arise in many application domains.
For example, a clothing brand will target discounts to individuals over time at various points in their purchase journeys.
In this setting, discounts are the encouragements, redemptions of discounts are the treatments, purchase journey positions are states, and outcomes are long term revenue.
Typically, only a small subset of individuals will redeem discounts at every point discounts are offered, which means that targeting discounts in an additional time period has a small average impact on sales per targeted individual.
However, it is natural to ask whether the average impact is large within the subpopulation that will redeem the discount in the additional time period.
If so, targeting for an additional time period would be compelling as it could improve brand loyalty and gross profits among the customers who do find the discounts compelling.

In medical treatment settings such as depression or HIV treatment, a healthcare provider may offer supplementary resources such as meditation classes or digital reminders, but patients may not always make use of these resources.
In this setting, free access to resources are the encouragements, utilization of the free resources are treatments, and the outcomes are depression symptoms or viral loads.
In this case, there is typically a cost associated with offering such resources, so if the average effect on symptoms is small because the effect is small for everyone, it may be more worthwhile to use the budget to make the baseline treatment more affordable or accessible instead.
If the average is small because there is a large effect for a small subset of the population that does utilize the offered resources, it could be worthwhile to offer this, since a large change in mood or viral load can have a large impact on mortality within one year.

The combination of adaptive trials, or more generally, dynamic treatment regimes, with instrumental variable methods, to address the noncompliance problem has received considerable attention in the recent literature \citep{han2023optimal,chen2023estimating, han2021identification, heckman2007dynamic,heckman2016dynamic,miquel2002identification,michael2023instrumental,cui2023instrumental}.
However, prior works do not offer a characterization of the identifiability of dynamic notions of LATEs without further restrictions on the independence of the compliance behavior with the heterogeneous dynamic treatment effect or resort to only providing partial identification results.
In particular, \citep{han2023optimal,chen2023estimating} consider only partial identification of optimal dynamic regimes, under noncompliance.
While \citealt{han2021identification,heckman2007dynamic,heckman2016dynamic} impose assumptions (analogous to \citep{vytlacil2007dummy}), which are strong enough to enable the identification of the dynamic analogue of the ATE, as opposed to the LATE.
Similarly, \cite{michael2023instrumental,cui2023instrumental} employ independence assumptions between the compliance effect and the unobserved confounder, that are also strong enough to identify ATEs, rather than LATEs.
Finally, \cite{miquel2002identification}, considers LATE estimands similar to the ones we do, but imposes assumptions that preclude the encouragement to be administered based on a dynamic adaptive policy, based on historical treatments and states.
Hence, it precludes the study of adaptive trials with noncompliance.

We provide nonparametric identification, estimation, and inference for Dynamic Local Average Treatment Effect (Dynamic LATE) estimands.
These estimands are average differences in outcomes between pairs of treatment sequences for individuals whose treatments would match the encouragements perfectly in both scenarios. We consider the case when the encouragement and the treatment at each period is binary.
We allow encouragements, treatments, and states in each time period to depend on the entire history of these three sets of variables.
Moreover, we allow for arbitrary unobserved confounding between treatments, states, and the final outcome.

We show that in this general setting, with an arbitrary number of periods, adding  One Sided noncompliance to standard identifying assumptions for DTRs enables identification of all dynamic When-to-Treat LATEs, i.e. LATEs corresponding to treatment in one time period only.
We also show that these types of When-to-Treat LATEs are not identifiable when one solely assumes a sequential variant of the \emph{monotonicity} assumption (i.e., that encouragement can only make someone take the treatment and not the opposite). Moreover, we show that in the absence of further restrictions, even in two-period settings with one sided noncompliance, the dynamic LATE that corresponds to treatment being offered in both periods, namely the Always-Treat LATE, cannot be identified. 

To complement the latter result, we also offer a minimal cross-period effect-compliance independence (``Staggered Compliance'') assumption, under which Always-Treat LATEs are also identifiable in the two-period setting. 
This assumption can be satisfied in various ways. For example, it is satisfied in Staggered Compliance settings where individuals who opt into treatment cannot revert their decision and will continue to opt into treatment in the future, whenever encouraged again to do so. It also holds in settings where there is endogeneity in only the first time period due to individuals opting into a sequential experiment and units are perfectly compliant with future encouragements if they choose to opt into the adaptive trial.
Under such Staggered Compliance settings, our identification argument for the Always-Treat LATE, also extends to the many period setting. Our two-period result requires a more relaxed assumption than Staggered Compliance.

We complement our identification results with debiased machine learning estimation and inference methods, applying the recent works on debiased \citep{chernozhukov2018double} and automatically debiased machine learning \citep{chernozhukov2022automatic} to our setting. We empirically validate the performance of our estimates and the coverage of our confidence interval construction procedure on synthetic data.

\section{Related Literature}

While there is extensive work on identification in DTRs (e.g. \cite{murphy2003optimal,robins1986new}) and in static settings with noncompliance (e.g. \cite{imbens1994identification}), their intersection is substantially less developed. 
Moreover, work within this intersection either relies on restrictive identifying assumptions to obtain general results or obtains limited results using weak assumptions.
Our approach builds on standard identifying assumptions for the DTR setting by introducing familiar assumptions such as One Sided Noncompliance.
To situate our work, we organize the related literature based on (1) whether it focuses on the single or multiple time period setting, (2) whether identification allows for encouragements to be adaptive (i.e. dynamic) or not (i.e. static), and (3) whether identification focuses on settings with noncompliance.

Recall that the treatment effect estimand is typically the Average Treatment Effect (ATE) $\E[Y(1) - Y(0)]$.
However, in settings with noncompliance, estimands are commonly Local Average Treatment Effects (LATE), which are ATEs conditional on compliance events representing subpopulations who will use treatments $D$ given encouragements $Z=z$.
For example $\E[Y(1) - Y(0) \mid D(1)-D(0)=1]$ in \cite{imbens1994identification}.
In this latter setting with noncompliance, the causal effect of the encouragement is called the Intent to Treat ATE: $\E[Y(D(1)) - Y(D(0))]$.

For the static, single time period setting with noncompliance, many works have extended LATE identification (\cite{imbens1994identification}) which gave a result for the setting with for a single categorical encouragement $Z_1$ and binary treatment variable $D_1$.
For example, \cite{mogstad2021causal,mogstad2024policy} use a weaker Partial Monotonicity condition, which requires binary treatments under a vector of encouragements to be greater than or equal to the treatments under another if the former vector of encouragements differs from the latter vector of encouragements in no more than one coordinate.
The condition allows the ordering of the values in the differing coordinate to change depending on the values of the remaining coordinates.
The authors show that under Partial Monotonicity, the Two Stage Least Squares (2SLS) estimand is equal to a non-negatively weighted combination of (static) Local Average Treatment Effects on the same treatment.
\cite{goff2020vector} introduced a more restrictive, version of this assumption called ``Vector Monotonicity'', which requires an ordering within each coordinate of encouragement vectors.
\cite{heckman2018unordered} focused on the static setting with multiple encouragement and treatment variables.
The identifying condition they introduce is called ``Unordered Monotonicity'', which requires that switching between two encouragements should shift everyone towards a particular treatment, which is quite restrictive even in the static setting.
\cite{bhuller20222sls} focus on 2SLS using average conditional monotonicity and ``no cross effects''.
The main limitations in applying this category of results to DTRs is that encouragements, treatments, and states in DTRs depend on previous encouragements, treatments, and states.

For the static, multiple time period setting with noncompliance, LATE identification can be achieved using stronger restrictions to circumvent the fact that the encouragement is held fixed over time.
For example, \cite{ferman2023identifying} assume that a binary encouragement is fixed, but there may be variation in when individuals first choose treatment.
They introduce two important identifying conditions.
The first is Staggered Adoption, which requires that those who enter the treatment group cannot exit the group over time.
The second requires the subpopulations contaminating the reduced form estimands to have constant or homogeneous treatment effects over time, which is restrictive for our motivating examples.
Related to this setting are outcome or duration models with instrumental variables (e.g. \cite{beyhum2023instrumental}), which also restrict heterogeneity of effects to achieve identification.
In particular, \cite{beyhum2023instrumental} impose rank invariance of the potential outcomes.

For the dynamic, multiple time period setting without noncompliance, identification of Dynamic Intent-to-Treat ATEs can be argued by requiring sequential extensions of encouragement overlap (i.e. positivity) and conditional ignorability (i.e. unconfoundedness) conditions (e.g. see \cite{murphy2003optimal,robins1986new}).
Informally, sequential ignorability requires that treatment in each time period is conditionally independent of future potential outcomes conditional on previous treatments and short term outcomes.
Sequential overlap requires that the treatments being contrasted within the same time period have positive probabilities.
We also require the commonly used no interference condition (e.g. see \cite{neyman1935statistical,rubin1990comment,imbens2015causal}).

\subsection{Related work on Dynamic Treatment Regimes with Noncompliance}

For the dynamic, multiple time period setting with noncompliance, two areas of work include identification of optimal treatment regimes (\cite{han2023optimal,chen2023estimating}) and of identification of treatment effects.
Work on identification of treatment effect estimands using instrumental variables has focused on both Average Treatment Effects (ATEs) (\cite{han2021identification,heckman2007dynamic,heckman2016dynamic,michael2023instrumental,cui2023instrumental} and LATEs (\cite{miquel2002identification}).
However, most existing work on causal effect identification imposes independence between states $S$ and unobservable random variables such as treatment-outcome confounders $U$ and individual causal effects of encouragements on treatments $C_t \triangleq D_t(1) - D_t(0)$.
These approaches are restrictive for Dynamic LATE identification in the settings we consider because we expect these unobserved random variables (e.g. $U, C_t$)
to depend on states, which are random variables containing static covariates, time varying covariates, and short term outcomes.

\cite{han2023optimal}, \cite{spicker2024optimal}, and \cite{chen2023estimating} take steps towards identification of optimal treatment regimes.
In particular, \cite{han2023optimal} provides \textit{partial} identification of the optimal treatment regime in settings without sequential ignorability.
That is, he identifies a set of treatment regimes that contains the optimal one given observational data, which may not satisfy sequential ignorability (i.e. exogeneity or unconfoundedness).
\cite{chen2023estimating} focus on a more modest policy learning task whose objective is to identify a dynamic treatment regime better than the baseline though not necessarily the optimal one.

\cite{han2021identification,heckman2007dynamic,heckman2016dynamic} focus on ATE identification.
\cite{han2021identification} provides nonparametric identification of the ATE using encouragements by extending to multiple time periods assumptions A-2 and A-5 in \cite{vytlacil2007dummy}, which provided ATE identification in the single time period setting with noncompliance.
In particular, \cite{han2021identification}'s Assumption SX extends A-2, which requires error terms in the outcome and treatment models to be independent of covariates $X$ and encouragements $Z$.
Assumption SP extends Assumption A-5 in \cite{vytlacil2007dummy}, which requires that for every covariate value $X = x$ with positive support there must exist another covariate value with support, and equal likelihoods of encouragement, and equal outcome under a \textit{different} treatment.
This assumption is restrictive, but enables ATE identification in the setting with instrumental variables and unobserved confounding.
\cite{heckman2016dynamic} use an analogous assumption (Assumption A-1e) and also impose restrictions on unobservables in the outcome and treatment models in that the error terms in the outcomes and treatments must be independent (Assumption A1-c) and the encouragements $Z$ must be independent of prognostic factors (Assumption A1-d).
They focus on identification in settings where the treatments are irreversible and consider the case where treatments are ordinal as in \cite{heckman2007dynamic} or where they are unordered, but embedded in a tree data structure.

\cite{miquel2002identification} consider the two-period case and showed that one can identify what we define as when-to-treat and always-treat LATEs (see \citealp[Theorem~7]{miquel2002identification}). However, they impose very strong assumptions on the dependence between encouragements and past observations.
In particular, \citep[Assumption~11]{miquel2002identification} precludes the second period encouragement from depending on the first period treatment, the intermediate state and furthermore, the second period treatment, can only depend on the second period encouragement. Hence, the only inter-period dependence that is allowed is that the second period encouragement can depend on the first period encouragement.
This setting, does not allow for the encouragements to be administered based on a dynamic treatment regime, so these results cannot capture adaptive stratified trials with noncompliance, which is the main goal of our work.

An alternative approach to ATE identification in DTRs is to impose restrictions on unobserved random variables that impact treatments and outcomes.
\cite{michael2023instrumental} assume that the causal effect of encouragements on treatments 
is independent of the unobserved confounder $U$: $\E[D_t(1) - D_t(0) \mid U_t, H_t]=\E[D_t(1) - D_t(0) \mid H_t]$, where $H_t \triangleq Z_{<t}, D_{<t}, S_{<t}$ is the history of encouragements, treatments, and states and $D_t(z)$ are the potential outcomes for the different values of the encouragement vector $z$. Note that under this assumption, in the static case, the ATE and not just the LATE can be identified.
They identify dynamic average treatment effects, or equivalently mean counterfactual outcomes under multi-period treatment interventions, under a marginal structural model \citep{robins1986new}, solely using this minimal compliance effect-confounder independence, without further structural assumptions. Moreover, the work of \citep{cui2023instrumental} extends this approach to dynamic settings with survival outcomes and censoring, under a Cox model. However, the goal of our work is to extend the static LATE literature and hence we strive to avoid making assumptions that in the static case, would be strong enough to identify the ATE and thereby we allow for the unobserved confounder to be correlated with the compliance effect. Even our result for always-treat LATEs, solely makes an assumption (see Assumption~\ref{assume:mean_independence} of our work) on cross-period compliance and outcome effect independence properties and not same-period independence properties.

Our approach uses two credible assumptions that have been motivated by and used in practice.
In particular, One Sided Noncompliance is invoked in both static settings (e.g. \cite{frolich2013identification}) and in multiple time period settings (e.g. 
\cite{bijwaard2005correcting}).
This is an interpretable restriction that the analyst or decisionmaker can have confidence about in settings where access to interventions are tightly controlled.
The second condition we introduce for our secondary result on identification of Always-Treat LATEs, is a cross-period compliance-effect conditional mean independence that is satisfied in various settings, including the Staggered Compliance (a special case of which is Staggered Adoption; related to settings studied in \cite{heckman2007dynamic,heckman2016dynamic,ferman2023identifying}),  
where once someone is encouraged and treated, they continue to be treated in all subsequent time periods, whenever they are encouraged to do so. For instance, the work of \citep{ferman2023identifying} can be viewed as always encouraging treatment in the second period and the units perfectly comply with this encouragement.
This assumption is also satisfied when noncompliance occurs only when deciding whether to enroll in an adaptive trial, but once enrolled, the unit deterministically complies with all subsequent recommendations.
Our mean independence assumption is more permissive than either of these settings.

\section{Notation}\label{sec:notation}
All capital letters represent random scalars, vectors, or matrices except for $T$, which is a constant representing the total number of time steps.
We have one observed scalar outcome $Y$ during time period $T$.
The remaining observed random variables represent encouragements $Z$, treatments $D$, and states $S$, which are sequences of $T$ random variables.
In particular, during each time period $t \in [T]$, the encouragement $Z_t$ takes values in $\mathcal{Z}_t = \{0,1\}$, the treatment $D_t$ takes values in $\mathcal{D}_t = \{0,1\}$, and the state generated by the previous time period $S_{t-1}$ takes values in $\mathcal{S}_{t-1} = \mathbb{R}^p$, where $p$ is the dimension of the state space representing observed variables such as static covariates $X$, time varying covariates $X_t$, and short term outcomes $Y_t$.
In general, calligraphic font (e.g. $\mathcal{Y}$) indicates the outcome space of random variables (e.g. $Y$) whose realizations are represented by lowercase letters (e.g. $y$).
$Z_{<t}, Z_{t:t'}, Z_{>t}$ denote sequences of the first $t-1$, the last $T-t$, and the middle $t'-t-1$ encouragements, respectively.
Subcomponents of all vectors are denoted analogously.
$\1, \0$ represent $T$-dimensional realizations of ones and zeroes, respectively.
$\prec, \succ, \preceq, \succeq$ represent entrywise inequalities.

\section{Two Time Period Setting}

\begin{figure}[t]
\centering
\begin{tikzpicture}[
  -{Latex[length=3mm,width=2mm]}, thick, node distance=.75cm,
  thick,
  every node/.style={scale=.77},
  unobserved/.style={
    circle,
      draw=black!60,
      fill=white!60,
      very thick,
      minimum size=10mm,
      font=\sffamily\large\bfseries
  },
  observed/.style={
    circle,
    draw=black!60,
    fill=gray!25,
    very thick,
    minimum size=10mm,
    font=\sffamily\large\bfseries
  },
  placeholder/.style={
    circle,
    draw=white!60,
    fill=white!25,
    very thick,
    minimum size=10mm,
    font=\sffamily\large\bfseries
  },
  box/.style = {draw,very thick,black,inner sep=10pt,rounded corners=5pt, inner sep=1cm}] 
]

\node[observed]     (Z1)                      {$Z_{1}$};
\node[placeholder]  (P1)   [right=of Z1]      {};
\node[placeholder]  (P2)   [below=of P1]      {};
\node[observed]     (Z2)   [right=1.5 of P1]  {$Z_2$};

\node[observed]     (D1)   [below=of Z1]      {$D_1$};
\node[observed]     (D2)   [below=of Z2]      {$D_2$};

\node[observed]     (S1)   [below=of P2]      {$S_1$};

\node[placeholder]  (P0)   [left=of S1]       {};
\node[unobserved]   (C1)   [left=of P0]       {$U$};
\node[observed]     (S0)   [above=of C1]       {$S_0$};
\node[placeholder]  (P4a)  [above=.7 of S0]   {};
\node[placeholder]  (P4b)  [above=1.4 of S0]  {};
\node[placeholder]  (P4c)  [above=2 of S0]    {};

\node[placeholder]  (P3)   [right=of D2]      {};
\node[observed]     (Y)    [below=of P3]      {$Y$};

\node[box,rotate fit=0,fit=(Z1)(P1)(P2)(Z2)(S1)(Y)] (plate1) {};

\path[every node/.style={font=\sffamily\small}]
    (S0) edge[solid]node[] {} (plate1)
    (Z1) edge[solid] node[] {} (D1)
    (Z1) edge[solid] node[] {} (Z2)
    (Z1) edge[solid] node[] {} (D2)

    (Z2) edge[solid] node[] {} (D2)

    (S1)  edge[solid] node[] {} (Y)
    (S1)  edge[solid] node[] {} (Z2)
    (S1)  edge[solid] node[] {} (D2)

    (D1) edge[solid] node[] {} (Z2)
    (D1) edge[solid] node[] {} (D2)
    (D1) edge[solid] node[] {} (S1)
    (D1) edge[solid] node[] {} (Y)
    
    (D2) edge[solid] node[] {} (Y)

;

\path[every path/.style={font=\sffamily\small,draw=orange}]
    (C1) edge[bend left=15]   node  [left] {} (D1)
    (C1) edge[bend right=25]  node  [left] {} (S1)
    (C1) edge[bend left=15]   node  [left] {} (D2)
    (C1) edge[bend left=25]   node  [left] {} (S0)
    (C1) edge[bend right=25]  node  [left] {} (Y)
    
;
\end{tikzpicture}
~~
\begin{tikzpicture}[
  -{Latex[length=3mm,width=2mm]}, thick, node distance=.75cm,
  thick,
  every node/.style={scale=.765},
  unobserved/.style={
    circle,
      draw=black!60,
      fill=white!60,
      very thick,
      minimum size=10mm,
      font=\sffamily\large\bfseries
  },
  observed/.style={
    circle,
    draw=black!60,
    fill=gray!25,
    very thick,
    minimum size=10mm,
    font=\sffamily\large\bfseries
  },
  placeholder/.style={
    circle,
    draw=white!60,
    fill=white!25,
    very thick,
    minimum size=10mm,
    font=\sffamily\large\bfseries
  },
  box/.style = {draw,very thick,black,inner sep=10pt,rounded corners=5pt, inner sep=1cm}] 
]

\node[observed]     (Z1)                      {$Z_{1}$};
\node[placeholder]  (P1)   [right=of Z1]      {};
\node[placeholder]  (P2)   [below=of P1]      {};
\node[observed]     (Z2)   [right=1.5 of P1]  {$Z_2$};

\node[observed]     (D1)   [below=of Z1]      {$D_1$};
\node[observed]     (D2)   [below=of Z2]      {$D_2$};

\node[observed]     (S1)   [below=of P2]      {$S_1$};

\node[placeholder]  (P0)   [left=of S1]       {};
\node[unobserved]   (C2)   [below left=1.2 of S1]       {$U_1$};
\node[unobserved]   (C3)   [below right=1.2 of S1]       {$U_2$};
\node[unobserved]   (C1)   [left=1.2 of C2]       {$U_0$};
\node[placeholder]     (P5)   [above=2.2 of C1]       {};
\node[observed]     (S0)   [right=-.3 of P5]       {$S_0$};
\node[placeholder]  (P4a)  [above=.7 of S0]   {};
\node[placeholder]  (P4b)  [above=1.4 of S0]  {};
\node[placeholder]  (P4c)  [above=2 of S0]    {};

\node[placeholder]  (P3)   [right=of D2]      {};
\node[observed]     (Y)    [below=of P3]      {$Y$};

\node[box,rotate fit=0,fit=(Z1)(P1)(P2)(Z2)(S1)(Y)] (plate1) {};

\path[every node/.style={font=\sffamily\small}]
    (S0) edge[solid]node[] {} (plate1)
    (Z1) edge[solid] node[] {} (D1)
    (Z1) edge[solid] node[] {} (Z2)
    (Z1) edge[solid] node[] {} (D2)

    (Z2) edge[solid] node[] {} (D2)

    (S1)  edge[solid] node[] {} (Y)
    (S1)  edge[solid] node[] {} (Z2)
    (S1)  edge[solid] node[] {} (D2)

    (D1) edge[solid] node[] {} (Z2)
    (D1) edge[solid] node[] {} (D2)
    (D1) edge[solid] node[] {} (S1)
    (D1) edge[solid] node[] {} (Y)
    
    (D2) edge[solid] node[] {} (Y)

;

\path[every path/.style={font=\sffamily\small,draw=orange}]
    (C1) edge[bend left=25]   node  [left] {} (S0)
    (C1) edge[]   node  [left] {} (C2)
    (S0) edge[bend right=15]   node  [left] {} (C2)
    
    (C2) edge[bend left=15]   node  [left] {} (D1)
    (C2) edge[]  node  [left] {} (S1)
    (C2) edge[]  node  [left] {} (C3)
    (S1) edge[]  node  [left] {} (C3)

    (C3) edge[bend left=15]   node  [left] {} (D2)
    (C3) edge[]  node  [left] {} (Y)    
;
\end{tikzpicture}
\caption{
  Directed Acyclic Graphs that adhere to the exclusion restrictions in Assumption \ref{assume:2_consistency}.
  Encouragements $Z$, treatments $D$, states $S$, and long term outcomes $Y$ are represented by gray circles to indicate that they are observed variables.
  $U$'s are unobserved confounders, and are illustrated using a white circle to indicate that they are unobserved.
  Edges (arrows) pointing from one random variable to another indicates that the latter is allowed to be a function of the former.
  The node $S_0$ should be thought of as sending edges to all observed random variables.
}\label{fig:dag_1}
\end{figure}

In the two time period setting, we observe sequences of random variables $(S_0, Z_1, D_1, S_1, Z_2, D_2, Y)$ for each unit.
We assume that these variables are sampled independently across units.
$S_0$ corresponds to a pre-experiment state of the unit.
At each time period $t>0$, an encouragement $Z_t$ is first generated based on prior encouragements, treatments, and states.
Next, a treatment $D_t$ is generated as a function of the encouragement $Z_t$ and all previously observed random variables, as well as potentially unobserved confounding factors or processes.
Finally, states $S_t$ and long term outcome $Y$ are generated as a function of $D_t$ and all prior observed and unobserved random variables, \textit{but not directly as a function of prior encouragements} $Z_{\leq t}$.
States $S_0, S_1$ are sets of observed random variables that each may contain short-term outcomes, time-varying covariates, covariates prior to the experiment.
As discussed in Section \ref{sec:notation}, we write the realized states as $s=(s_0, s_1) \in \mathcal{S}$, encouragements as $z=(z_1, z_2), z'=(z_1', z_2') \in \mathcal{Z}$, treatments as $d=(d_1, d_2) \in \mathcal{D}$, and long term outcomes as $y \in \mathcal{Y}$.
Figure \ref{fig:dag_1} illustrates examples of Data Generating Processes (DGPs) that adhere to this description.

The typical approach in structural causal modeling and nonparametric structural equation literature is to incorporate the exact structure of the unobserved confounding process in the definition of the primitive counterfactual or potential outcomes. We will develop identification results that are agnostic to the unobserved confounding structure (e.g. they would apply either to both DGPs in Figure~\ref{fig:dag_1}, as well as potentially other confounding structures). To enable this, we will define the primitive counterfactual processes only as a function of observed variables, respecting the exclusion restrictions that we outlined in the previous paragraph. We will then present a set of conditional independence assumptions that these processes need to satisfy for our identification results. These assumptions would hold for instance in either of the DGPs in Figure~\ref{fig:dag_1}.

Our Assumption~\ref{assume:2_consistency} formalizes this discussion and imposes restrictions that extend the Consistency properties and the Exclusion Restriction in \citep{imbens1994identification} to the sequential setting.
\begin{assumption}[Primitive Counterfactual Processes, Consistency and Exclusion Restrictions]\label{assume:scm}\label{assume:2_consistency} We assume the existence of a set of primitive counterfactual random processes that satisfy the exclusion restrictions encoded in the graphs of Figure~\ref{fig:dag_1}. Namely, 
\begin{align}\label{eqn:counterfactuals}
\left\{S_0(.), Z_1(s_0), D_1(z_1, s_0), S_1(d_1, s_0), Z_2(z_1, d_1, s), D_2(z, d_1, s), Y(d, s)\mid s\in \mathcal{S}, z\in\mathcal{Z}, d\in\mathcal{D}\right\}.
\end{align}
The observed variables $(S_0, Z_1, D_1, S_1, Z_2, D_2, Y)$ are generated by first drawing the counterfactual random processes from some fixed distribution, independently across units, and then recursively evaluating them: 
\begin{align*}
S_0 :=~& S_0(\cdot)   \\
Z_1 :=~& Z_{1}(S_0)           \\
D_1 :=~& D_{1}(Z_1, S_0)   \\
S_1 :=~& S_{1}(D_1, S_0)   \\
Z_2 :=~& Z_{2}(Z_1, D_1, S)   \\
D_2 :=~& D_{2}(Z, D_1, S)  \\
Y :=~& Y(D, S).
\end{align*}
\end{assumption}

\noindent Since the counterfactual random variables are identically and independently distributed across units, we implicitly assume the no interference assumption (i.e. no network effects), which is also referred to as SUTVA (Stable Unit Treatment Value Assumption).
Given Assumption~\ref{assume:2_consistency}, we can also define intervention counterfactuals as follows. 
\begin{definition}[Intervention Counterfactuals]\label{defn:int-cnt}
For any endogenous variable $V$ and subset of endogenous variables $X$ taking values in the domain ${\cal X}$, we denote with $V(X\to x)$ the counterfactual random outcome of $V$ when we intervene and fix the variables in $X$ to the value $x\in {\cal X}$.
This variable is produced by generating the endogenous variables of the structural causal model recursively, but any time we encounter any variable in $X$ on the right-hand side of the structural equation, we fix its value to the corresponding value in $x\in {\cal X}$, instead of its recursive evaluation (see e.g. \cite[Section~7.3]{chernozhukov2024applied} or \cite{richardson2013single} for more details on such intervention counterfactuals, known as \emph{fix} interventions).
\end{definition}
\noindent For clarity, in Appendix~\ref{app:examples} we provide examples of such intervention counterfactuals that will be used throughout our analysis. Note that, by definition, 
the intervention counterfactuals also satisfy the following consistency property.

\begin{property}[Consistency of Counterfactual Variables] For any subset of intervening variables $X$ and any observed variable $V$, we have the consistency property $V=V(X)$.
\end{property}

\noindent
Throughout the paper we will utilize the following shorthand notation for the two specific intervention counterfactuals that will be central to our analysis:
\begin{definition}[Shorthand Notation of Central Intervention Counterfactuals]\label{defn:central-cnt}
We will use the shorthand notation $Y(d)$ to denote the counterfactual outcome generated by intervening on the treatment variables $D$ and fixing them to value $d\in \mathcal{D}$ and the shorthand notation $D(z)$ to denote the counterfactual treatment vector generated by intervening on the instrument variables $Z$ and fixing them to value $z\in {\cal Z}$, i.e., 
\begin{align}
Y(d)\equiv~& Y(D\to d) & 
D(z)\equiv~& D(Z\to z).
\end{align}
\end{definition}
\noindent We can explicitly write these counterfactuals using the primitive counterfactual processes in Assumption \ref{assume:2_consistency}:
\begin{align*}
    Y(d) =~& Y(d, S_0, S_1(d_1, S_0))
\end{align*}
and similarly for $D(z)$:
\begin{align*}
    D_1(z) =~& D_1(z_1, S_0) & 
    D_2(z) =~& D_2(z, D_1(z), S_0, S_1(D_1(z), S_0)).
\end{align*}

To complete the definition of our setting, we need to impose some assumptions on the distribution of the counterfactual random processes defined in Equation~\eqref{eqn:counterfactuals}.
The typical approach in the nonparametric structural equation model literature is to explicitly incorporate the dependence on the unobserved confounding factors. For instance, the left graph in Figure~\ref{fig:dag_1} explicitly incorporates the variable $U$ to permit dependencies between counterfactual variables defined in Assumption~\ref{assume:scm} for each fixed $s,d,z$. We can then define the augmented counterfactual processes
\begin{align}
    \left\{U(.), S_0(u), Z_1(s_0), D_1(z_1, s_0, u), S_1(d_1, s_0, u), Z_2(z_1, d_1, s), D_2(z, d_1, s, u), Y(d, s, u)\right\}, \label{eqn:aug_counterfactuals}
\end{align}
with $s\in {\cal S}, z\in {\cal Z}, d\in {\cal D}, u\in {\cal U}$. 
Then, assume that these counterfactuals stem from a structural causal model with independent errors \citep{pearl1995causal}. That is, for every variable $V$ with parent variable values $pa_V$:
\begin{align}
    V(pa_V) = f_V(pa_V, \epsilon_V)
\end{align}
where the error variables $\epsilon_V$ are drawn from an exogenous fixed distribution and are mutually independent across the variables.
Alternatively, the more relaxed FFRCISTG approach \citep{robins1986new}, solely imposes that for every fixed $s, z, d, u$, the variables in Equation~\eqref{eqn:aug_counterfactuals}
are mutually independent. The latter only imposes independence statements that are all testable via some hypothetical experiment, known as single-world intervention properties (see e.g. \cite{richardson2013single} for a discussion). 

We will not impose either of these but instead define the set of minimal properties that we require for our counterfactual processes, as defined in Assumption~\ref{assume:scm}, for our identification argument. Our identification result will be applicable to any structural causal model that implies these properties. For instance, these properties are satisfied for the structural causal model associated with the graph in Figure~\ref{fig:dag_1} under both the independent errors or the FFRCISTG assumption.

\noindent 
Our first assumption states that the counterfactual processes satisfy the following sequential notion of conditional ignorability:
\begin{assumption}[Sequential Ignorability of Instruments]\label{assume:2_ignorability}
For every $z\in {\cal Z}$:
\begin{align*}
    \{Y(D(z)), D(z)\} \cindep& Z_1 \mid S_0 & 
    \{Y(D_1, D_2(Z_1, z_2)), D_2(Z_1, z_2)\} \cindep& Z_2 \mid S, D_1, Z_1.
\end{align*}
\end{assumption}
\noindent 
Informally, this means that the instrument is conditionally independent of future potential outcomes and treatments, that would have been generated under instrument interventions, given the observed history of instruments, treatments, and states.
This assumption is the sequential extension of the standard ignorability (i.e. unconfoundedness) condition on the instruments. As we show in Appendix~\ref{app:ignorability}, this assumption is implied by the structural causal model depicted in Figure~\ref{fig:dag_1}, even under the relaxed FFRCISTG model.

Apart from the structural causal model assumption, we will also need to the following extra assumptions that are extensions of typical assumptions in the static setting. The sequential overlap assumption ensures that all instrument variable values are probable conditional on past observations and the relevance assumption ensures that the instruments have an effect on the treatment conditional on the history of past observations.
\begin{assumption}[Sequential Overlap]\label{assume:2_overlap}
$\Pr\{Z_1 = z_1 \mid S_0\} > 0$ and $\Pr\{Z_2=z_2 \mid Z_1, D_1, S_0, S_1\} > 0$, almost surely, for all $z\in {\cal Z}$.
\end{assumption}

\begin{assumption}[Sequential Relevance]\label{assume:2_relevance}
$\mathbb{E}[D_1(1) - D_1(0) \mid S_0] > 0$ and $\mathbb{E}[D_2(Z_1, 1) - D_2(Z_1, 0) \mid Z_1, D_1, S] > 0$, almost surely.
\end{assumption}

The final assumption that we will be make is is the analogue of the one sided noncompliance property of the static setting.
Informally, this assumption requires for every time period that individuals can choose to be treated only if they were encouraged to do so, but otherwise the treatment is not available to them.
\begin{assumption}[One Sided Noncompliance]\label{assume:2_one_sided}
$\Pr\{D_1(z_1) \leq z_1, D_2(z) \leq z_2\}=1$ for all $z\in {\cal Z}$.
\end{assumption}

\subsection{Estimands}

Our goal is to identify several types of Dynamic Local Average Treatment Effects (LATEs).
These estimands quantify the effect of treatment sequences $d$ on an outcome $Y$ relative to never treating for different subpopulations.
Since the compliance event $D(0,0)=(0,0)$ happens almost surely under One Sided Noncompliance, we may omit it from conditioning events characterizing the subpopulations.

One set of estimands of interest are effects of treatment sequences for subpopulations who comply with encouragement sequences. 
Formally, we define $\theta(z,d)$ to be the average treatment effect of treatment vector $d$ among the subpopulation who chooses treatments $d$ when given encouragements $z$.

\begin{definition}[Dynamic LATE]\label{def:dyn_late}
For every $z,d \in \mathcal{Z} \times \mathcal{D}$ we define the Dynamic LATE as the following conditional expectation (and its short-hand notation when $z=d$): 
\begin{align}
    \theta(z,d) =~& \E[Y(d) - Y(0,0)\mid D(z)=d] &
    \tau_d     =~& \theta(d,d).
\end{align}
\end{definition}

\noindent As part of our analysis, we will also be highlighting a quantity that corresponds to a mixture of dynamic LATEs. This quantity will be useful as a stepping stone for other LATE quantities and could potentially be of interest in its own write. This Dynamic Mixture LATE quantity corresponds to a mixture of several Dynamic LATEs for several treatment intervention vectors. In particular, it is the average effect of the treatment interventions chosen by the sub-population of units that choose to receive treatment at some period, i.e. comply with at least one encouragement. 
\noindent

\begin{definition}[Dynamic Mixture LATE]
For every $z \in \cal{Z}$ such that $z \neq (0,0)$, we define 
\begin{align}
    \beta_z = \E\left[Y(D(z)) - Y(0,0) \mid D(z) \neq (0,0)\right]
\end{align}
\end{definition}

\noindent
 In addition to the unconditional LATEs, we also aim to identify heterogeneous Dynamic LATEs. We define heterogeneous effects $\theta(z,d, S_0), \beta_z(S_0)$ analogous to the estimands above. Informally, we are interested in the effects defined above conditional on pre-experiment characteristics $S_0$ of each unit.

\begin{definition}[Heterogeneous Dynamic LATE]\label{def:hetero_dyn_late}
For every $z,d \in \mathcal{Z} \times \mathcal{D}$ we define the Heterogeneous Dynamic LATE as the following conditional expectation (and its short-hand notation when $z=d$): 
\begin{align}
    \theta(z,d, S_0) =~& \E[Y(d) - Y(0,0)\mid D(z)=d, S_0], &
    \tau_d(S_0)     =~& \theta(d,d, S_0)
\end{align}
\end{definition}

\begin{definition}[Heterogeneous Dynamic Mixture LATE]
\begin{align}
    \beta_z(S_0) = \E\left[Y(D(z)) - Y(0,0) \mid D(z) \neq (0,0), S_0\right]
\end{align}
\end{definition}

\section{Identification of Dynamic Local Average Treatment Effects}

We are now ready to state our first main result on the identification of a subset of the Dynamic LATEs in the two period case. In particular, we show that Dynamic LATEs $\tau_d$ with $d\in \{(0,1), (1,0)\}$ and Dynamic Mixture LATEs for any $z\in {\cal Z}^2$ and their heterogeneous counterparts are identifiable. Thus, we can identify the effect of a treatment policy that treats only at either period one or period zero. This can be viewed as a \emph{When-to-Treat LATE} as we are asking counterfactual questions where we treat only once and the only things that vary are the period of treatment and subpopulation. Moreover, we are comparing these counterfactual outcomes to the baseline outcome under no-treatment. Each of these treatment policy contrasts is measured on the sub-population that when encouraged to take the treatment only at the corresponding period, will comply with the recommendation.

Crucially, this main result excludes the Dynamic LATE with $z=d=(1,1)$, which we will refer to as \emph{Always-Treat LATE}. This LATE is inherently not identifiable solely on the basis of the assumptions we have made thus far. Section \ref{sec:tightness} proves this impossibility and Section \ref{sec:always-treat} introduces a condition to enable its identification. However, we note that our main result in this section also provides identification for the dynamic LATE mixture $\beta_z$, which contains always-treat contrasts as part of the mixture. This mixture can be identified without further assumptions.

\begin{theorem}[Identification of Dynamic LATEs]\label{thm:2_identification}
Assume
 \ref{assume:2_consistency}, 
 \ref{assume:2_ignorability}, \ref{assume:2_overlap}, \ref{assume:2_relevance}, \ref{assume:2_one_sided}.
The following equalities hold.

\noindent For $z=d\in\{(0,1), (1,0)\}$, 
\begin{align*}
\tau_d
 &=\frac{
    \mathbb{E}[Y(D(z)) ]
    - \mathbb{E}[Y(D(0,0)) ]
  }{\Pr\{ D(z)=d \}
  }.
\end{align*}
For all $z \in \mathcal{Z}$ such that $z \neq(0,0)$, 
\begin{align*}
\beta_z
 &=\frac{
    \mathbb{E}[ Y(D(z)) ]
    - \mathbb{E}[ Y(D(0,0)) ]
  }{1 - \Pr\{ D(z)=(0,0) \}
  }
\end{align*}
For every $d,z \in \mathcal{D} \times \mathcal{Z}$ such that $d \preceq z$, the counterfactual averages above are identified:
\begin{align}
\E[Y(D(z))]
=~& \E[\E[\E[Y \mid S, D_1, Z=z] \mid S_0, Z_1=z_1]]\\
\Pr(D(z)=d)
=~& \E[\E[\Pr(D=d\mid S, D_1, Z=z)\mid S_0, Z_1=z_1]].\label{eqn:id_d_cnt}
\end{align}
\end{theorem}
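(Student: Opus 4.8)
The plan is to split the theorem into two logically independent tasks. The first is to reduce the LATE-type estimands $\tau_d$ and $\beta_z$ to contrasts of the instrument-intervention counterfactual means $\E[Y(D(z))]$ appearing in the last two displays. The second is to show that each such mean, and each compliance probability $\Pr(D(z)=d)$, is identified from observed data by the sequential g-formula. Throughout, Sequential Relevance (Assumption \ref{assume:2_relevance}) is what guarantees the denominators $\Pr\{D(z)=d\}$ and $1-\Pr\{D(z)=(0,0)\}$ are strictly positive, so the ratios defining the LATEs are well posed.

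For the first task I would lean entirely on One Sided Noncompliance (Assumption \ref{assume:2_one_sided}) together with the consistency property $V=V(X)$. One Sided Noncompliance forces $D(0,0)=(0,0)$ almost surely, so $Y(D(0,0))=Y(0,0)$ and the second term of every numerator collapses to the common baseline. For $z=d\in\{(0,1),(1,0)\}$ the same assumption restricts $D(z)$ to the two values $\{(0,0),d\}$, so $\{D(z)=d\}$ and $\{D(z)=(0,0)\}$ partition the sample space. I would decompose both $\E[Y(D(z))]$ and $\E[Y(0,0)]$ over this partition, apply consistency $Y(D(z))=Y(0,0)$ on $\{D(z)=(0,0)\}$ to cancel the baseline pieces there, and be left with $\E[(Y(d)-Y(0,0))\,\I\{D(z)=d\}]$, which by definition equals $\theta(z,d)\Pr\{D(z)=d\}=\tau_d\Pr\{D(z)=d\}$. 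The $\beta_z$ reduction is identical but partitions on $\{D(z)=(0,0)\}$ versus its complement and uses consistency $Y(D(z))=Y(d)$ on each atom of $D(z)$, producing $\beta_z\bigl(1-\Pr\{D(z)=(0,0)\}\bigr)$ in the numerator.

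For the second task the key first observation is that the Exclusion Restriction encoded in Assumption \ref{assume:2_consistency} (states and outcomes are not direct functions of the instruments) gives $Y(D(z))=Y(Z\to z)$ and $\I\{D(z)=d\}=\I\{D(Z\to z)=d\}$; that is, the reduced-form object is exactly the outcome under a fix-intervention on the instrument sequence. This recasts the problem as a standard two-period dynamic-treatment-regime identification in which the instrument plays the role of the sequentially assigned action, so I would establish the g-formula by peeling one period at a time, working from the inside out. First condition on the full second-period history $(S,D_1,Z_1=z_1)$ and use the second statement of Sequential Ignorability (Assumption \ref{assume:2_ignorability}) to replace the $Z_2$-intervention counterfactual by the observed regression $\E[Y\mid S,D_1,Z=z]$ on the event $Z_2=z_2$; then integrate over $(D_1,S_1)$ given $(S_0,Z_1=z_1)$ and use the first statement to replace the $Z_1$-intervention by conditioning on $Z_1=z_1$; finally integrate over $S_0$. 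Sequential Overlap (Assumption \ref{assume:2_overlap}) ensures each conditioning event has positive probability so the nested regressions are well defined, and the constraint $d\preceq z$ together with One Sided Noncompliance keeps the events $\{D_1=d_1,Z_1=z_1\}$ and $\{D=d,Z=z\}$ non-null. The identity for $\Pr(D(z)=d)$ follows from the identical peeling with the bounded outcome $\I\{D=d\}$ in place of $Y$, since $D(z)$ is already an instrument-fix counterfactual and is covered by the same ignorability statements.

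The step I expect to be the main obstacle is the bookkeeping in the inner peel, because Sequential Ignorability is stated with a specific nested counterfactual $Y(D_1,D_2(Z_1,z_2))$ in which only $Z_2$ is fixed while $Z_1$ and $D_1$ remain natural. Matching this to the fully intervened $Y(D(z))$ requires repeatedly invoking consistency --- for instance, that on $\{Z_1=z_1\}$ the observed $D_1$ coincides with $D_1(z_1)$ --- and checking that the exclusion restriction lets me commute the $Z_1\to z_1$ fix past the state map $S_1(d_1,s_0)$ and the outcome map $Y(d,s)$ without altering them. Keeping these counterfactual indices aligned across the two ignorability statements, rather than the probabilistic manipulations themselves, is where the real care is needed.
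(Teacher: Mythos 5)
Your proposal is correct and follows essentially the same route as the paper: it reduces $\tau_d$ and $\beta_z$ to the counterfactual-mean contrast using One Sided Noncompliance (which gives $D(0,0)=(0,0)$ a.s.\ and makes $\{D(z)=d\}$ the complement of $\{D(z)=(0,0)\}$ when $z=d\in\{(0,1),(1,0)\}$, so the integrand vanishes off the complier event), and then identifies $\E[Y(D(z))]$ and $\Pr(D(z)=d)$ via the exclusion-restriction identity $Y(D(z))=Y(Z\to z)$ followed by the sequential g-formula peeling with ignorability, overlap, and consistency, exactly as in the paper's Lemmas on expected outcomes under encouragements and the g-formula for instrument interventions. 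The bookkeeping concern you flag at the end is precisely the step the paper handles in its g-formula lemma by alternating the tower rule, ignorability, and consistency one period at a time.
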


\noindent The same identification results extend to Heterogeneous Dynamic LATEs, i.e. Dynamic LATEs as a function of the initial state $S_0$ of each unit (i.e. unit characteristics at initiation).

\begin{theorem}[Identification of Heterogeneous Dynamic When-to-Treat LATEs]\label{thm:2_identification_hetero}
Assume
 \ref{assume:2_consistency}, 
 \ref{assume:2_ignorability}, \ref{assume:2_overlap}, \ref{assume:2_relevance}, \ref{assume:2_one_sided}.

\noindent
For $z=d\in\{(0,1), (1,0)\}$,
\begin{align*}
\tau_d(S_0)
 =\frac{
    \mathbb{E}[Y(D(z)) \mid S_0]
    - \mathbb{E}[Y(D(0,0)) \mid S_0]
  }{\Pr\{ D(z)=d \mid S_0\}
  }.
\end{align*}
For all $z \in \mathcal{Z}$ such that $z \neq(0,0)$, 
\begin{align*}
\beta_z(S_0)
 =\frac{
    \mathbb{E}[ Y(D(z)) \mid S_0]
    - \mathbb{E}[ Y(D(0,0)) \mid S_0]
  }{1 - \Pr\{ D(z)=(0,0) \mid S_0\}
  }.
\end{align*}
For every $d,z \in \mathcal{D} \times \mathcal{Z}$ such that $d \preceq z$, the counterfactual averages above are identified: 
\begin{align}
\E[Y(D(z))\mid S_0]
=~&\E[\E[Y \mid S, D_1, Z=z] \mid S_0, Z_1=z_1]\\
\Pr(D(z)=d\mid S_0)
=~& \E[\Pr(D=d\mid S, D_1, Z=z)\mid S_0, Z_1=z_1].
\end{align}
\end{theorem}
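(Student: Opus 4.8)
The statement splits into two logically independent pieces, and I would prove them separately before combining. The first piece is the sequential (g-formula style) identification of the counterfactual quantities $\E[Y(D(z))\mid S_0]$ and $\Pr(D(z)=d\mid S_0)$ for $d\preceq z$; the second is the conversion of these into the LATE ratios $\tau_d(S_0)$ and $\beta_z(S_0)$ using One Sided Noncompliance. The heterogeneous statement is really the conditional-on-$S_0$ version of the preceding theorem, so I would carry $S_0$ through the whole argument and note at the end that the unconditional identities of Theorem~\ref{thm:2_identification} follow by integrating the numerator and denominator separately over the law of $S_0$.

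For the g-formula piece I would peel off the two periods using the two statements in Assumption~\ref{assume:2_ignorability}. Starting from period two, the ignorability $\{Y(D_1, D_2(Z_1,z_2)), D_2(Z_1,z_2)\}\cindep Z_2\mid S, D_1, Z_1$ says that inserting the event $Z_2=z_2$ into the conditioning leaves the conditional law unchanged; combined with the consistency property (on $\{Z_2=z_2\}$ one has $D_2(Z_1,z_2)=D_2$ and $Y(D_1,D_2(Z_1,z_2))=Y$) this yields
\[
\E[Y(D_1, D_2(Z_1,z_2))\mid S, D_1, Z_1] = \E[Y\mid S, D_1, Z_1, Z_2=z_2].
\]
Moving to period one, the ignorability $\{Y(D(z)), D(z)\}\cindep Z_1\mid S_0$ lets me insert $Z_1=z_1$ into the conditioning of $\E[Y(D(z))\mid S_0]$; on $\{Z_1=z_1\}$ consistency identifies the full instrument-intervention counterfactual $Y(D(z))$ with the partial one $Y(D_1, D_2(Z_1,z_2))$, and an iterated expectation over $(D_1,S_1)$ given $(S_0, Z_1=z_1)$ followed by substitution of the period-two identity produces exactly $\E[\E[Y\mid S, D_1, Z=z]\mid S_0, Z_1=z_1]$. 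Sequential Overlap (Assumption~\ref{assume:2_overlap}) guarantees every conditioning event has positive probability, so each conditional expectation is well defined. Replacing $Y$ by the indicator $\mathbb{1}\{D=d\}$ and using that the same two ignorability statements already include the treatment counterfactuals $D(z)$ and $D_2(Z_1,z_2)$ gives the identical derivation for $\Pr(D(z)=d\mid S_0)$.

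For the ratio piece I would use One Sided Noncompliance (Assumption~\ref{assume:2_one_sided}) to pin down the support of $D(z)$. Since $D(0,0)=(0,0)$ almost surely, the baseline term reduces to $\E[Y(D(0,0))\mid S_0]=\E[Y(0,0)\mid S_0]$. For $z=d\in\{(0,1),(1,0)\}$, OSN forces $D(z)\preceq z$, so $D(z)$ takes only the two values $(0,0)$ and $d$; decomposing $\E[Y(D(z))\mid S_0]$ over these two events, applying consistency ($Y(D(z))=Y(d)$ on $\{D(z)=d\}$ and $Y(D(z))=Y(0,0)$ on $\{D(z)=(0,0)\}$), and subtracting the baseline collapses the numerator to $\E[(Y(d)-Y(0,0))\mathbb{1}\{D(z)=d\}\mid S_0]=\tau_d(S_0)\,\Pr(D(z)=d\mid S_0)$, and dividing yields the claimed ratio. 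The denominator is strictly positive by Sequential Relevance (Assumption~\ref{assume:2_relevance}), using the one-sided reductions $D_1(0)=0$ and $D_2(Z_1,0)=0$ to rewrite the complier probability as the relevant encouragement response. For $\beta_z(S_0)$ no two-point support is needed: decomposing over the full support of $D(z)$ and again using $D(0,0)=(0,0)$ turns the numerator into $\E[(Y(D(z))-Y(0,0))\mathbb{1}\{D(z)\neq(0,0)\}\mid S_0]=\beta_z(S_0)\,(1-\Pr(D(z)=(0,0)\mid S_0))$, which explains why $\beta_z$ is identifiable even at $z=(1,1)$ while $\tau_{(1,1)}$ is not.

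The ratio algebra is mechanical once OSN supplies the support structure; the delicate step, and the one I expect to be the main obstacle, is the sequential g-formula, specifically the correct bookkeeping of the nested/partial counterfactual $Y(D_1, D_2(Z_1,z_2))$ and the two invocations of consistency on the events $\{Z_2=z_2\}$ and $\{Z_1=z_1\}$. I would verify both consistency collapses directly from the recursive evaluation in Assumption~\ref{assume:2_consistency}, taking care that on $\{Z_1=z_1\}$ the natural downstream values $(D_1,S_1)$ coincide with their $z_1$-interventional versions, so that the partial and full instrument interventions genuinely agree there. I expect the constraint $d\preceq z$ to be essential at exactly this point and in the support argument, and I would confirm that Relevance delivers strict positivity of the conditional complier probabilities through the OSN reductions noted above.
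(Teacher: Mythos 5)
Your proposal is correct and follows essentially the same route as the paper: the ratio identities are obtained from One Sided Noncompliance (two-point support of $D(z)$ for $z=d\in\{(0,1),(1,0)\}$, the vanishing of $Y(D(z))-Y(0,0)$ on $\{D(z)=(0,0)\}$, and $D(0,0)=(0,0)$ a.s.), and the counterfactual means are identified by the same two-step sequential g-formula combining the two ignorability statements with consistency and the tower rule, exactly as in the paper's Lemma on expected outcomes under encouragements. The only cosmetic difference is that the paper first reduces $\tau_d(S_0)$ to $\beta_z(S_0)$ by noting the conditioning events coincide, whereas you decompose $\E[Y(D(z))\mid S_0]$ over the support directly; the underlying identities are the same.
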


\paragraph{Towards Always-Treat LATE}
The mixture LATE quantity $\beta_z$ does not correspond to a dynamic LATE, since it entails a mixture of outcome contrasts for various treatment vectors.
However, as we show next it corresponds to a natural mixture of Dynamic LATEs and the weights in this mixture are identifiable from the data. In particular, we show that $\beta_{11}$ is the convex combination of Dynamic LATEs for different subpopulations, where the weights are the probabilities of the subpopulations conditional on \emph{not} being a Never Taker under $Z=z$.

\begin{proposition}\label{prop:mixture}
For every $z \in \mathcal{Z}$ such that $z \neq(0,0)$, 
\begin{align*}
\beta_z
=~&\sum_{d \preceq z:~d\neq (0,0)} \theta(z,d) \cdot w(z,d) &
w(z,d) :=~&  \frac{\Pr(D(z)=d)}{\sum\limits_{d' \preceq z:~d'\neq (0,0)}\Pr(D(z)=d')}, 
\end{align*}
where the probabilities in mixture weights $w(z,d)$ are identified by Equation~\eqref{eqn:id_d_cnt}.
\end{proposition}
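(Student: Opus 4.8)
The plan is to expand $\beta_z$ via the law of total expectation by partitioning on the realized value of the counterfactual treatment vector $D(z)$, and then to recognize each resulting term as a Dynamic LATE $\theta(z,d)$ weighted by a conditional subpopulation probability. The whole argument is a bookkeeping computation combining One Sided Noncompliance, the tower property, and consistency.

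First I would use One Sided Noncompliance (Assumption~\ref{assume:2_one_sided}) to pin down the support of $D(z)$: since $\Pr\{D_1(z_1)\leq z_1, D_2(z)\leq z_2\}=1$, the vector $D(z)$ takes values only in $\{d : d\preceq z\}$ almost surely. Hence the conditioning event $\{D(z)\neq(0,0)\}$ decomposes into the finitely many disjoint events $\{D(z)=d\}$ for $d\preceq z$ with $d\neq(0,0)$, and in particular $\Pr\{D(z)\neq(0,0)\}=\sum_{d\preceq z:~d\neq(0,0)}\Pr\{D(z)=d\}$, which is exactly the denominator appearing in the weights $w(z,d)$.

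Next, applying the tower property to the definition $\beta_z=\E[Y(D(z))-Y(0,0)\mid D(z)\neq(0,0)]$ over this partition yields
\begin{align*}
\beta_z=\sum_{d\preceq z:~d\neq(0,0)}\E[Y(D(z))-Y(0,0)\mid D(z)=d]\cdot\frac{\Pr\{D(z)=d\}}{\Pr\{D(z)\neq(0,0)\}},
\end{align*}
where I have used that for $d\neq(0,0)$ the event $\{D(z)=d\}$ already implies $\{D(z)\neq(0,0)\}$, so the extra conditioning is redundant. The key substitution is then to invoke the Consistency of Counterfactual Variables property stated above: on the event $\{D(z)=d\}$ the chosen treatment vector equals $d$, so $Y(D(z))=Y(d)$ pointwise on that event, and therefore $\E[Y(D(z))-Y(0,0)\mid D(z)=d]=\E[Y(d)-Y(0,0)\mid D(z)=d]=\theta(z,d)$ by Definition~\ref{def:dyn_late}. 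Recognizing the remaining ratios as the stated weights $w(z,d)$ completes the identity, and since these weights are nonnegative and sum to one, $\beta_z$ is genuinely a convex combination.

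Finally I would note that the probabilities $\Pr\{D(z)=d\}$ appearing in the weights are identified by Equation~\eqref{eqn:id_d_cnt} of Theorem~\ref{thm:2_identification}, since each $d\preceq z$, which is what the claim asserts. I do not expect a serious obstacle here; the only points requiring care are the consistency substitution $Y(D(z))=Y(d)$ on $\{D(z)=d\}$, which must be justified at the level of the primitive counterfactual processes rather than treated as a definitional tautology, and verifying that the partition of $\{D(z)\neq(0,0)\}$ is exhaustive, which is precisely what One Sided Noncompliance guarantees.
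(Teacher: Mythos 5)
Your proposal is correct and follows essentially the same route as the paper's proof: a law-of-total-expectation decomposition over the events $\{D(z)=d\}$ for $d\preceq z$, $d\neq(0,0)$, the substitution $Y(D(z))=Y(d)$ on each such event, and Bayes' rule to express the conditional probabilities as the weights $w(z,d)$. Your explicit appeal to One Sided Noncompliance to verify that this partition exhausts $\{D(z)\neq(0,0)\}$ (so the denominator of $w(z,d)$ equals $\Pr\{D(z)\neq(0,0)\}$) is a point the paper leaves implicit, and is a welcome clarification.
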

\noindent
Notice that when $z\in \{(0,1), (1,0)\}$, the set $\{d \preceq z:d\neq (0,0)\}$ is a singleton that contains only $d=z$, so $\theta(z,d)=\theta(d,d)=\tau_d$ and $w(z,d)=1$, so $\beta_z=\tau_d$.
The estimand $\beta_{1,1}$ is a mixture of these treatment contrasts for appropriately defined compliance populations, i.e., for some identifiable mixture weights $w_{11}, w_{10}, w_{01}$ that lie in the simplex:
\begin{align}
    \beta_{1,1} =~& \E[Y(1,1)-Y(0,0)\mid D(1,1)=(1,1)]\, w_{11}\\
    ~& + \E[Y(1,0)-Y(0,0)\mid D(1,1)=(1,0)]\, w_{10} \\
    ~& + \E[Y(0,1)-Y(0,0)\mid D(1,1)=(0,1)]\, w_{01}
\end{align}
with the weights being identifiable from the data. This identifiable mixture quantity could potentially also be policy relevant, since it contains elements of the $Y(1,1)-Y(0,0)$ contrast. If for instance, we do not expect large effect heterogeneities of the When-to-Treat contrasts among the different complier sub-populations (i.e. $D(1,0)=(1,0)$ vs. $D(1,1)=(1,0)$), then we would expect the second and third terms in the above summand to be of the same order as the identifiable quantities from the main theorem, i.e. for $z=d\in \{(0,1),(1,0)\}$
\begin{align}
    \E[Y(d) - Y(0,0) \mid D(1,1)=d] \approx \E[Y(d) - Y(0,0) \mid D(z)=d] = \tau_d
\end{align}
Thus if we observe that $\beta_{1,1}$ is substantially larger than $\theta(z,d)$ for $z=d\in \{(0,1),(1,0)\}$, this can be attributed to the first term in the mixture. Thereby showcasing that the Always-Treat policy can have a much larger effect on a subset of the population and therefore revealing that there are substantial complementarity effects among the different treatment exposures. 

Moreover, note that if we make the further (strong) restriction that:
\begin{equation}\label{eqn:strong-ass}
\begin{aligned}
    \E[Y(1,0) - Y(0,0) \mid D(1,1)=(1,0)] =~& \E[Y(1,0) - Y(0,0) \mid D(1,0)=(1,0)] =: \tau_{10}\\
    \E[Y(0,1) - Y(0,0) \mid D(1,1)=(0,1)] =~& \E[Y(1,0) - Y(0,0) \mid D(0,1)=(0,1)] =: \tau_{01}
\end{aligned}
\end{equation}
then the Always-Treat LATE can be identified as:
\begin{align}\label{eqn:easy-always-treat}
    \tau_{11} = \E[Y(1,1)-Y(0,0)\mid D(1,1)=(1,1)] = \frac{\beta_{11} - \tau_{10}\,w_{10} - \tau_{01}\,w_{01}}{w_{11}}
\end{align}
In Section~\ref{sec:always-treat} we will provide a much more relaxed assumption under which the Always-Treat LATE is also identifiable, while in the next section, we show that Always-Treat LATEs are not identifiable without further restrictions.

\section{Tightness of \Cref{thm:2_identification}}\label{sec:tightness}

A natural question is whether the identification result presented in Theorem~\ref{thm:2_identification} could be proven without  One Sided Noncompliance.
For example, one might wonder whether the sequential version of Monotonicity \cite{imbens1994identification} would suffice. Another natural question is whether Theorem~\ref{thm:2_identification} could be extended to show identification of the Always-Treat LATE under  One Sided Noncompliance. We offer a formal negative answer to both questions.

In contrast to One Sided Noncompliance (Assumption \ref{assume:2_one_sided}), Sequential Monotonicity (Assumption \ref{assume:2_sequential_monotonicity}) is insufficient for identification of all When-to-Treat Dynamic LATEs.
Moreover, One Sided Noncompliance (Assumption \ref{assume:2_one_sided}) is insufficient for identification of Always-Treat Dynamic LATEs.
To demonstrate each limitation, we construct two example data generating processes (DGPs) that have opposite signed underlying Dynamic LATE parameters yet have the same observed data distribution $\Pr\{D, Z, Y\}$ and satisfy identifying assumptions.
Taken together, One Sided Noncompliance improves upon a Sequential Monotoncity by enabling identification of all When-to-Treat LATEs, but it does not enable identification of Always-Treat LATEs, which motivates the second identifying condition we introduce in Section \ref{sec:always-treat}.

\begin{table}[]
\centering
\caption{
  $\mathbb{E}[Y(1,0) - Y(0,0) \mid D(1,0)-D(0,0)=(1,0)]$ is not identified under (Sequential) Monotonicity.
}\label{tbl:non_ided_under_monotonicity}
\begin{tabular}{ccccccccccccc}
\rowcolor[HTML]{C0C0C0}
  $\Pr$
  & $\mathbf{Z}$
  & $\mathbf{D(0,0)}$
  & $\mathbf{D(0,1),D(1,0),D(1,1)}$
  & $\mathbf{D}$
  & $\mathbf{Y(D(\zeta))}$
  & $\mathbf{Y(1,0)}$
  & $\mathbf{Y’(1,0)}$
  & $\mathbf{Y}$
  & $\mathbf{Y’}$
  & $\mathbf{\tau}$
  & $\mathbf{\tau'}$ \\
1/8 & 0,0 & (0,0) & (0,1),(1,0),(1,1) & 0,0 & -2
    &  2 & -2 & -2 & -2 & 4 & 0     \\
1/8 & 0,0 & (0,1) & (0,1),(1,0),(1,1) & 0,1 & -2
    & -2 &  2 & -2 & -2 & 0 & 4     \\
1/8 & 1,0 & (0,0) & (0,1),(1,0),(1,1) & 1,0 & -2
    &  2 & -2
    &\cellcolor[HTML]{FFCC67} 2
    &\cellcolor[HTML]{FFCC67}-2
    & 4 & 0     \\
1/8 & 1,0 & (0,1) & (0,1),(1,0),(1,1) & 1,0 & -2
    & -2 &  2
    &\cellcolor[HTML]{FFCC67}-2
    &\cellcolor[HTML]{FFCC67} 2
    & 0 & 4 \\
1/8 & 0,1 & (0,0) & (0,1),(1,0),(1,1) & 0,1 & -2
    &  2 & -2 & -2 & -2 & 4 & 0     \\
1/8 & 0,1 & (0,1) & (0,1),(1,0),(1,1) & 0,1 & -2
    & -2 &  2 & -2 & -2 & 0 & 4     \\
1/8 & 1,1 & (0,0) & (0,1),(1,0),(1,1) & 1,1 & -2
    &  2 & -2 & -2 & -2 & 4 & 0     \\
1/8 & 1,1 & (0,1) & (0,1),(1,0),(1,1) & 1,1 & -2
    & -2 &  2 & -2 & -2 & 0 & 4     \\
\end{tabular}
\newline
\newline
\centering
\caption{
  $\mathbb{E}[Y(1,1)-Y(0,0) \mid D(1,1)-D(0,0) = (1,1)]$ is not identified under  One Sided Noncompliance.
}\label{tbl:non_ided_under_one_sided_noncompliance}
\begin{tabular}{cccccccccccc}
\rowcolor[HTML]{C0C0C0}
  $\Pr$
  & $\mathbf{Z}$
  & $\mathbf{D(1,1)}$
  & $\mathbf{D(0,0),D(1,0),D(0,1)}$
  & $\mathbf{D}$
  & $\mathbf{Y(D(\zeta))}$
  & $\mathbf{Y(0,0)}$
  & $\mathbf{Y’(0,0)}$
  & $\mathbf{Y}$
  & $\mathbf{Y’}$
  & $\mathbf{\tau}$ 
  & $\mathbf{\tau'}$ \\
1/8 & 0,0 & (1,0) & (0,0),(1,0),(0,1) & 0,0 & 2 &  2 & -2
    & \cellcolor[HTML]{FFCC67} 2
    & \cellcolor[HTML]{FFCC67}-2
    & 0 & 4                                            \\
1/8 & 0,0 & (1,1) & (0,0),(1,0),(0,1) & 0,0 & 2 & -2 &  2
    & \cellcolor[HTML]{FFCC67}-2
    & \cellcolor[HTML]{FFCC67} 2
    & 4 & 0                                            \\
1/8 & 1,0 & (1,0) & (0,0),(1,0),(0,1) & 1,0 & 2
    &  2 & -2 & 2 & 2 & 0 & 4     \\
1/8 & 1,0 & (1,1) & (0,0),(1,0),(0,1) & 1,0 & 2
    & -2 &  2 & 2 & 2 & 4 & 0     \\
1/8 & 0,1 & (1,0) & (0,0),(1,0),(0,1) & 0,1 & 2
    &  2 & -2 & 2 & 2 & 0 & 4     \\
1/8 & 0,1 & (1,1) & (0,0),(1,0),(0,1) & 0,1 & 2
    & -2 &  2 & 2 & 2 & 4 & 0     \\
1/8 & 1,1 & (1,0) & (0,0),(1,0),(0,1) & 1,0 & 2 &  2 & -2 & 2 & 2 & 0 & 4  \\
1/8 & 1,1 & (1,1) & (0,0),(1,0),(0,1) & 1,1 & 2 & -2 &  2 & 2 & 2 & 4 & 0  \\
\end{tabular}
\newline
\newline
\caption*{
  Tables \ref{tbl:non_ided_under_monotonicity} and \ref{tbl:non_ided_under_one_sided_noncompliance} tabulate joint distributions of two DGPs each.
  Within each table, every row represents a distinct pair of encouragement vector $z \in \{0,1\}^2$ and subpopulation $\{ D(z) : z \in \{0,1\}^2 \}$ that has non-zero probability in both Data Generating Processes.
  Column $\Pr$ tabulates the joint probabilities, which are equal in both DGPs, of encouragement-subpopulation pairs.
  Column $Z$ tabulates encouragement vectors with non-zero probability in both DGPs.
  Columns $D(0,0),D(0,1),D(1,0),D(1,1)$ tabulates potential treatments under encouragements $D(0,0),D(0,1),D(1,0),D(1,1)$, which are equal in both DGPs.
  This latter set of columns characterizes the distinct unobserved subpopulations.

  The remaining values are either deterministic within both DGPs or deterministically indexed given encouragements and subpopulations.
  Column $D$ reports the observed treatment sequences given the preceeding encouragement vectors and subpopulations.
  Columns $Y(\cdot), Y'(\cdot)$ indicate potential outcomes under two different data Generating Processes where as $Y(D(\zeta))$ indicates the potential outcomes that are the same between DGPs.
  Columns $Y, Y'$ tablulate observed outcomes under two different Data Generating Processes with the presence and absence of an apostrophe indicating the DGP.
  Columns $\tau, \tau'$ analogously tabulate effects for different subpopulations: When-to-Treat for Table \ref{tbl:non_ided_under_monotonicity} and Always-Treat for Table \ref{tbl:non_ided_under_one_sided_noncompliance}.
}
\end{table}

\subsection{Nonidentifiability of When-to-Treat LATE under Sequential Monotonicity}

Without One Sided Noncompliance, individuals who comply under $D(1,0)$ may not comply in both time periods under $D(0,0)$.
This means that the following equality which holds under One Sided Noncompliance does not necessarily hold under Sequential Monotonicity: 
 $\mathbb{E}[Y(1,0) - Y(0,0) \mid D(1,0)-D(0,0)=(1,0)] = \mathbb{E}[Y(1,0) - Y(0,0) \mid D(1,0)=(1,0)]$.
In this subsection, we show that the left hand side of this inequality is not identified if we replace One Sided Noncompliance with Sequential Monotonicity within Theorem \ref{thm:2_identification}.
To be concrete, the following Assumption extends Montonicity in  \cite{imbens1994identification} to the sequential setting.

\begin{assumption}[Sequential Monotonicity]\label{assume:2_sequential_monotonicity}
\begin{align*}
D_1(1) &\geq D_1(0) \\ 
D_2(Z_1, 1) &\geq D_2(Z_1, 0).
\end{align*}
\end{assumption}

Table \ref{tbl:non_ided_under_monotonicity} illustrates two DGPs to show that Sequential Monotonicity (Assumption \ref{assume:2_sequential_monotonicity}) is not sufficient for identification.
Encouragement sequences $z \in \{0,1\}^2$ each have probability .25 so that Consistency, Exclusion, Ignorability, and Overlap (Assumptions \ref{assume:2_consistency}, \ref{assume:2_ignorability}, and \ref{assume:2_overlap}) hold. 
We simplify the example so that there are two subpopulations almost surely by setting $\Pr\{D(0,0)=d\}=.5$ for $d \in \{(0,0), (0,1)\}$ and $\Pr\{D(z)=z\}=1$ for $z \neq (0,0)$. Taken together, each encouragement vector and subpopulation pair has probability 1/8. Note that we can already conclude Sequential Relevance and Sequential Monotonicity are Satisfied. Next, we set potential outcomes $Y(D(\zeta))=-2$ for all $\zeta \in \{(0,0), (0,1), (1,1)\}$ so that we construct the counterexample by setting the potential outcomes for $Y(1,0), Y’(1,0)$ only. To construct the counterexample, we then set $Y(1,0)$ to have opposite sites between the two subpopulation within DGPs and flip signs between DGPs. This results in $\tau=4$ in one DGP and $\tau=0$ in another. Yet, in both cases, the observed distribution of $Z, D, Y$ is the same: $\Pr\{Z=z, D=d\}=1/8$ in both DGPs, $\Pr\{Y=2 \mid Z\neq(0,0), D\neq(0,0)\}=1$, and $\Pr\{Y=y \mid Z=(0,0), D=(0,0)\}=.5$ for $y \in \{-2, 2\}$.

\subsection{Non-Identifiability of Always-Treat LATE under  One Sided Noncompliance}

Table \ref{tbl:non_ided_under_one_sided_noncompliance} illustrates two DGPs to show that One Sided Noncompliance is not generally sufficient to enable identification of $\tau_{11}$.
Again, encouragement sequences $z \in \{0,1\}^2$ each have probability .25.
We set $\Pr\{D(1,1)=d\}=.5$ for $d \in \{(1,0), (1,1)\}$ and $\Pr\{D(z)=z\}=1$ for $z \neq (1,1)$ so that there are only two underlying subpopulations.
We also set $\Pr\{ Y(D(\zeta))=2 \}=1$ for $\zeta \neq (0,0)$ so that we can construct the counterexample by setting the potential outcomes for $Y(0,0), Y'(0,0)$ only.
In particular, within each DGP the two subpopulations have opposite signed $Y(0,0)$ potential outcomes that trade signs between DGPs.
This results in $\tau_{11}=4$ in one DGP and $\tau_{11}=0$ in another.
In both cases, the observed distribution of $Z, D, Y$ is the same and all of the Assumptions of Theorem \ref{thm:2_identification} hold: Exclusion, Consistency, Ignorability, Overlap, Relevance, One Sided Noncompliance hold in Table \ref{tbl:non_ided_under_one_sided_noncompliance} (Assumptions \ref{assume:2_consistency},  \ref{assume:2_ignorability}, \ref{assume:2_overlap}, \ref{assume:2_relevance}, and \ref{assume:2_one_sided}).
Concretely, every row in Table \ref{tbl:non_ided_under_one_sided_noncompliance} sets $\Pr\{Z=z, D=d\}=1/8$ in both DGPs, $\Pr\{Y=2 \mid Z\neq(0,0), D\neq(0,0)\}=1$, and $\Pr\{Y=y \mid Z=(0,0), D=(0,0)\}=.5$ for $y \in \{-2, 2\}$.

\section{Identifying Always-Treat LATE under Further Restrictions}\label{sec:always-treat}

Given the impossibility results in the previous section, we identify a further restriction that is natural and can be justified in certain domains, which enables identification of the Always-Treat LATE. We show that one sufficient condition for the identification of the Always-Treat LATE is that the effect of the first period treatment (in the absence of any subsequent treatments), i.e. $Y(1,0)-Y(0,0)$, is independent of the second period compliance, when a unit is recommended treatment at both periods, conditional on the initial state $S_0$ and conditional on having been recommended the treatment in the first period ($Z=1$) and having complied with that recommendation ($D=1$).

The quantity $Y(1,0)-Y(0,0)$, is many times referred to as the blip effect of the first period treatment \citep{robins2004optimal}. Hence, we assume that conditional on your initial state (e.g. initial characteristics of the unit), if you are a complier and took the treatment in the first period, then the blip effect of the first period treatment is independent whether you are going to comply in the second period, in the event that you are also recommended treatment in the second period. In particular, we will only require mean-independence. For this purpose, we define the following short-hand notation for conditional mean-independence:
\begin{align}
X \mci~& Y \mid Z \Leftrightarrow \E[X Y\mid Z] = \E[X\mid Z] \E[Y\mid Z].
\end{align}
and require that
\begin{align}
        Y(1,0) - Y(0,0) \mci D_2(1, 1) \mid Z_1=1, D_1=1, S_0 
\end{align}
Since $D_2(1,1)$ is a binary random variable, the latter is equivalent to a mean-zero covariance property:

\begin{assumption}[Cross-Period Effect-Compliance Independence]\label{assume:mean_independence}
Assume that:
\begin{align}
\mathrm{Cov}\left(Y(1,0) - Y(0,0), D_2(1, 1) \mid Z_1=1, D_1=1, S_0 \right) = 0
\end{align}
\end{assumption}

For instance, consider the case when, if a unit is encouraged to be treated in the first period and they accept the treatment, then they also opt for the treatment in all subsequent periods, whenever encouraged to do so.
We will call this setting as having \emph{Staggered Compliance}, which differs from Staggered Adoption because treatment can be switched off in the former (due to One Sided Noncompliance), but not in the latter where once treated, the individual remains treated.
Moreover, Staggered Adoption can be cast as a special case of Staggered Compliance, where in the former the instrument in future periods after the treatment has been administered is artificially set to $1$. Under Staggered Compliance, Assumption~\ref{assume:mean_independence} is trivially satisfied, since the event $D_2(1,1)=1$, conditional on $D_1=1, Z_1=1, S_0$ is a deterministic event, and thereby independent of any random quantity. Thus the identification result in this section directly applies to Staggered Compliance settings, i.e., when
\begin{align}
    \Pr(D_2(1,1) = 1\mid D_1=1, Z_1=1, S_0)=1.
\end{align}
Another setting where this independence is also trivially satisfied, is when noncompliance occurs only at the entry of the dynamic treatment regime. Units choose whether or not to enter some adaptive trial and once chosen to enter, they follow the recommendations of the trial. This is depicted in Figure~\ref{fig:dag_2}. We can rethink of this setting as the instrument in the second period being a perfect instrument and hence the treatment always being equal to the treatment (even under arbitrary interventions). Under this re-interpretation, we have again that $D_2(1,1)$ is deterministically equal to $1$, since the second period instrument is a perfect instrument. Thus for the setting in Figure~\ref{fig:dag_2}, we are also in a setting where:
\begin{align}
    D_2(z, d_1, s) = z_2, ~\quad\text{a.s.}\tag{endogenous entry into adaptive trial}
\end{align}
Our Assumption~\ref{assume:mean_independence} generalizes both of these scenarios and allows for other non-deterministic switches in compliance behaviors in the second period. As long as the reasons why a unit flipped its compliance behavior in the second period is not correlated with the blip effect of the first period treatment.

We will also need to invoke an extra benign ignorability condition that is also implied by the structural causal model associated with the causal graph depicted in Figure~\ref{fig:dag_1} under either the independent errors assumption \citep{pearl1995causal}, or the more relaxed FFRCISTG variant \citep{robins1986new} (see Appendix~\ref{app:ignorability}).

\begin{assumption}[Ignorability of Instruments with joint Instrument and Treatment Interventions]\label{assume:3_ignorability}
For every $d\in {\cal D}^2$ and $z\in {\cal Z}^2$:
\begin{align*}
    \{Y(d), D_1(z_1)\} \cindep& Z_1 \mid S_0
\end{align*}
\end{assumption}
\noindent Informally, this means that the instrument is conditionally independent of future potential outcomes and treatments, that would have been generated under instrument and treatment interventions, given the observed history of instruments, treatments, and states.

\begin{theorem}\label{thm:late11_id}
    Under Assumptions~\ref{assume:2_consistency}, 
 \ref{assume:2_ignorability}, \ref{assume:2_overlap}, \ref{assume:2_relevance}, \ref{assume:2_one_sided}, \ref{assume:mean_independence} and \ref{assume:3_ignorability}
 the Always-Treat LATE $\tau_{11}$ and the conditional Always-Treat LATE $\tau_{11}(S_0)$ are identified as follows.
    \begin{align*}
    \tau_{1,1}(S_0) :=~& \frac{\beta(S_0) - \tau_{10}(S_0) (1 - \gamma_{1,1}(S_0))}{\gamma_{1,1}(S_0)}
    \end{align*}
    where:
    \begin{align*}
        \beta(S_0) :=~&  \E[Y(D(1,1))-Y(D(0,0))\mid D_1(1)=1, S_0] \\
        \gamma_{1,1}(S_0) :=~& \Pr\{D_2(1,1)=1\mid D_1(1)=1, S_0\}
    \end{align*}
    In turn, letting $H_1=S_0$ and $H_2=(S, Z_1, D_1)$, these quantities are identified as:
    \begin{align*}
        \beta(S_0) =~& \frac{\Gamma}{\Pr(D_1=1\mid S_0, Z_1=1)}\\
        \Gamma :=~& \E[\E[Y\mid H_2, Z_2=D_1]\mid S_0, Z_1=1] - \E[\E[Y\mid H_2, Z_2=0]\mid S_0, Z_1=0]\\
        \gamma_{1,1}(S_0) =~& \E[\Pr(D_2=1\mid H_2, Z_2=1)\mid S_0, Z_1=1, D_1=1]
    \end{align*}
    while $\tau_{10}(S_0)$ is the heterogeneous when-to-treat LATE identified as in Theorem~\ref{thm:2_identification_hetero}. The unconditional Always-Treat LATE is identified as:
    \begin{align*}
        \tau_{11} = \frac{\E[\tau_{1,1}(S_0) \Pr(D(1,1)=(1,1)\mid S_0)]}{\Pr(D(1,1)=(1,1))}
    \end{align*}
    $\Pr(D(1,1)=(1,1))$ and $\Pr(D(1,1)=(1,1)\mid S_0)$ are identified in Theorem~\ref{thm:2_identification} and Theorem~\ref{thm:2_identification_hetero}, correspondingly. The latter can be simplified as:
    \begin{align*}
        \tau_{11} = \frac{\E[\Gamma - \tau_{10}(S_0)\, \E[\Pr(D=(1,0)\mid H_2, Z_2=1)\mid S_0, Z_1=1]]}{\E[\E[\Pr(D=(1,1)\mid H_2, Z_2=1]\mid S_0, Z_1=1]]}
    \end{align*}
\end{theorem}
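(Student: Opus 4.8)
The plan is to identify the conditional Always-Treat LATE $\tau_{11}(S_0)$ first, then aggregate to the unconditional $\tau_{11}$ by a tower-property argument, and finally reduce every counterfactual quantity to observed conditional means via sequential ignorability and overlap. The organizing identity is a decomposition of $\beta(S_0)$ over the first-period complier subpopulation. Write $A=\{D_1(1)=1\}$ for the first-period compliers. The exclusion restriction in Assumption~\ref{assume:2_consistency} forces $D_1(1,1)=D_1(1,0)=D_1(1)$, and One Sided Noncompliance (Assumption~\ref{assume:2_one_sided}) forces $D(0,0)=(0,0)$ and $D_2(1,0)=0$ almost surely. Hence on the event $A$ the treatment vector $D(1,1)$ equals $(1,1)$ when $D_2(1,1)=1$ and equals $(1,0)$ when $D_2(1,1)=0$, while $Y(D(0,0))=Y(0,0)$. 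Splitting the expectation defining $\beta(S_0)$ along $D_2(1,1)$ therefore yields
\[
\beta(S_0) = \gamma_{1,1}(S_0)\,\tau_{11}(S_0) + (1-\gamma_{1,1}(S_0))\,\E[Y(1,0)-Y(0,0)\mid A,\,D_2(1,1)=0,\,S_0],
\]
where the event $\{A,\,D_2(1,1)=1\}$ is exactly the Always-Treat complier event $D(1,1)=(1,1)$.

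The crux is to show that the residual blip-effect term equals $\tau_{10}(S_0)$. Using that $D_1(1,0)=D_1(1)$ and $D_2(1,0)=0$ almost surely, the when-to-treat complier event $\{D(1,0)=(1,0)\}$ coincides with $A$, so $\tau_{10}(S_0)=\E[Y(1,0)-Y(0,0)\mid A,\,S_0]$. I would then invoke Assumption~\ref{assume:mean_independence}, which asserts that $Y(1,0)-Y(0,0)$ is mean-independent of $D_2(1,1)$ conditional on $Z_1=1, D_1=1, S_0$; by consistency this conditioning event is $\{Z_1=1\}\cap A$. The two ignorability statements—Assumption~\ref{assume:2_ignorability} (which yields $(D_1(1),D_2(1,1))\cindep Z_1\mid S_0$) and Assumption~\ref{assume:3_ignorability} (which controls the potential outcomes $Y(d)$)—are what allow the conditioning on $Z_1=1$ to be removed, so that the mean-independence transfers to the counterfactual subpopulation and gives $\E[Y(1,0)-Y(0,0)\mid A,\,D_2(1,1)=0,\,S_0]=\tau_{10}(S_0)$. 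Substituting into the display and solving for $\tau_{11}(S_0)$ produces the claimed formula. Aggregation to the unconditional estimand then follows from $\E[(Y(1,1)-Y(0,0))\1\{D(1,1)=(1,1)\}\mid S_0]=\tau_{11}(S_0)\,\Pr(D(1,1)=(1,1)\mid S_0)$ together with the tower property, dividing by $\Pr(D(1,1)=(1,1))$.

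It remains to identify $\beta(S_0)$ and $\gamma_{1,1}(S_0)$ from observed data. For $\gamma_{1,1}(S_0)$, the sequential ignorability of the second instrument (Assumption~\ref{assume:2_ignorability}) together with sequential overlap (Assumption~\ref{assume:2_overlap}) lets me replace the counterfactual $\Pr(D_2(1,1)=1\mid A,\,S_0)$ by the nested observed conditional probability $\E[\Pr(D_2=1\mid H_2, Z_2=1)\mid S_0, Z_1=1, D_1=1]$, exactly as in Theorem~\ref{thm:2_identification_hetero}. For $\beta(S_0)$, I would identify $\Gamma=\E[(Y(D(1,1))-Y(0,0))\1_A\mid S_0]$ and divide by $\Pr(D_1=1\mid S_0,Z_1=1)=\Pr(A\mid S_0)$. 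The device $Z_2=D_1$ inside the inner expectation handles both branches of first-period behavior at once: on $\{Z_1=1,D_1=1\}$ it sets $Z_2=1$ and, by second-period ignorability and overlap, $\E[Y\mid H_2,Z_2=1]$ identifies the first-period compliers' contribution $Y(D(1,1))$; on $\{Z_1=1,D_1=0\}$ it sets $Z_2=0$, which under One Sided Noncompliance yields treatment $(0,0)$ and outcome $Y(0,0)$, and these first-period never-taker terms cancel against the second, $Z_1=0$ expectation (where One Sided Noncompliance again forces $(0,0)$). What survives is precisely $\Gamma$, giving the stated formula for $\beta(S_0)$ and, after substitution, the simplified unconditional expression for $\tau_{11}$.

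The main obstacle I anticipate lies in the bookkeeping of the second and third paragraphs. Justifying that the observed conditioning event $\{Z_1=1,D_1=1,S_0\}$ appearing in Assumption~\ref{assume:mean_independence} may be exchanged for the counterfactual subpopulation conditioning $\{D_1(1)=1,S_0\}$ requires combining the two ignorability statements carefully, so that the triple $(Y(1,0)-Y(0,0),\,D_1(1),\,D_2(1,1))$ behaves as though jointly independent of $Z_1$ given $S_0$; this joint transfer, rather than the separate marginal statements, is the delicate point. Likewise, verifying that the $Z_2=D_1$ differencing isolates exactly the first-period-complier contrast, with the never-taker outcomes cancelling, requires tracking the recursive evaluation of the counterfactuals under the exclusion restrictions. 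The remaining reductions to observed data are routine applications of the iterated-expectation identification already established in Theorems~\ref{thm:2_identification} and~\ref{thm:2_identification_hetero}.
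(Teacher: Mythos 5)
Your proposal is correct and follows essentially the same route as the paper: the same law-of-total-probability decomposition of $\beta(S_0)$ over $D_2(1,1)$ within the first-period complier event $\{D_1(1)=1\}$, the same use of Assumption~\ref{assume:mean_independence} (transferred from the observed conditioning event $\{Z_1=1,D_1=1\}$ to the counterfactual one via the two ignorability assumptions, exactly the delicate point you flag) to equate the residual blip term with $\tau_{10}(S_0)$, and the same g-formula reductions and tower-property aggregation. The only difference is organizational: you derive the $\Gamma$ formula for $\beta(S_0)$ directly by interpreting the $Z_2=D_1$ device and the never-taker cancellation, whereas the paper first identifies $\E[Y(0,0)\mid D_1(1)=1,S_0]$ by a separate total-probability step over $D_1(1)$ and then merges terms in Lemma~\ref{lem:simplification}; the two computations are equivalent.
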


For the case of Staggered Compliance, the identification formula for $\tau_{11}$ implied by our proof can also be written in the following form (see Section~\ref{sec:staggered} for the derivation):
\begin{align*}
    \frac{\E[Y(D(1,1))] - \E[Y(D(0,0))] - \E[\E[f_2(H_2, 1)-f_2(H_2,0)\mid Z_1=1, D_1=0, S_0]\Pr(D_1=0\mid Z_1=1, S_0)]}{\E[\Pr(D_1=1\mid Z_1=1, S_0)]} 
\end{align*}
where the quantities $\E[Y(D(1,1))] - \E[Y(D(0,0))]$ are identified as prescribed in Theorem~\ref{thm:2_identification} and $H_2=(S, Z_1, D_1)$ (i.e., the information available prior to the period $2$ encouragement) and $f_2(H_2,Z_2)$ is defined as:
\begin{align*}
    f_2(H_2, Z_2) = \E[Y\mid H_2, Z_2]
\end{align*}
Interpreting the formula for $\tau_{11}$, we see that without the negative correction term, the first term roughly coincides to the quantity:
\begin{align}
    \frac{\E[Y(D(1,1)) - Y(D(0,0))]}{\Pr(D_1(1)=1)}
\end{align}
This would have been the LATE, if there was only one compliance choice in the data and $D(1,1)=(1,1)$ if $D_1(1)=1$ and $D(1,1)=(0,0)$ otherwise. Note that in this case, due to the fact that in the event that $D_1=0$, the future treatments are deterministically $0$ and due to the exclusion restriction, the states and outcomes only depend on the instruments through the treatments, in this scenario, we would have that $f_2(H_2, 1) = f_2(H_2,0)$ a.s., conditional on $D_1=0$. Hence, the correction term would cancel. However, in the data that we hypothesize, units that did not comply in the first period can potentially comply in the second period. We only assume Staggered Compliance. In this case, future instruments can affect future treatments, which subsequently affect states and outcome, making $f_2(H_2, 1) \neq f_2(H_2, 0)$. In this case, the correction term, essentially removes from the vanilla term above, the bias in the vanilla effect calculation induced by these second period compliers. 

Moreover, we can further simplify the identification formula for the case of Staggered Compliance, to a variant that is much more amenable to estimation:
\begin{lemma}[Always-Treat LATE under Staggered Compliance]\label{lem:staggered} Under Assumptions~\ref{assume:2_consistency}, 
 \ref{assume:2_ignorability}, \ref{assume:2_overlap}, \ref{assume:2_relevance}, \ref{assume:2_one_sided},   \ref{assume:3_ignorability} and the Staggered Compliance Assumption, letting $H_1=S_0$ and $H_2=(S, Z_1, D_1)$, the Always-Treat LATE is identified as:
    \begin{align}\label{eqn:staggered-id}
        \tau_{11} :=~& \frac{\E[\E[\E[Y\mid H_2, Z_2=D_1]\mid H_1, Z_1=1] - \E[\E[Y\mid H_2, Z_2=0]\mid H_1, Z_1=0]]}{\E[\Pr(D_1=1\mid Z_1=1, H_1)]} 
    \end{align}
    and the conditional Always-Treat LATE is identified as:
    \begin{align}
        \tau_{1,1}(S_0) :=~& \frac{\E[\E[Y\mid H_2, Z_2=D_1]\mid H_1, Z_1=1] - \E[\E[Y\mid H_2, Z_2=0]\mid H_1, Z_1=0]}{\Pr(D_1=1\mid Z_1=1, H_1)} 
    \end{align}
\end{lemma}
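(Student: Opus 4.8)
The plan is to obtain Lemma~\ref{lem:staggered} as a direct specialization of Theorem~\ref{thm:late11_id}, so the first task is to check that the hypotheses of that theorem are in force. All of Assumptions~\ref{assume:2_consistency}--\ref{assume:2_one_sided} and \ref{assume:3_ignorability} are assumed here, so the only thing to verify is Assumption~\ref{assume:mean_independence}. Under Staggered Compliance, $\Pr(D_2(1,1)=1\mid D_1=1, Z_1=1, S_0)=1$, so conditional on $\{Z_1=1, D_1=1, S_0\}$ the variable $D_2(1,1)$ is (almost surely) the constant $1$; a constant has zero covariance with any random variable, so $\mathrm{Cov}(Y(1,0)-Y(0,0), D_2(1,1)\mid Z_1=1, D_1=1, S_0)=0$ and Assumption~\ref{assume:mean_independence} holds. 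Hence Theorem~\ref{thm:late11_id} applies verbatim.

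The key step is to show that Staggered Compliance forces $\gamma_{1,1}(S_0)=\Pr\{D_2(1,1)=1\mid D_1(1)=1, S_0\}=1$. The subtlety is that the Staggered Compliance assumption conditions on the \emph{observed} event $\{D_1=1, Z_1=1\}$, whereas $\gamma_{1,1}(S_0)$ conditions on the \emph{counterfactual} event $\{D_1(1)=1\}$. I would bridge these using Assumption~\ref{assume:2_ignorability}: since $\{D_1(1), D_2(1,1)\}\cindep Z_1\mid S_0$, the conditional law of $(D_1(1), D_2(1,1))$ given $S_0$ is unchanged by further conditioning on $Z_1=1$, so $\gamma_{1,1}(S_0)=\Pr\{D_2(1,1)=1\mid D_1(1)=1, S_0, Z_1=1\}$; then on the event $Z_1=1$ consistency gives $D_1(1)=D_1$, turning this into $\Pr\{D_2(1,1)=1\mid D_1=1, Z_1=1, S_0\}$, which equals $1$ by Staggered Compliance. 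Substituting $\gamma_{1,1}(S_0)=1$ into the formula $\tau_{1,1}(S_0)=\big(\beta(S_0)-\tau_{10}(S_0)(1-\gamma_{1,1}(S_0))\big)/\gamma_{1,1}(S_0)$ from Theorem~\ref{thm:late11_id} collapses the correction term and yields $\tau_{1,1}(S_0)=\beta(S_0)$; plugging in the identification $\beta(S_0)=\Gamma/\Pr(D_1=1\mid S_0, Z_1=1)$ with $H_1=S_0$ gives exactly the stated conditional formula.

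For the unconditional Always-Treat LATE I would start from the aggregation identity $\tau_{11}=\E[\tau_{1,1}(S_0)\Pr(D(1,1)=(1,1)\mid S_0)]/\Pr(D(1,1)=(1,1))$ of Theorem~\ref{thm:late11_id}. The same ignorability/consistency argument shows that under Staggered Compliance the event $\{D(1,1)=(1,1)\}=\{D_1(1)=1, D_2(1,1)=1\}$ coincides almost surely with $\{D_1(1)=1\}$ (given $D_1(1)=1$, the second coordinate is deterministically $1$), so $\Pr(D(1,1)=(1,1)\mid S_0)=\Pr(D_1(1)=1\mid S_0)=\Pr(D_1=1\mid Z_1=1, S_0)$, the last equality again by ignorability, consistency, and overlap. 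Writing $N(S_0)$ for the numerator of the conditional formula, we have $\tau_{1,1}(S_0)=N(S_0)/\Pr(D_1=1\mid Z_1=1, S_0)$, and multiplying by the weight $\Pr(D_1=1\mid Z_1=1, S_0)$ cancels the denominator inside the expectation, leaving $\tau_{11}=\E[N(S_0)]/\E[\Pr(D_1=1\mid Z_1=1, S_0)]$, which is precisely Equation~\eqref{eqn:staggered-id}. The only real obstacle is the careful translation between counterfactual conditioning events and observed conditioning events; once those event equivalences are pinned down via Assumptions~\ref{assume:2_consistency} and \ref{assume:2_ignorability}, the rest is algebraic cancellation.
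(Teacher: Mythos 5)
Your proposal is correct and follows essentially the same route as the paper: specialize Theorem~\ref{thm:late11_id} by noting that Staggered Compliance forces $\gamma_{1,1}(S_0)=1$ (so $\tau_{1,1}(S_0)=\beta(S_0)$, identified via the $\Gamma$ formula), and then aggregate using $\Pr(D(1,1)=(1,1)\mid S_0)=\Pr(D_1=1\mid Z_1=1,S_0)$ to cancel the denominators. Your explicit bridging of the counterfactual conditioning event $\{D_1(1)=1\}$ with the observed event $\{D_1=1,Z_1=1\}$ via ignorability and consistency is exactly the argument the paper spells out in the many-period analogue (Theorem~\ref{thm:late11_id_many}) and leaves implicit here.
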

\begin{figure}[H]
\centering
\begin{tikzpicture}[
  -{Latex[length=3mm,width=2mm]}, thick, node distance=.75cm,
  thick,
  every node/.style={scale=.8},
  unobserved/.style={
    circle,
      draw=black!60,
      fill=white!60,
      very thick,
      minimum size=10mm,
      font=\sffamily\large\bfseries
  },
  observed/.style={
    circle,
    draw=black!60,
    fill=gray!25,
    very thick,
    minimum size=10mm,
    font=\sffamily\large\bfseries
  },
  placeholder/.style={
    circle,
    draw=white!60,
    fill=white!25,
    very thick,
    minimum size=10mm,
    font=\sffamily\large\bfseries
  },
  box/.style = {draw,very thick,black,inner sep=10pt,rounded corners=5pt, inner sep=1cm}] 
]

\node[observed]     (Z1)                      {$Z_{1}$};
\node[placeholder]  (P1)   [right=of Z1]      {};
\node[placeholder]  (P2)   [below=of P1]      {};
\node[placeholder]     (Z2)   [right=1.5 of P1]  {};

\node[observed]     (D1)   [below=of Z1]      {$D_1$};
\node[observed]     (D2)   [below=of Z2]      {$D_2$};

\node[observed]     (S1)   [below=of P2]      {$S_1$};

\node[placeholder]  (P0)   [left=of S1]       {};
\node[unobserved]   (C1)   [left=of P0]       {$U$};
\node[observed]     (S0)   [above=of C1]       {$S_0$};
\node[placeholder]  (P4a)  [above=.7 of S0]   {};
\node[placeholder]  (P4b)  [above=1.4 of S0]  {};
\node[placeholder]  (P4c)  [above=2 of S0]    {};

\node[placeholder]  (P3)   [right=of D2]      {};
\node[observed]     (Y)    [below=of P3]      {$Y$};

\node[box,rotate fit=0,fit=(Z1)(P1)(P2)(Z2)(S1)(Y)] (plate1) {};

\path[every node/.style={font=\sffamily\small}]
    (S0) edge[solid]node[] {} (plate1)
    (Z1) edge[solid] node[] {} (D1)
    (Z1) edge[solid] node[] {} (D2)

    (S1)  edge[solid] node[] {} (Y)
    (S1)  edge[solid] node[] {} (D2)

    (D1) edge[solid] node[] {} (D2)
    (D1) edge[solid] node[] {} (S1)
    (D1) edge[solid] node[] {} (Y)
    
    (D2) edge[solid] node[] {} (Y)

;

\path[every path/.style={font=\sffamily\small,draw=orange}]
    (C1) edge[bend left=15]   node  [left] {} (D1)
    (C1) edge[bend right=25]  node  [left] {} (S1)
    (C1) edge[bend left=25]   node  [left] {} (S0)
    (C1) edge[bend right=25]  node  [left] {} (Y)
    
;
\end{tikzpicture}
\caption{Causal graph depicting a scenario with endogenous entry into an adaptive stratified trial. The variable $S_0$ should be thought as having outgoing edges to all the observed variables (gray).}
\label{fig:dag_2}
\end{figure}

\section{Automatic Debiased Estimation and Inference}\label{sec:dml}

In this section, we will develop an automatic debiased machine learning estimate for the When-to-Treat LATEs, for $d=z\in \{(0,1), (1,0)\}$, which were the key identified estimands of Theorem~\ref{thm:2_identification}, as well as the always-treat LATE under Staggered Compliance that we identified in Theorem~\ref{thm:late11_id}. This will lead to a straightforward asymptotically valid confidence interval construction procedure.

To state our main inference results, we first need to define a set of nuisance functions that arise from our identification theorem. For convenience, we will define the information or history sets 
\begin{align*}
    H_1=~& S_0, & H_2=~& (S, D_1, Z_1).
\end{align*}
We first define nested regression functions related to the dynamic effects of the instruments on the outcome:\footnote{Even though $f_2^z=f_2^\0$, we define them separately for notational convenience.}
\begin{equation}\label{eqn:outcome-nested}
\begin{aligned}
    f_2^z(H_2, Z_2) =~& \E[Y\mid H_2, Z_2] \quad\quad &\quad\quad f_2^{\0}(H_2, Z_2) =~& \E[Y\mid H_2, Z_2]\\
    f_1^z(H_1, Z_1) =~& \E[f_2^z(H_2, z_2)\mid H_1, Z_1] \quad\quad&\quad\quad
    f_1^\0(H_1, Z_1) =~& \E[f_2^\0(H_2, 0)\mid H_1, Z_1]
\end{aligned}
\end{equation}
and nested regression functions related to the dynamic effects of the instruments on the treatments:
\begin{equation}\label{eqn:treatment-nested}
\begin{aligned}
    g_2(H_2, Z_2) =~& \Pr(D=d\mid H_2, Z_2) = \E[\I(D=d)\mid H_2, Z_2]\\
    g_1(H_1, Z_1) =~& \E[g_2(H_2, z_2)\mid H_1, Z_1]
\end{aligned}
\end{equation}
These nested regression functions can be estimated via recursive empirical risk minimization and existing results provide mean-squared-error bounds for such recursive procedures as a function of measures of the statistical complexity of the function spaces used and their approximation error (see e.g. Lemma~10 and Theorem~11 of \cite{lewis2020double}). In practice, these functions can be estimated via recursively invoking generic machine learning regression oracles on appropriately defined outcome variables that are recursively constructed based on the model output by the prior regression oracle call.

\subsection{Debiased Estimation of When-to-Treat LATEs}\label{sec:dml-when}

Given these definitions, we can re-state our identification result in Theorem~\ref{thm:2_identification} as, for any $d=z\in \{(0,1), (1,0)\}$:
\begin{align*}
    \tau_d = \frac{\E\left[f_1^z(H_1, z_1) - f_1^\0(H_1, 0)\right]}{\E\left[g_1(H_1, z_1)\right]}
\end{align*}
Equivalently, $\tau_d$ can be viewed as the unique solution, with respect to $\tau_d$ of the moment equation:
\begin{align*}
    \E\left[f_1^z(H_1, 1) - f_1^\0(H_1, 0) - \tau_d \cdot g_1(H_1, z_1)\right] = 0 
\end{align*}

Thus the target parameter of interest $\tau_d$ can be viewed as the solution to a moment equation that is linear in the parameter of interest and which also depends linearly on a set of nuisance functions that are defined as the solution to nested regression problems. This exactly falls into the class of estimands that can be automatically debiased following the results of \cite{chernozhukov2022automatic}. In particular Thoerem~6 of \cite{chernozhukov2022automatic}, shows how to automatically construct a debiased moment that the target parameter also needs to satisfy, but which is also Neyman orthogonal with respect to all the nuisance functions \citep{chernozhukov2018double}. 

To define the automatic debiasing procedure we first need to introduce the Riesz representers of appropriately defined linear functionals. We introduce the function $a_1^z(H_1, Z_1)$ as the Riesz representer of the linear functional $L_1(q)=\E[q(H_1, z_1)]$ and the function $a_2^z(H_2, Z_2)$ as the Riesz representer of the linear functional $L_2(q)=\E[q(H_2, z_2) a_1^{z}(H_1, Z_1)]$. These can be estimated via empirical risk minimization of the loss functions:
\begin{equation}\label{eqn:riesz}
\begin{aligned}
    a_1^z =~& \argmin_{a\in A_1} \E\left[a(H_1, Z_1)^2 - 2a(H_1, z_1)\right]\\
    a_2^z =~& \argmin_{a\in A_2} \E\left[a(H_2, Z_2)^2 - 2 a_1^{z}(H_1, Z_1)\,a(H_2, z_2)\right]
\end{aligned}
\end{equation}
where $A_1, A_2$ are function spaces that can approximate well the true riesz function and can be taken to be sample dependent growing non-linear sieve spaces. \cite{chernozhukov2022automatic} provide mean-squared-error guarantees for this recursive estimation procedure. Alternatively, these Riesz representers can be characterized in closed form as the following propensity ratios:
\begin{align}\label{eqn:riesz-closed}
    a_1^z(H_1, Z_1) =~& \frac{1\{Z_1=z_1\}}{\Pr\{Z_1=z_1\mid H_1\}} &
    a_2^z(H_1, Z_1) =~& \frac{1\{Z_1=z_1\}}{\Pr\{Z_1=z_1\mid H_1\}} \cdot \frac{1\{Z_2=z_2\}}{\Pr\{Z_2=z_2\mid H_2\}} 
\end{align}
and estimated in a plug-in manner, by first estimating the propensities that appear in the denominators, via generic ML classification approaches. Similarly, we can define the Riesz representers $a_1^\0$ and $a_2^\0$ as the functions $a_1^z, a_2^z$ for $z=(0,0)$.

Letting $W=(S, Z, D, Y)$ and $f=\{f_1^z, f_2^z, f_1^\0, f_2^\0\}$ and $g=\{g_1, g_2\}$ and $a=\{a_1, a_2, a_1^\0, a_2^\0\}$, we can now write the debiased moment equation that the target parameter $\tau_d$ must also satisfy, as:
\begin{align}
    \E\left[{\phi}_z(W;f, a) - {\phi}_0(W;f, a) - \tau_d\cdot  {\psi}(W; g, a)\right] = 0
\end{align}
where
\begin{equation}\label{eqn:psi}
\begin{aligned}
    {\phi}_z(W; f,a) =~& {f}_1^z(H_1, z_1) + {a}_1^z(H_1, Z_1) ({f}_2^z(H_2, z_2) - {f}_1^z(H_1, Z_1)) + {a}_2^z(H_2, Z_2)(Y - {f}_2^z(H_2, Z_2))\\
    {\psi}(W; g, a) =~& {g}_1(H_1, z_1) + {a}_1(H_1, Z_1) ({g}_2(H_2, z_2) - {g}_1(H_1, Z_1)) + {a}_2(H_2, z_2)(\I(D=d) - {g}_2(H_2, Z_2))
\end{aligned}
\end{equation}
This moment equation is now Neyman orthogonal with respect to all the nuisance functions $f, g, a$ that it depends on. Thus we can invoke the general debiased machine learning estimation paradigm.

Given any ML estimates $\hat{f}, \hat{g}, \hat{a}$, constructed on a separate sample (or in a cross-fitting manner), we can construct the doubly robust estimate of $\tau_d$ as the solution to the empirical plug-in analogue of the Neyman orthogonal moment equation, with respect to $\hat{\tau}_d$, i.e.:
\begin{align*}
    \E_n\left[{\phi}_z(W;\hat{f}, \hat{a}) - {\phi}_0(W;\hat{f}, \hat{a}) - \hat{\tau}_d\cdot  {\psi}(W; \hat{g}, \hat{a})\right] = 0
\end{align*}
which takes the form:
\begin{align}\label{eqn:estimate}
    \hat{\tau}_d = \frac{\E_n\left[{\phi}_z(W;\hat{f}, \hat{a}) - {\phi}_0(W;\hat{f}, \hat{a})\right]}{\E_n[{\psi}(W; \hat{g}, \hat{a})]}
\end{align}
Based on an analysis identical to the one in Thoerem~9 of \cite{chernozhukov2022automatic}, we can state the following asymptotic normality and confidence interval construction statement. In the statement we will use the norm notation $\|h\|_2 = \sqrt{\E[h(X)^2]}$ for any function $h$ that takes as input a random variable $X$ and the norm notation $\|x\|_p$ as the $p$-th norm of a vector $x$. 
\begin{theorem}[Inference on When-to-Treat LATE]\label{thm:inference}
For any fixed $d=z\in \{(0,1), (1,0)\}$, assume that
\begin{enumerate}
 \item Sequential strong overlap holds, i.e. for $t\in \{1,2\}$: $\Pr\{Z_t=z_t\mid H_t\}, \Pr\{Z_t=0\mid H_t\}\geq \epsilon > 0$ for a constant $\epsilon$.
    \item All random variables and function estimates are uniformly bounded by a constant.\footnote{This assumption can be relaxed to $\|\hat{f}_2(H_2, z_2)\|_p, \|\hat{f}_1(H_1, z_1)\|_p\leq C$, $\|\hat{f}_2(H_2, Z_2)\|_p, \|\hat{f}_1(H_1, Z_1)\|_p \leq C$, and $\|\hat{a}_1(H_1, Z_1)\|_{\infty}, \|\hat{a}_2(H_2, Z_2)\|_{\infty}\leq C$  for some $p>2$ and $|\tau_d|<C$ and $\E[\left(f_2(H_2, z_2) - f_1(H_1, Z_1)\right)^2\mid H_1, Z_1],\E[\left(Y - f_2(H_2, Z_2)\right)^2\mid H_2, Z_2]\leq \bar{\sigma}^2$}
    \item All nuisance estimates are mean-square-consistent, i.e., for $t\in \{1, 2\}$
    \begin{align*}
        \|\hat{f}_t^z(H_t, Z_t) - f_t^z(H_t, Z_t)\|_2, \text{ and } \|\hat{g}_t(H_t, Z_t) - g_t(H_t, Z_t)\|_2 \text{ and } \|\hat{a}_t^z(H_t, Z_t) - a_t^z(H_t, Z_t)\|_2=~& o_p(1)\\
        \|\hat{f}_t^\0(H_t, Z_t) - f_t^\0(H_t, Z_t)\|_2, \text{ and } \|\hat{a}_t^\0(H_t, Z_t) - a_t^\0(H_t, Z_t)\|_2=~& o_p(1)
    \end{align*}
    \item The following product root-mean-squared-error nuisance rates are satisfied: for $t\in \{1, 2\}$
    \begin{align*}
    \sqrt{n}\|\hat{a}_t^z(H_t, Z_t) - a_t^z(H_t, Z_t)\|_2\cdot \max\left\{\|\hat{f}_t(H_t, Z_t) - f_t^z(H_t, Z_t)\|_2, \|\hat{g}_t(H_t, Z_t) - g_t(H_t, Z_t)\|_2\right\} =& o_p(1)\\
    \sqrt{n} \|\hat{a}_1^z(H_1, Z_1) - a_1^z(H_1, Z_1)\|_2\cdot \max\left\{\|\hat{f}_2^z(H_2, Z_2) - f_2^z(H_2, Z_2)\|_2,\|\hat{g}_2(H_2, Z_2) - g_2(H_2, Z_2)\|_2\right\} =& o_p(1)\\
    \sqrt{n}\|\hat{a}_t^\0(H_t, Z_t) - a_t^\0(H_t, Z_t)\|_2\cdot \|\hat{f}_t^\0(H_t, Z_t) - f_t^\0(H_t, Z_t)\|_2 =& o_p(1)\\
    \sqrt{n} \|\hat{a}_1^0(H_1, Z_1) - a_1^\0(H_1, Z_1)\|_2\cdot \|\hat{f}_2^\0(H_2, Z_2) - f_2^\0(H_2, Z_2)\|_2 =& o_p(1)
    \end{align*}
    \item The variance of the moment is non-zero and bounded: 
    $$
    \sigma^2=\E\left[\psi(W; g, a)\right]^{-2} \E\left[\left\{\phi_z(W;f, a) - \phi_0(W;f, a) - \tau_d\cdot   \psi(W; g, a)\right\}^2\right] \in (0, \infty).
    $$
\end{enumerate}
Then the estimate $\hat{\tau}_d$ defined in Equation~\eqref{eqn:estimate} satisfies
$$
\hat{\tau}_d\overset{p}{\rightarrow}\tau_d,\quad \frac{\sqrt{n}}{\sigma}(\hat{\tau}_d-\tau_d)\overset{d}{\rightarrow}\mathcal{N}(0,1),\quad \mathbb{P}(\tau_d\in \hat{\tau}_d \pm 1.96 \hat{\sigma}n^{-1/2})\rightarrow 0.95. 
$$
where:
\begin{align}\label{eqn:approx-variance}
    \hat{\sigma}^2 = \E_n\left[\psi(W; \hat{g}, \hat{a})\right]^{-2} \E_n\left[\left\{\phi_z(W; \hat{f}, \hat{a}) - \phi_0(W; \hat{f}, \hat{a}) - \hat{\tau}_d\cdot   \psi(W; \hat{g}, \hat{a})\right\}^2\right]
\end{align}
Moreover, the estimate is asymptotically linear, with influence function representation:
\begin{align}
    \sqrt{n} (\hat{\tau}_d - \tau_d) = \frac{1}{\sqrt{n}} \sum_{i=1}^n \E[\psi(W;g,a)]^{-1}(\phi_z(W; f, a) - \phi_0(W; f, a) - \tau_d\cdot \psi(W; g,a))
\end{align}
\end{theorem}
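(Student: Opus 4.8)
The plan is to follow the general automatic-debiasing template of \cite{chernozhukov2022automatic} (their Theorem~9), reducing the statement to the verification of (i) moment validity at the truth, (ii) Neyman orthogonality of the constructed moment, and (iii) a second-order remainder bound matched to the rate conditions. Since the estimating equation $\E_n[\phi_z - \phi_0 - \tau_d\,\psi]=0$ is linear in $\tau_d$, I would first record the exact algebraic identity
\begin{align*}
\hat\tau_d - \tau_d = \frac{\E_n[m(W;\tau_d,\hat\eta)]}{\E_n[\psi(W;\hat g,\hat a)]}, \qquad m(W;\tau,\eta):=\phi_z(W;f,a)-\phi_0(W;f,a)-\tau\,\psi(W;g,a),
\end{align*}
where $\eta=(f,g,a)$ collects all nuisances and $\eta_0$ denotes their population values. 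The denominator converges in probability to $\E[\psi(W;g,a)]=\Pr(D(z)=d)$, which is bounded away from zero by Sequential Relevance and Overlap; by Slutsky it therefore suffices to analyze the scaled numerator $\sqrt n\,\E_n[m(W;\tau_d,\hat\eta)]$.

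For the numerator I would use the standard cross-fitting decomposition into three pieces. Writing $\E_n[m(W;\tau_d,\hat\eta)] = \E_n[m(W;\tau_d,\eta_0)] + (\E_n-\E)[m(W;\tau_d,\hat\eta)-m(W;\tau_d,\eta_0)] + \E[m(W;\tau_d,\hat\eta)-m(W;\tau_d,\eta_0)]$ (the last two expectations taken conditional on the auxiliary fold on which $\hat\eta$ is trained): the first term is a sum of i.i.d.\ mean-zero summands---mean-zero because $\E[m(W;\tau_d,\eta_0)]=\E[Y(D(z))]-\E[Y(D(0,0))]-\tau_d\Pr(D(z)=d)=0$ by Theorem~\ref{thm:2_identification}, using that the Riesz-correction terms have zero conditional mean at $\eta_0$---so the ordinary CLT gives $\sqrt n\,\E_n[m(W;\tau_d,\eta_0)]\to_d \mathcal N(0,\E[m^2])$, with boundedness (item~2) supplying the requisite moment. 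The second, empirical-process, term is handled by cross-fitting: conditional on the auxiliary fold its mean is zero and its variance is $n^{-1}\E[(m(W;\tau_d,\hat\eta)-m(W;\tau_d,\eta_0))^2]$, which vanishes by the mean-square-consistency hypotheses (item~3) together with the overlap-based boundedness of the Riesz representers; a conditional Chebyshev argument then makes this term $o_p(n^{-1/2})$ without any Donsker condition.

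The crux, and the main obstacle, is the third (bias) term $\sqrt n\,\E[m(W;\tau_d,\hat\eta)-m(W;\tau_d,\eta_0)]$. Here I would first verify that the Riesz-representer construction in \eqref{eqn:riesz}--\eqref{eqn:riesz-closed} renders the moment Neyman orthogonal, so that the Gateaux derivative of $\eta\mapsto\E[m(W;\tau_d,\eta)]$ at $\eta_0$ in every nuisance direction vanishes and a second-order Taylor expansion leaves only products of nuisance errors. The genuine work is then to expand this remainder for the two-level nested structure of $\phi_z$ and $\psi$ and to check that every surviving cross-term is of the form covered by the product-rate conditions in item~4---e.g.\ terms scaling like $\|\hat a_1^z-a_1^z\|_2\max\{\|\hat f_2^z-f_2^z\|_2,\|\hat g_2-g_2\|_2\}$ and their per-level analogues---so that after multiplication by $\sqrt n$ the entire remainder is $o_p(1)$. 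Combining the three pieces and dividing by the denominator yields $\sqrt n(\hat\tau_d-\tau_d)=\tfrac1{\sqrt n}\sum_i \E[\psi]^{-1}m(W_i;\tau_d,\eta_0)+o_p(1)$, which is precisely the claimed asymptotic-linearity representation and gives the stated normal limit with variance $\sigma^2$. Finally I would establish $\hat\sigma^2\to_p\sigma^2$ by the same three-way decomposition applied to the squared moment (using item~2 to pass empirical second moments to their population limits and the consistency of $\hat\tau_d$), from which the $0.95$ coverage statement follows by another Slutsky argument.
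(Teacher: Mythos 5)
Your proposal is correct and follows essentially the same route the paper takes: the paper omits the proof entirely, stating that it "follows along very similar lines as Theorem~9 of \citet{chernozhukov2022automatic}," and your three-way decomposition (CLT term at the true nuisances, cross-fitted empirical-process term, Neyman-orthogonal second-order bias term matched to the product rate conditions) is precisely that standard argument. The moment-validity step via Theorem~\ref{thm:2_identification} and the zero conditional mean of the Riesz-correction terms, as well as the Slutsky treatment of the denominator and of $\hat{\sigma}^2$, are all as intended.
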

\noindent The statement follows along very similar lines as Theorem~9 of \citep{chernozhukov2022automatic}, hence we omit its proof. This theorem allows us to easily construct confidence intervals for the LATE of interest. The asymptotic normality result also holds without sample-splitting, under more assumptions on the machine learning estimators. For instance, based on the results in \citep{belloni2017program,chernozhukov2020adversarial,chen2022debiased}, sample-splitting can be avoided if the function spaces used by the machine learning estimators have small statistical complexity as captured by the notion of the critical radius \citep{chernozhukov2022automatic} or the machine learning estimation algorithms used are leave-one-out stable \citep{chen2022debiased} or if we use particular algorithms such as the Lasso based estimators under sparsity assumptions and with a theoretically-driven regularization penalty \citep{belloni2017program}.

An identical estimator and theorem can also be developed for the quantity $\beta_{1,1}$ identified in Theorem~\ref{thm:2_identification}, by simply defining the function $g_2$ as:
\begin{align}
    g_2(H_2, Z_2) = \Pr(D\neq (0,0)\mid H_2, Z_2) = \E[\I(D\neq (0,0))\mid H_2, Z_2]
\end{align}
The rest of the results of this section follow verbatim. Similarly, under the strong assumptions presented in Equation~\eqref{eqn:strong-ass}, an asymptotically linear estimate $\hat{\tau}_{1,1}$ for the Always-Treat LATE $\tau_{11}$ can be easily constructed. In particular, note that the estimates $\hat{\tau}_{1,0}, \hat{\tau}_{0,1}, \hat{\beta}_{1,1}$ are asymptotically linear. Furthermore, the quantity $\gamma_{d,z}:=\Pr(D(z)=d)$ can also be automatically debiased, for any $d\in {\cal D}^2$ and $z\in {\cal Z}^2$, and an asymptotically linear estimate can be constructed as $\hat{\gamma}_{d,z} = \E_n[\psi(W;\hat{g}, \hat{a})]$, with $\psi$ as defined in Equation~\eqref{eqn:psi} and nuisance functions $g$ as defined in Equation~\eqref{eqn:treatment-nested} and $a$ as defined in Equation~\eqref{eqn:riesz}. Since Equation~\eqref{eqn:easy-always-treat} states that $\tau_{11}$ can be expressed in terms of all these quantities, the plug-in estimate:
\begin{align*}
    \hat{\tau}_{1,1} =~& \frac{\hat{\beta}_{1,1} \sum_{d'\leq z:d\neq (0,0)} \hat{\gamma}_{d',(1,1)}}{\hat{\gamma}_{(1,1),(1,1)}} - \hat{\tau}_{1,0} \frac{\hat{\gamma}_{(1,0), (1,1)}}{\hat{\gamma}_{(1,1),(1,1)}} - \hat{\tau}_{0,1} \frac{\hat{\gamma}_{(0,1), (1,1)}}{\hat{\gamma}_{(1,1),(1,1)}}
\end{align*}
Since all the separate estimates are asymptotically linear, the estimate $\hat{\tau}_{1,1}$ will also be asymptotically linear and its influence function representation (and therefore its variance) can be easily calculated using standard influence function calculus \citep{newey1994large}.

\subsection{Debiased Estimation of Always-Treat LATE under Staggered Compliance}\label{sec:dml-always}

A similar automatic debiased machine learning estimation and inference procedure can be developed for the always-treat LATE under Staggered Compliance. Given the notation introduced in this section, we can re-write the statistical estimand in Equation~\eqref{eqn:staggered-id}, which identifies the when-to-treat LATE as:
\begin{align*}
    \tau_{11} = \frac{\E\left[q(H_1, 1) - f_1^\0(H_1, 0)\right]}{\E\left[p(H_1, 1)\right]} 
\end{align*}
where the function $f_1^z$ as defined in the when-to-treat case and the functions $q,p$ defind as the following nested regression functions\footnote{Note here that $f_2=f_2^\1=f_2^\0$ is the regression function $\E[Y\mid H_2, Z_2]$, not $\E[Y\mid H_2, D_1]$.}
\begin{align}
    q(H_1, Z_1)  :=~& \E\left[f_2(H_2, D_1)\mid H_1, Z_1\right]\\
    p(H_1, Z_1) :=~& \E\left[D_1\mid H_1, Z_1\right]
\end{align}
where we used the short-hand $f_2=f_2^\1=f_2^\0$, since the function $f_2$ is independent of the intervention vector $z$. 
Thus $\tau_{11}$ can be viewed as the solution to a moment equation, which is linear in $\tau_{11}$ and linear in a set of nuisance functions that correspond to nested regression functions:
\begin{align*}
    \E\left[q(H_1, 1) - f_1^\0(H_1, 0) - \tau_{11} \cdot p(H_1, 1)\right] = 0
\end{align*}

The debiased version of this moment is of the form:
\begin{align*}
   \E\left[\rho(W; q, f, a, \gamma) - {\phi}_0(W;f, a)  - \tau_{11}\cdot  {\pi}(W; p, a)\right] = 0
\end{align*}
where $\phi_0$, $f$, $a$, exactly as defined in the when-to-treat LATE case and:
\begin{align*}
    \rho(W; q, f, \gamma) =~& q(H_1, 1) + a_1^\1(H_1, Z_1) (f_2(H_2, D_1) + q(H_1, Z_1)) - \gamma(H_2, Z_2) (Y - f_2(H_2, Z_2)),\\
    \pi(W;p,a) =~& p(H_1, 1) + a_1^\1(H_1, Z_1)\,(D_1 - p(H_1, Z_1)) 
\end{align*}
where $\alpha_1^\1$ is the Riesz representer of the functional $L(q) = \E[q(H_1, 1)]$ (as defined in the when-to-treat case in Equation~\eqref{eqn:riesz-closed} and Equation~\eqref{eqn:riesz}) and $\gamma$ is the Riesz representer of the functional \begin{align*}
L(f)=\E\left[\alpha_1^\1(H_1, Z_1) f(H_2, D_1)\right], \quad\quad  f: (H_2, Z_2) \to R.
\end{align*}
This Riesz representer $\gamma$ can be estimated in an automatic manner, as the solution to the risk minimization problem:
\begin{align}
    \gamma = \argmin_{\gamma \in \Gamma} \E\left[\gamma(H_2, Z_2)^2 - 2 a_1^\1(H_1, Z_1) \gamma(H_2, D_1)\right]
\end{align}
or alternatively, in a plug-in manner using the closed form characterization:
\begin{align*}
    \gamma(H_2, Z_2) =~& \frac{1\{Z_1=1\}}{\Pr(Z_1=1\mid H_1)} \frac{1\{Z_2=D_1\}}{\Pr(Z_2=D_1\mid H_1)}
\end{align*}

With the debiased and Neyman orthogonal moment at hand, we can construct an asymptotically normal estimator, by estimating the nuisance functions on a separate part using generic machine learning procedures (or in a cross-fitting manner) and then solving the empirical plug-in moment equation:
\begin{align*}
    \E_n\left[\rho(W; \hat{q}, \hat{f}, \hat{a}, \hat{\gamma}) - {\phi}_0(W;\hat{f}, \hat{a})  - \hat{\tau}_{1,1}\cdot  {\pi}(W; \hat{p}, \hat{a})\right] = 0
\end{align*}
or in closed form:
\begin{align}
    \hat{\tau}_{1,1} = \frac{\E_n\left[\rho(W; \hat{q}, \hat{f}, \hat{a}, \hat{\gamma}) - {\phi}_0(W;\hat{f}, \hat{a})\right]}{\E_n\left[{\pi}(W; \hat{p}, \hat{a})\right]}
\end{align}
Under assumptions analogous to Theorem~\ref{thm:inference}, one can show that this estimate satisfies the same asymptotic normality, asymptotic linearity and confidence interval construction statements. For example, a 95\% confidence interval can be constructed as:
\begin{align*}
\tau_{11} \in~& [\hat{\tau}_{1,1} \pm 1.96 \hat{\sigma} n^{-1/2}], &
    \hat{\sigma}^2 =~& \E_n\left[\pi(W; \hat{p}, \hat{a})\right]^{-2} \E_n\left[\left\{\rho(W; \hat{q}, \hat{f}, \hat{a}, \hat{\gamma}) - \phi_\0(W; \hat{f}, \hat{a}) - \hat{\tau}_{1,1}\cdot   \pi(W; \hat{p}, \hat{a})\right\}^2\right]
\end{align*}

\section{Dynamic LATEs with Multiple Periods}

In the many period setting, we observe sequences of random variables $\{S_{t-1}, Z_t, D_t\}_{t=1}^T$ and a final outcome $Y$ for each unit. We write the realized states as $s=s_{0:T-1} \in \mathcal{S}^T$, encouragements as $z=z_{1:T}, z'=z_{1:T}' \in \mathcal{Z}^T$, treatments as $d=d_{1:T} \in \mathcal{D}^T$, and long term outcomes as $y \in \mathcal{Y}$. We will denote the set of all observed random states, encouragements and treatments as $S=S_{0:T-1}$, $Z=Z_{1:T}$ and $D=D_{1:T}$.
States $S = S_{0:T-1}$ are sets of observed random variables that may contain short term outcomes, time varying covariates, covariates prior to the experiment. We will also use the notational convention 
\begin{align*}
d_{0}, z_{0}, s_{-1}, D_{0}, Z_{0}, S_{-1}=\emptyset, \text{ and } S_{T}\equiv Y,
\end{align*}
i.e. the outcome can be thought as the state at the end of the last period. We assume that the observed variables adhere to a nonparametric structural causal model that satisfies the exclusion restrictions in Assumption \ref{assume:scm}, which is the natural extension of the causal graph presented in Figure~\ref{fig:dag_1} to many periods.

\begin{assumption}[Counterfactuals, Consistency and Exclusion Restrictions]\label{assume:scm_many} We assume the existence of a set of primitive counterfactual random processes that satisfy the exclusion restrictions encoded in the graph of Figure~\ref{fig:dag_1}, i.e.,
\begin{align}\label{eqn:counterfactuals_many}
\left\{S_0(.), \{Z_t(z_{<t}, d_{<t}, s_{<t}), D_t(z_{\leq t}, d_{<t}, s_{<t}), S_{t}(d_{\leq t}, s_{<t})\}_{t=1}^T\mid s\in {\cal S}^T, z\in {\cal Z}^T, d\in {\cal D}^T\right\}.
\end{align}
The observed variables $\{Z_t, D_t, S_{t}\}_{t=1}^T$ are generated by first drawing the counterfactual random processes from some fixed distribution, independently across units, and then recursively evaluating them, i.e., $S_0 := S_(.)$ and for all $t\in \{1,\ldots, T\}$
\begin{align*}
Z_{t} :=~& Z_{t}(Z_{<t}, D_{<t}, S_{<t})\\
D_{t} :=~& D_{t}(Z_{\leq t}, D_{<t}, S_{<t})\\
S_{t} :=~& S_{t}(D_{\leq t}, S_{<t})
\end{align*}
\end{assumption}
\noindent
That is, for each time period $t > 0$, an encouragement $Z_t$ is allowed to be given based on prior encouragements, treatments, and states; treatment $D_t$ is allowed to be a function of any and all prior observed random variables; and states may only be functions of prior treatments and states. Since the counterfactual random variables are identically and independently distributed across units, we implicitly assume SUTVA.

Definition~\ref{defn:int-cnt} of intervention counterfactuals also applies directly to the many period setting. Moreover, these counterfactuals satisfy the consistency property in Assumption~\ref{assume:2_consistency}.
Analogous to Definition~\ref{defn:central-cnt}, we will also use the shorthand notation $Y(d)$ and $D(z)$, for $d\in {\cal D}^T$, $z\in {\cal Z}^T$, for the following intervention counterfactuals:
\begin{align}
Y(d)\equiv~& Y(D\to d) & 
D(z)\equiv~& D(Z\to z).
\end{align}

We now define the set of minimal properties that we require for our counterfactual processes, as defined in Assumption~\ref{assume:scm_many}, for our identification argument. Our identification result will be applicable to any structural causal model that implies these properties. For instance, these properties are satisfied for many period analogue of the structural causal model associated with the graph in Figure~\ref{fig:dag_1} under the FFRCISTG model. 

To define our assumptions, it will be helpful to think of the process as a set of $T$ \emph{episodes}. At each episode $t$, first the instrument $Z_t$ gets determined based on the history prior to the episode, then the treatment $D_t$ is determined based on the instrument $Z_t$ and the history prior to the episode and then the state $S_t$ is determined based on the treatment $D_t$ and the history prior to the episode (excluding prior encouragements). For this reason it is helpful to introduce shorthand notation for the random variable $H_t$ that corresponds to the history of observations prior to epsidode $t$ and a realization $h_t$ of that history, as:
\begin{align}
    H_t :=~& \{Z_{<t}, D_{<t}, S_{<t}\}, & h_t :=~& \{z_{<t}, d_{<t}, s_{<t}\}
\end{align}
Then note that we can define the counterfactual processes and the recursive evaluation as:
\begin{align}
    \left\{S_0(.), \{Z_{t}(h_{t}), D_{t}(z_t, h_{t}), S_{t}(d_{\leq t}, s_{< t})\}_{t=1}^T\mid s\in {\cal S}^T, z\in {\cal Z}^T, d\in {\cal D}^T\right\},
\end{align}
with the convention that $S_{T}(d_T, h_T)\equiv Y(d_T, h_T)$ and for $t\in \{1,\ldots, T\}$:
\begin{align}
    Z_t :=~& Z_t(H_t), & D_t :=~& D_t(Z_t, H_t) & S_t :=~& S_t(D_{\leq t}, S_{<t})
\end{align}
with the convention that $S_T\equiv Y$.

Given these definitions, we can now state the assumptions that enable our main identification results. These assumptions are direct analogues of the two-period assumptions, therefore we omit their interpretation and refer the reader to the corresponding interpretation of each assumption that we presented in the two period setting.

\begin{assumption}[Sequential Ignorability of Instruments]\label{assume:2_ignorability_many}
For every $z\in {\cal Z}^T$ and every $t \in \{1,\ldots, T\}$:
\begin{align*}
    \{Y(D_{< t}, D_{\geq t}(Z_{<t}, z_{\geq t})), D_{\geq t}(Z_{<t}, z_{\geq t})\} \cindep& Z_t \mid H_t
\end{align*}
\end{assumption}

\begin{assumption}[Sequential Overlap and Relevance]\label{assume:2_overlap_many}
For all $z\in {\cal Z}^T$ and for all $t\in \{1,\ldots, T\}$:
\begin{align*}
\Pr\{Z_t=z_t \mid H_t\} >~& 0
\end{align*}
and if $z_t=1$ then:
\begin{align*}
\Pr\{D_t(z_{\leq t}) = 1 \mid H_t\} >~& 0
\end{align*}
\end{assumption}

\begin{assumption}[One Sided Noncompliance]\label{assume:2_one_sided_many}
$\Pr\{D_t(z_{\leq t}) \leq z_t\}=1$ for all $z\in {\cal Z}^T$.
\end{assumption}

Our target estimands will be Dynamic Local Average Treatment Effects, which can be analogously defined for many periods:
\begin{definition}[Dynamic Local Average Treatment Effect (LATE)]\label{def:dyn_late_many} For any $z\in {\cal Z}^T$ and $d\in {\cal D}^T$ we define the dynamic LATE as:
\begin{align}
    \theta(z,d) = \E[Y(d) - Y(0,0)\mid D(z)=d]
\end{align}
i.e., the average treatment effect of treatment intervention vector $d$, among the subset of the population who chooses treatment vector $d$, when recommended vector $z$. When $z=d$, we will use the short-hand notation 
\begin{align}
    \tau_d = \theta(d,d)
\end{align}
\end{definition}

\subsection{Identification and Estimation of When-to-Treat LATEs}\label{sec:dml_many}

Given these assumptions, we can now present our main identification result for When-to-Treat LATEs.
\begin{theorem}[Identification of Dynamic When-to-Treat LATEs]\label{thm:2_identification_many}
Assume
 \ref{assume:scm_many}, 
 \ref{assume:2_ignorability_many}, \ref{assume:2_overlap_many}, \ref{assume:2_one_sided_many}.
If $z=d\in\{(\0_{<t},1,\0_{>t}): t\in \{1, \ldots, T\}\}$, then: 
\begin{align*}
\tau_d = \theta(z,d) := \E[Y(d) - Y(\0) &\mid D(z)=d]
 =\frac{
    \mathbb{E}[Y(D(z)) ]
    - \mathbb{E}[Y(D(\0)) ]
  }{\Pr\{ D(z)=d \}
  }.
\end{align*}
Moreover, for all $z \in \{0,1\}^T$, we can similarly write: 
\begin{align*}
\beta_z := \mathbb{E}[Y(D(z)) - Y(\0) &\mid D(z) \neq \0]
 =\frac{
    \mathbb{E}[ Y(D(z)) ]
    - \mathbb{E}[ Y(D(\0)) ]
  }{1 - \Pr\{ D(z)=\0 \}
  }.
\end{align*}
Moreover, for any $d\in {\cal D}^T$ and $z\in {\cal Z}^T$ the counterfactual average $\E[Y(D(z))]$ is identified recursively as: 
\begin{align*}
\E[Y(D(z))]
=~&\E[f_1^{z}(H_1, z_1)]\\
\forall t\in \{1,\ldots, T-1\}: f_t^z(H_t, Z_t) :=~& \E[f_{t+1}^z(H_{t+1}, z_{t+1})\mid H_t, Z_t]\\
f_{T}^z(H_T, Z_T) :=~& \E[Y\mid H_T, Z_T]
\end{align*}
and the counterfactual probability $\Pr(D(z)=d)$ is identified recursively as:
\begin{align*}
\Pr(D(z)=d)
=~& \E[g_1^z(H_1, z_1)]\\
\forall t\in \{1,\ldots, T-1\}: g_t^z(H_t, Z_t) :=~& \E[g_{t+1}^z(H_{t+1}, z_{t+1})\mid H_t, Z_t]\\
g_{T}^z(H_T, Z_T) :=~& \Pr(D=d\mid H_T, Z_T)
\end{align*}
\end{theorem}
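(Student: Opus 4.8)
The plan is to separate the statement into two logically independent pieces: the two ratio identities for $\tau_d$ and $\beta_z$, which I would derive from One Sided Noncompliance (Assumption~\ref{assume:2_one_sided_many}) together with the consistency property alone; and the recursive (g-computation) identification of the counterfactual means $\E[Y(D(z))]$ and probabilities $\Pr(D(z)=d)$, which rely on Sequential Ignorability (Assumption~\ref{assume:2_ignorability_many}) and the tower property. One Sided Noncompliance is needed only for the ratio identities, while the recursive formulas hold for arbitrary $d,z$.

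For the ratio identities, One Sided Noncompliance gives $D(\0)=\0$ almost surely, so $\E[Y(D(\0))]=\E[Y(\0)]$ by consistency. In the when-to-treat case $z=d=(\0_{<t},1,\0_{>t})$, One Sided Noncompliance forces every coordinate $s\neq t$ of $D(z)$ to be zero, so $D(z)\in\{\0,d\}$; splitting $\E[Y(D(z))-Y(\0)]$ over these two events and invoking consistency ($Y(D(z))=Y(\0)$ on $\{D(z)=\0\}$ and $Y(D(z))=Y(d)$ on $\{D(z)=d\}$) collapses the numerator to $\E[Y(d)-Y(\0)\mid D(z)=d]\,\Pr(D(z)=d)=\tau_d\,\Pr(D(z)=d)$, which yields the first display. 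Sequential Relevance in Assumption~\ref{assume:2_overlap_many} guarantees the denominator is strictly positive, so the ratio is well defined. The $\beta_z$ identity follows identically by splitting over $\{D(z)=\0\}$ versus $\{D(z)\neq\0\}$, since the contrast vanishes on the former event.

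For the recursion identifying $\E[Y(D(z))]$, I would run a backward induction over episodes on the partially intervened counterfactual $W_t := Y\bigl(D_{<t},\,D_{\geq t}(Z_{<t},z_{\geq t})\bigr)$, which is precisely the object appearing in Assumption~\ref{assume:2_ignorability_many}; note $W_1=Y(D(z))$ and $W_{T+1}=Y$ by consistency. The inductive claim is $f_t^z(H_t,z_t)=\E[W_t\mid H_t,Z_t=z_t]$. Each step uses three facts: (i) Sequential Ignorability gives $\E[W_t\mid H_t]=\E[W_t\mid H_t,Z_t=z_t]$; (ii) on $\{Z_t=z_t\}$ consistency makes $D_t(Z_{<t},z_t)=D_t$, so the treatment vectors defining $W_t$ and $W_{t+1}$ coincide and $W_t=W_{t+1}$; (iii) the tower property over $H_{t+1}=(H_t,Z_t,D_t,S_t)$, combined with a \emph{second} application of ignorability at period $t+1$ to pass from $\E[\,\cdot\mid H_{t+1}]$ to $\E[\,\cdot\mid H_{t+1},Z_{t+1}=z_{t+1}]$, which reproduces the paper's recursion $f_t^z(H_t,z_t)=\E[f_{t+1}^z(H_{t+1},z_{t+1})\mid H_t,Z_t=z_t]$. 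The base case $t=T$ reduces to $\E[Y\mid H_T,Z_T=z_T]=f_T^z(H_T,z_T)$ after one further consistency step. Taking expectations at $t=1$ and applying ignorability once more yields $\E[f_1^z(H_1,z_1)]=\E[Y(D(z))]$. The probability recursion is then proved verbatim with $Y$ replaced by $\I\{D=d\}$ and $W_t$ by $U_t:=\I\{D_{<t}=d_{<t}\}\,\I\{D_{\geq t}(Z_{<t},z_{\geq t})=d_{\geq t}\}$, observing that the first factor is $H_t$-measurable and the second is a function of $D_{\geq t}(Z_{<t},z_{\geq t})$, again governed by Assumption~\ref{assume:2_ignorability_many}.

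The main obstacle I anticipate is the bookkeeping in the consistency swap and the placement of $Z_{t+1}$. Because $Z_{t+1}$ is not contained in $H_{t+1}=(Z_{\leq t},D_{\leq t},S_{\leq t})$, the naive tower step produces $\E[W_{t+1}\mid H_{t+1}]$ rather than the conditional-on-$Z_{t+1}=z_{t+1}$ quantity that $f_{t+1}^z(\cdot,z_{t+1})$ requires, so closing the gap needs the carefully justified second application of Sequential Ignorability at episode $t+1$. Verifying that the intervention vectors in $W_t$ and $W_{t+1}$ genuinely coincide on $\{Z_t=z_t\}$ --- including the downstream propagation through the states $S_{\geq t}$, which is absorbed automatically because $Y(\cdot)$ is the full treatment-intervention counterfactual --- is the step most prone to error, and the one I would write out most explicitly.
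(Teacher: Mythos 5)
Your proposal is correct and follows essentially the same route as the paper: the ratio identities are obtained by noting that One Sided Noncompliance forces $D(\0)=\0$ a.s.\ and $D(z)\in\{\0,d\}$ for when-to-treat $z$, so the contrast vanishes off the complier event, and the recursive formulas are exactly the paper's g-formula backward induction (Lemma~\ref{lem:g-formula_many}) applied to $Y$ and to $\I(D=d)$, with the same ignorability--consistency--tower-rule steps you describe. The only differences are cosmetic: your induction variable $W_t$ intervenes on $Z_t$ while the paper's $V(Z_{\leq t},z_{>t})$ does not (they coincide on the conditioning event $\{Z_t=z_t\}$ by consistency), and you make explicit the measurability argument for the indicator factorization that the paper leaves implicit.
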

An analogue of Theorem~\ref{thm:2_identification_hetero} for the identification of heterogeneous dynamic When-to-Treat LATEs can also be proven in the multiple period setting, but we omit it for conciseness. In this case the heterogeneous dynamic When-to-Treat LATE would be identified as:
\begin{align}
    \tau_d(S_0) = \frac{\E[Y(D(z))\mid S_0] - \E[Y(D(\0))\mid S_0]}{\Pr\{D(z)=d\mid S_0\}}
\end{align}
with $\E[Y(D(z))\mid S_0] = f_1^z(S_0, z_1)$ and $\Pr(D(z)=d\mid S_0)=g_1^z(S_0, z_1)$ for any $z\in {\cal Z}^T$ and $d\in {\cal D}^T$.

\paragraph{Automatically Debiased Estimation and Inference}

Analogous to the two period setting, the When-to-Treat LATE $\tau_d$, for some given $z=d\in\{(\0_{<t},1,\0_{>t}): t\in \{1, \ldots, T\}\}$, identified in Theorem~\ref{thm:2_identification_many} can be estimated in an automatically debiased manner, enabling confidence interval construction even when generic machine learning estimators are used to estimate the various nuisance functions that are involved. 

To define the automatic debiasing procedure we can introduce recursively defined Riesz representers. In particular, the function $a_1^z(H_1, Z_1)$ is the Riesz representer of the linear functional $L_1(q)=\E[q(H_1, z_1)]$ and for $t\in \{2, \ldots, T\}$, the function $a_t^z(H_t, Z_t)$ is defined recursively as the Riesz representer of the linear functional $L_t(q)=\E[q(H_t, z_t) a_{t-1}^z(H_{t-1}, Z_{t-1})]$. Using the convenction that $a_0^z(H_0, Z_0)=1$, these can be estimated via recursive empirical risk minimization of the loss functions:
\begin{equation}\label{eqn:riesz_many}
\begin{aligned}
    \forall t\in \{1,\ldots, T\}: a_t^z =~& \argmin_{a\in A_t} \E[a(H_t, Z_t)^2 - 2 a_{t-1}^z(H_{t-1}, Z_{t-1})\,a(H_t, z_t)]
\end{aligned}
\end{equation}
\cite{chernozhukov2022automatic} provide mean-squared-error guarantees for this recursive estimation procedure. Alternatively, these Riesz representers can be characterized in closed form as the following recursively defined propensity ratios: 
\begin{align}
    \forall t\in \{1,\ldots, T\}: a_t^z(H_t, Z_t) =~& a_{t-1}^z(H_{t-1}, Z_{t-1}) \cdot \frac{1\{Z_t=z_t\}}{\Pr\{Z_t=z_t\mid H_t\}} 
\end{align}
and estimated in a plug-in manner, by first estimating the propensities that appear in the denominators, via generic ML classification approaches.

Letting $W=(S, Z, D, Y)$ and $f=\{f_t^z, f_t^\0\}_{t=1}^T, f_2\}$ and $g=\{g_t\}_{t=1}^T$ and $a=\{a_t^z, a_t^\0\}_{t=1}^T$, we can define a debiased moment equation that the target parameter $\tau_d$ must satisfy, as:
\begin{align}
    \E[{\phi}_z(W; f, a) - {\phi}_\0(W; f, a) - \tau_d\cdot  {\psi}(W; g, a)] = 0
\end{align}
where
\begin{equation*}
\begin{aligned}
    {\phi}_z(W; f,a) =~& {f}_1^z(H_1, z_1) + \sum_{t=1}^{T-1} a_t^z(H_t, Z_t) ({f}_{t+1}^z(H_{t+1}, z_t) - {f}_t^z(H_t, Z_t)) + a_T^z(H_T, Z_T) (Y - {f}_t^z(H_t, Z_t))\\
    {\psi}(W; g, a) =~& {g}_1(H_1, z_1) + \sum_{t=1}^{T-1} a_t^z(H_t, Z_t) (g_{t+1}(H_{t+1}, z_{t+1}) - {g}_t(H_t, Z_t)) + {a}_T^z(H_T, z_T)(\I(D=d) - {g}_T(H_T, Z_T))
\end{aligned}
\end{equation*}
Following identical arguments as in Theorem~6 of \cite{chernozhukov2022automatic}, this moment equation is Neyman orthogonal with respect to all the nuisance functions $f, g, a$ that it depends on. Thus we can invoke the general debiased machine learning estimation paradigm.

Given any ML estimates $\hat{f}, \hat{g}, \hat{a}$, constructed on a separate sample (or in a cross-fitting manner), we can construct a debiased estimate of $\tau_d$ as the solution to the empirical plug-in analogue of the Neyman orthogonal moment equation, with respect to $\hat{\tau}_d$, i.e.:
\begin{align*}
    \E_n\left[{\phi}_z(W;\hat{f}, \hat{a}) - {\phi}_\0(W;\hat{f}, \hat{a}) - \hat{\tau}_d\cdot  {\psi}(W; \hat{g}, \hat{a})\right] = 0
\end{align*}
which takes the form:
\begin{align}\label{eqn:estimate_many}
    \hat{\tau}_d = \frac{\E_n[{\phi}_z(W;\hat{f}, \hat{a}) - {\phi}_\0(W;\hat{f}, \hat{a})]}{\E_n[{\psi}(W; \hat{g}, \hat{a})]}
\end{align}
Under assumptions analogous to Theorem~\ref{thm:inference}, one can show that this estimate satisfies the same asymptotic normality, asymptotic linearity and confidence interval construction statements. For instance, a 95\% confidence interval can be constructed as:
\begin{align*}
\tau_d \in~& [\hat{\tau}_d \pm 1.96 \hat{\sigma} n^{-1/2}], &
    \hat{\sigma}^2 =~& \E_n\left[\psi(W; \hat{g}, \hat{a})\right]^{-2} \E_n\left[\left\{\phi_z(W; \hat{f}, \hat{a}) - \phi_\0(W; \hat{f}, \hat{a}) - \hat{\tau}_d\cdot   \psi(W; \hat{g}, \hat{a})\right\}^2\right]
\end{align*}
For computational improvements, we remark that the functions $f_t^z$, only depend on the $z$ values after period $t$, i.e. $z_{>t}$, and the functions $a_t^z$ only depend on the $z$ values prior to, and including, period $t$, i.e. $z_{\leq t}$. Hence, these models can be shared if $z_{\leq t}=\0_{\leq t}$ or $z_{>t}=\0_{>t}$.

\subsection{Always-Treat LATE under Staggered Compliance}\label{sec:always-many}

We finally provide an extension of the two period identification result of the Always-Treat LATE under further restrictions, to many periods. We only provide such an extension under the stronger Staggered Compliance setting, 
\begin{assumption}[Staggered Compliance]\label{assume:staggered} If a unit is encouraged to be treated and complies with the recommendation in the first period, then they deterministically choose treatment in all subsequent periods if they are encouraged to do so, i.e.,
\begin{align}
    \Pr(D_{>1}(\1)=\1_{>1}\mid D_1=1, Z_1=1, S_0) = 1, ~~~ \text{ a.s. }
\end{align}
\end{assumption}
\noindent Note that this assumption states that if you were encouraged to take the treatment in the first period and you chose to do so, then you will take the treatment in every subsequent period, if you were also encouraged to do so. The setting of Staggered Adoption can be thought of as a special case of the above assumption, where the instruments after a treatment compliance event are thought to be set to $1$, in which case whenever a unit starts the treatment, they remain on the treatment in the future.

We also utilitize the following sequential ignorability property, which is also implied by the structural causal model that corresponds to the many period generalization of Figure~\ref{fig:dag_1}, under the FFRCISTG model.

\begin{assumption}[Sequential Ignorability of Instruments with Joint Instrument and Treatment Interventions]\label{assume:3_ignorability_many}
For every $z\in {\cal Z}^T$ and $d\in {\cal D}^T$
\begin{align*}
    \{Y(d), D_{1}(z_1)\} \cindep& Z_1 \mid H_1
\end{align*}
\end{assumption}

\begin{theorem}[Always-Treat LATE under Staggered Compliance]\label{thm:late11_id_many}
    Under Assumptions~\ref{assume:scm_many}, 
 \ref{assume:2_ignorability_many}, \ref{assume:2_overlap_many}, \ref{assume:2_one_sided_many}, \ref{assume:staggered} and \ref{assume:3_ignorability_many}, the Always-Treat LATE and the Always-Treat conditional LATE defined below
    \begin{align}
        \tau_{\1} :=~& \E[Y(\1) - Y(\0)\mid D(\1)=\1]\\
        \tau_{\1}(S_0):=~& \E[Y(\1) - Y(\0)\mid D(\1)=\1, S_0]
    \end{align}
    are identified as:
    \begin{align*}
    \tau_\1 = \frac{\E\left[f_1^{\1}(S_0,1) - f_1^{\0}(S_0, 0) - \E\left[f_2^{\1}(H_2, 1) - f_2^{\0}(H_2,0)\mid Z_1=1, D_1=0, S_0\right]\Pr(D_1=0\mid Z_1=1, S_0)\right]}{\E\left[\Pr(D_1=1\mid Z_1=1, S_0)\right]}\\
        \tau_\1(S_0) = \frac{f_1^{\1}(S_0,1) - f_1^{\0}(S_0, 0) - \E\left[f_2^{\1}(H_2, 1) - f_2^{\0}(H_2,0)\mid Z_1=1, D_1=0, S_0\right]\Pr(D_1=0\mid Z_1=1, S_0)}{\Pr(D_1=1\mid Z_1=1, S_0)}
    \end{align*}
with the functions $f_t^{z}$ for $z\in \{\0,\1\}$ as recursively defined as in Theorem~\ref{thm:2_identification_many}.
\end{theorem}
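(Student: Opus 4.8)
The plan is to mirror the two-period argument behind Theorem~\ref{thm:late11_id} and Lemma~\ref{lem:staggered}, exploiting the fact that Staggered Compliance collapses the set of always-takers to a single first-period compliance event. First I would show that under Assumptions~\ref{assume:2_one_sided_many} and \ref{assume:staggered} the event $\{D(\1)=\1\}$ coincides almost surely with $\{D_1(1)=1\}$: if a unit complies in period one under full encouragement, Staggered Compliance forces it to take every subsequent treatment, so $D(\1)=\1$; conversely, if $D_1(1)=0$, then by One Sided Noncompliance $D_1(\1)=0$ and $D(\1)\neq\1$. Consequently $\tau_\1=\E[Y(\1)-Y(\0)\mid D_1(1)=1]$, and the denominator $\Pr(D(\1)=\1)=\Pr(D_1(1)=1)$ is identified, via the ignorability in Assumption~\ref{assume:3_ignorability_many} and consistency, as $\E[\Pr(D_1=1\mid Z_1=1,S_0)]$, which matches the claimed denominator.

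Next I would connect the target numerator $\E[(Y(\1)-Y(\0))\I(D_1(1)=1)\mid S_0]$ to the identifiable When-to-Treat counterfactual averages. Since $D(\0)=\0$ almost surely under One Sided Noncompliance, $Y(D(\0))=Y(\0)$; partitioning the population by the first-period compliance event and using $\{D(\1)=\1\}=\{D_1(1)=1\}$ yields the decomposition in which the difference of identifiable averages equals the desired numerator plus a correction term attributable to first-period non-compliers:
\begin{align*}
\E[Y(D(\1))\mid S_0]-\E[Y(D(\0))\mid S_0]=~&\E[(Y(\1)-Y(\0))\I(D_1(1)=1)\mid S_0]\\
~&+\E[(Y(D(\1))-Y(\0))\I(D_1(1)=0)\mid S_0].
\end{align*}
By Theorem~\ref{thm:2_identification_many} the two averages on the left are $f_1^{\1}(S_0,1)$ and $f_1^{\0}(S_0,0)$, so it remains to identify the correction term and divide through by the denominator.

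The crux, and the step I expect to be the main obstacle, is showing that the correction term equals $\E[f_2^{\1}(H_2,1)-f_2^{\0}(H_2,0)\mid Z_1=1,D_1=0,S_0]\,\Pr(D_1=0\mid Z_1=1,S_0)$. I would handle the two halves separately. For the $Y(D(\1))$ half, sequential ignorability at $t=1$ (Assumption~\ref{assume:2_ignorability_many}) gives $\{Y(D(\1)),D_1(1)\}\cindep Z_1\mid S_0$, so I may condition on $Z_1=1$ and replace $D_1(1)$ by the observed $D_1$ via consistency; the continuation $\E[Y(D(\1))\mid H_2,Z_1=1,D_1=0]$ then equals $f_2^{\1}(H_2,1)$ by the same forward recursion that proves Theorem~\ref{thm:2_identification_many}, now run from period two onward on the restricted sub-population (introducing the conditioning on $Z_2=1$ through sequential ignorability at $t=2$ and recursing). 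For the $Y(\0)$ half, Assumption~\ref{assume:3_ignorability_many} lets me condition on $Z_1=1$, and on the event $D_1=0$ the first-period treatment already matches the $\0$ intervention, so One Sided Noncompliance (future zero instruments force zero treatments) together with the recursive definition yields $\E[Y(\0)\mid H_2,Z_1=1,D_1=0]=f_2^{\0}(H_2,0)$. The delicate points are justifying that the exclusion restriction makes these continuation expectations insensitive to the realized value of $Z_1$ recorded in $H_2$, and that the nested regressions $f_2^{\1},f_2^{\0}$, which recursively encode the entire period-$3$-through-$T$ path, correctly compute the respective continuation counterfactuals on the sub-population.

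Finally I would assemble the pieces: substituting the identified correction term into the decomposition gives the stated conditional formula for $\tau_\1(S_0)$, and integrating the numerator against the distribution of $S_0$ while dividing by $\E[\Pr(D_1=1\mid Z_1=1,S_0)]$ gives the unconditional $\tau_\1$. Throughout, every passage between counterfactual continuation outcomes and the nested regressions $f_t^z$ reuses the recursive identification machinery of Theorem~\ref{thm:2_identification_many} verbatim, so the genuinely new content is confined to the first-period compliance collapse of the first step and the correction-term identification of the third paragraph.
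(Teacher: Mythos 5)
Your proposal is correct and follows essentially the same route as the paper: both collapse the event $\{D(\1)=\1\}$ to the first-period compliance event $\{D_1(1)=1\}$ via Staggered Compliance and One Sided Noncompliance, both express the numerator as the difference of the identifiable counterfactual means $f_1^{\1}(S_0,1)-f_1^{\0}(S_0,0)$ minus a correction term for first-period non-compliers, and both identify that correction term by the same combination of Assumptions~\ref{assume:2_ignorability_many} and \ref{assume:3_ignorability_many}, Lemma~\ref{lem:alternative-ignorability}, and the recursive g-formula of Lemma~\ref{lem:g-formula_many} (your rearrangement via the indicator $\I(D_1(1)=1)$ versus the paper's conditioning on $D_1(1)=1$ followed by a law-of-total-probability expansion is an algebraically equivalent reshuffling that the paper itself uses in its discussion of the formula). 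The only detail you gloss over is that invoking Staggered Compliance --- which is stated conditional on the \emph{observed} event $D_1=1, Z_1=1$ --- for the \emph{counterfactual} conditioning event $D_1(1)=1$ requires first applying sequential ignorability together with Lemma~\ref{lem:alternative-ignorability} to insert the conditioning on $Z_1=1$ and then consistency to replace $D_1(1)$ by $D_1$, a step the paper makes explicit.
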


Interpreting the formula for $\tau_1$, we see that without the negative correction term, the first term roughly coincides to the quantity:
\begin{align}
    \frac{\E\left[Y(D(\1)) - Y(D(\0))\right]}{\Pr(D_1(1)=1)}
\end{align}
This would have been the LATE, if there was only one compliance choice in the data and $D(\1)=\1$ if $D_1(1)=1$ and $D(\1)=\0$ otherwise. Note that in this case, due to the fact in the event that $D_1=0$, the future treatments are deterministically $0$ and due to the exclusion restriction, the states and outcomes only depend on the instruments through the treatments, in this scenario, we would have that $f_2^{\1}(H_2,1) = f_2^{\0}(H_2,1)$ a.s., conditional on $D_1=0$. Hence, the correction term would cancel. However, in the data that we hypothesize, units that did not comply in the first period can potentially comply in subsequent periods. We only assume Staggered Compliance. In this case, future instruments can affect future treatments, which subsequently affect states and outcome, making $f_2^{\1}(H_2, 1) \neq f_2^{\0}(H_2, 0)$. In this case, the correction term, essentially removes from the vanilla term above, the bias in the vanilla effect calculation induced by these future compliers. 

Analogous to Lemma~\ref{lem:staggered}, the identification formula in Theorem~\ref{thm:late11_id_many} for $\tau_\1$, can be further simplified as:
\begin{lemma}\label{lem:staggered-many}
Under the same assumptions as in Theorem~\ref{thm:late11_id_many}, the always-treat LATE can be identified as prescribed by the following lemma.
    \begin{align}
        \tau_\1 = \frac{\E\left[\E\left[f_2^{D_1\cdot \1}(H_2, D_1)\mid H_1, Z_1=1\right] - f_1^{\0}(H_1, 0)\right]}{\E[\Pr(D_1=1\mid Z_1=1, H_1)]}
    \end{align}
\end{lemma}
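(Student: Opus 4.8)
The plan is to derive Lemma~\ref{lem:staggered-many} directly from the identification formula already established in Theorem~\ref{thm:late11_id_many}, by a purely algebraic simplification of the numerator; no new probabilistic content or further invocation of the structural assumptions is required beyond what the theorem already uses. Since the denominators of the two formulas coincide (both equal $\E[\Pr(D_1=1\mid Z_1=1, H_1)]$ with $H_1=S_0$) and both numerators contain the common summand $-f_1^\0(H_1,0)$, it suffices to prove the equality of the remaining numerator terms, namely
\begin{align*}
\E\left[f_1^\1(S_0,1) - \E[f_2^\1(H_2,1) - f_2^\0(H_2,0)\mid Z_1=1, D_1=0, S_0]\,\Pr(D_1=0\mid Z_1=1, S_0)\right] = \E\left[\E[f_2^{D_1\cdot\1}(H_2,D_1)\mid H_1, Z_1=1]\right].
\end{align*}

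First I would unpack the case-split notation on the right: since $f_2^{D_1\cdot\1}(H_2,D_1)$ equals $f_2^\1(H_2,1)$ on $\{D_1=1\}$ and $f_2^\0(H_2,0)$ on $\{D_1=0\}$, its inner conditional expectation decomposes as $\E[f_2^\1(H_2,1)\,\1\{D_1=1\}\mid S_0,Z_1=1] + \E[f_2^\0(H_2,0)\,\1\{D_1=0\}\mid S_0,Z_1=1]$. Next I would work pointwise in $S_0$ on the left-hand side, using the recursive definition of the nested regression from Theorem~\ref{thm:2_identification_many}, namely $f_1^\1(S_0,1)=\E[f_2^\1(H_2,1)\mid S_0,Z_1=1]$, and then splitting this conditional expectation along $D_1\in\{0,1\}$ by the tower property.

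The \emph{key step} is the cancellation that follows: the $D_1=0$ piece of $f_1^\1(S_0,1)$, namely $\E[f_2^\1(H_2,1)\mid S_0,Z_1=1,D_1=0]\,\Pr(D_1=0\mid S_0,Z_1=1)$, is exactly the negative correction term carrying $f_2^\1$, so the two annihilate. What remains is the $D_1=1$ piece $\E[f_2^\1(H_2,1)\,\1\{D_1=1\}\mid S_0,Z_1=1]$ together with the retained positive part of the correction, $\E[f_2^\0(H_2,0)\mid S_0,Z_1=1,D_1=0]\,\Pr(D_1=0\mid S_0,Z_1=1)=\E[f_2^\0(H_2,0)\,\1\{D_1=0\}\mid S_0,Z_1=1]$; together these reproduce exactly the two-term decomposition of the right-hand integrand obtained in the previous step. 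Taking the outer expectation over $S_0$ then gives equality of numerators, and hence of the two expressions for $\tau_\1$.

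I do not anticipate a genuine obstacle, since this identity holds for arbitrary functions $f_2^\1,f_2^\0$ and relies only on the recursive definition of $f_1^\1$ and the tower property; the sole point requiring care is the bookkeeping of the conditioning events and correctly reading the superscript-as-intervention-vector convention in $f_2^{D_1\cdot\1}$. Specializing the same argument to $T=2$ recovers Lemma~\ref{lem:staggered}.
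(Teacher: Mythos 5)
Your proposal is correct and follows essentially the same route as the paper: starting from the formula of Theorem~\ref{thm:late11_id_many}, rewriting the correction term as an expectation against the indicator $1-D_1$ (equivalently $\1\{D_1=0\}$), using $f_1^{\1}(H_1,1)=\E[f_2^{\1}(H_2,1)\mid H_1, Z_1=1]$ to cancel the $f_2^{\1}$ part on the event $D_1=0$, and recognizing the remaining $f_2^{\1}(H_2,1)\,D_1 + f_2^{\0}(H_2,0)\,(1-D_1)$ as $f_2^{D_1\cdot\1}(H_2,D_1)$. The paper merges the terms where you decompose them, but the algebra and the key cancellation are identical.
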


\begin{remark}[Interpretation as Local Effect of Adaptive Encouragement Policy]
The numerator in the formula presented in Lemma~\ref{lem:staggered-many} has the following interpretation. Suppose that we want to evaluate the following counterfactual dynamic encouragement policy: in the first period encourage treatment (i.e. $Z_1=1$) and in the subsequent periods encourage treatment (i.e., $Z_t=1$) if the unit took the treatment in the first period, otherwise encourage no treatment (i.e., $Z_t=0$). Let $\{\pi_t\}_{t=1}^T$, denote this adaptive policy, where $\pi_t: {\cal H}_t \to {\cal Z}$, maps the information prior to episode $t$, to an encouragement $Z_t$. For convenience of notation, let $Y(D(\pi))$ denote the counterfactual outcome if we intervene and administer encouragements based on this policy. Then one can show that the first term in the numerator is exactly the average value of this counterfactual encouragement policy, i.e.:
\begin{align*}
    \E[Y(\pi)] = \E\left[\E\left[f_2^{D_1\cdot \1}(H_2, D_1)\mid H_1, Z_1=1\right]\right]
\end{align*}
Thus our identified formula can also be written as:
\begin{align*}
    \tau_\1 = \frac{\E[Y(D(\pi)) - Y(D(\0))]}{\Pr(D_1(1)=1)}
\end{align*}
Note that under this counterfactual policy, if a unit complies in the first period, then the policy encourages treatments in all subsequent periods, and since we have Staggered Compliance, the units comply with this recommendation deterministically and take the treatment. Moreover, if a unit does not comply with the recommendation, all future encouragements are zero and by  One Sided Noncompliance, all future treatments are zero. Thus, we see that under this encouragement policy, the unit takes treatments in all subsequent periods if and only if they are a complier in the first period. Hence, under this counterfactual encouragement policy, we enforced the type of artificial behavior we were looking for, i.e., that a person takes treatment in all periods, if and only if, they take treatment in the first period. Using this property, we can also deduce that $Y(D(\pi)) - Y(D(\0))=0$, when $D_1(1)=0$. Thus we can further re-interpret the ratio as the LATE of this adaptive encouragement policy:
\begin{align}
\tau_\1 = \frac{\E[(Y(D(\pi)) - Y(D(\0)))\, \I(D_1(1)=1)]}{\Pr(D_1(1)=1)} = \E[Y(D(\pi)) - Y(D(\0))\mid D_1(1)=1]
\end{align}
Under Staggered Compliance and  One Sided Noncompliance, the always treat LATE can be re-interpreted as the local effect of an adaptive encouragement policy that encourages treatment in the first period and switches off encouraging treatment in subsequent periods, if someone does not comply in the first period, and switches on encouraging treatment if someone complies.
\end{remark}

\paragraph{Automatically Debiased Estimation and Inference} 
Lemma~\ref{lem:staggered-many} provides an identification formula that is amenable to estimation and inference via automatic debiased machine learning. If we define the nested regression functions:
\begin{align*}
q(H_1, Z_1) =~& \E\left[f_2^{D_1\cdot \1}(H_2, D_1)\mid H_1, Z_1\right]\\
p(H_1, Z_1) =~& \E[D_1\mid H_1, Z_1]
\end{align*}
then we can re-write the statistical estimand for $\tau_\1$ as:
\begin{align}
    \tau_\1 = \frac{\E\left[q(H_1, 1) - f_1^\0(H_1, 0)\right]}{\E[p(H_1, 1)]}
\end{align}
This quantity is again the solution to a moment equation that depends on nuisance functions that correspond to nested regression problems 
\begin{align}
    \E\left[q(H_1, 1) - f_1^\0(H_1, 0) - \tau_\1\, \pi(H_1, 1)\right] = 0
\end{align}
and the same paradigm as in Section~\ref{sec:dml_many} can be followed to construct an automatically debiased estimate. For instance, in this case the sequence of Riesz representers that would correpond to debiasing the term $\E[q(H_1, 1)]$ would be of the form:
\begin{align}
    \gamma_1(H_1, Z_1) =~& \frac{1\{Z_1=1\}}{\Pr(Z_1=1\mid H_1)}, &
    \gamma_t(H_t, Z_t) =~& \gamma_{t-1}(H_{t-1}, Z_{t-1}) \cdot \frac{1\{Z_t=D_1\}}{\Pr(Z_t=D_1\mid H_t)}
\end{align}
while the Riesz representers used to debias the terms $\E[f_1^\0(H_1,0)]$ and $\E[p(H_1,1)]$ are the same as in the when-to-treat case.

The debiased version of this moment is of the form:
\begin{align*}
   \E[\rho(W; q, f, a, \gamma) - {\phi}_0(W;f, a)  - \tau_{\1}\cdot  {\pi}(W; p, a)] = 0
\end{align*}
where $\phi_0$, $f$, $a$, exactly as defined in the when-to-treat LATE case and:
\begin{align*}
    \rho(W; q, f, \gamma) =& q(H_1, 1) + \gamma_1(H_1, Z_1)\, \left(f_2^{D_1\cdot \1}(H_2, D_1) - q(H_1, Z_1)\right)\\
    ~&~ +\sum_{t=2}^{T-1} \gamma_t(H_t, Z_t) \left({f}_{t+1}^{D_1\cdot \1}(H_{t+1}, D_1) - {f}_t^{D_1\cdot \1}(H_t, Z_t)\right) + \gamma_T(H_T, Z_T) \left(Y - {f}_t^{D_1\cdot \1}(H_t, Z_t)\right)\\
    \pi(W;p,a) =& p(H_1, 1) + a_1^\1(H_1, Z_1)\,\left(D_1 - p(H_1, Z_1)\right) 
\end{align*}

With the debiased and Neyman orthogonal moment at hand, we can construct an asymptotically normal estimator, by estimating the nuisance functions on a separate part using generic machine learning procedures (or in a cross-fitting manner) and then solving the empirical plug-in moment equation:
\begin{align*}
    \E_n\left[\rho(W; \hat{q}, \hat{f}, \hat{a}, \hat{\gamma}) - {\phi}_0(W;\hat{f}, \hat{a})  - \hat{\tau}_{\1}\cdot  {\pi}(W; \hat{p}, \hat{a})\right] = 0
\end{align*}
or in closed form:
\begin{align}
    \hat{\tau}_{\1} = \frac{\E_n\left[\rho(W; \hat{q}, \hat{f}, \hat{a}, \hat{\gamma}) - {\phi}_0(W;\hat{f}, \hat{a})\right]}{\E_n\left[{\pi}(W; \hat{p}, \hat{a})\right]}
\end{align}
Under assumptions analogous to Theorem~\ref{thm:inference}, one can show that this estimate satisfies the same asymptotic normality, asymptotic linearity and confidence interval construction statements. For example, a 95\% confidence interval can be constructed as:
\begin{align*}
\tau_{\1} \in~& \left[\hat{\tau}_{\1} \pm 1.96 \hat{\sigma} n^{-1/2}\right], &
    \hat{\sigma}^2 =~& \E_n\left[\pi(W; \hat{p}, \hat{a})\right]^{-2} \E_n\left[\left\{\rho(W; \hat{q}, \hat{f}, \hat{a}, \hat{\gamma}) - \phi_\0(W; \hat{f}, \hat{a}) - \hat{\tau}_{\1}\cdot   \pi(W; \hat{p}, \hat{a})\right\}^2\right]
\end{align*}

\section{Experimental Evaluation}

\subsection{When-to-Treat LATEs}

We examine the performance of our debiased ML inference procedure for When-to-Treat LATEs using a simple markovian logistic-linear structural equation model:\footnote{Replication code can be found in the following link: \href{https://drive.google.com/file/d/1Zlvo1Pa_PNCg-5fy9tyyXhxhbmx2JZ-R/view?usp=sharing}{Colab Notebook}}
\begin{align*}
    U \sim~& \text{clip}(N(0, 1), -2, 2)\\
    S_0 \sim~& N(0, I_p)\\
\forall t \in \{1,\ldots, T\}: 
    Z_{t} \sim~& \text{Binomial}\left(1, \text{Logistic}(S_{t-1}[0])\right)\\
\forall t \in \{1,\ldots, T\}: 
    D_t \sim~& Z_t \cdot \text{Binomial}\left(1, \text{Logistic}(2Z_t - 1 + U)\right)\\
\forall t \in \{1,\ldots, T\}: 
    S_t \sim~& .5 S_{t-1} + D_{t-1} + U + N(0, I_p)\\
    Y \sim~& D_T + S_{T-1} + U + N(0, 1)
\end{align*}

We implemented the debiased estimation and confidence interval construction procedure described in Section~\ref{sec:dml} and Section~\ref{sec:dml_many}. We used the plug-in approach for estimating the Riesz representers, as opposed to the automatic approach. We used Lasso as a regression oracle for all nuisance functions that correspond to regression problems and $\ell_2$-penalized Logistic regression for all the nuisance functions that correspond to classification problems and clipped the propensity produced by the models to lie in $[0.01, 1]$ (whenever these propensities appear in denominators) to avoid instabilities due to extreme predicted propensities. We estimated the nuisance models in a cross-fitting manner, with 5-fold cross-validation. The regularization weight for each method was chosen via nested cross-validation.

In Figure~\ref{fig:experiments} we report the root-mean-squared error and the bias of our point estimate for the When-to-Treat LATE, as well as the coverage of the 95\% confidence interval, over 100 repetitions of the synthetic experiment. We find that our estimation procedure recovers accurate LATE estimates for $n=5000$ samples and always has approximately nominal coverage, even in high-dimensional state settings with $p=500$-dimensional state variables, under sparsity. For $n=2000$ samples, we find that the procedure recovers accurate LATE estimates and has approximately correct coverage up to $p=100$-dimensional state variables but suffers substantial error and lower-than-nominal coverage for $p=500$-dimensional state variables for the $(1,0)$ LATE. In Figure~\ref{fig:experiments-3}, we find that the good performance of our method for up to $p=100$-dimensional state variables extends also to the case of three period when-to-treat LATEs, with approximately nominal coverage and bias of lower order than RMSE. We also find that using non-linear gradient boosted forests with early stopping for all classification models, substantially improves performance in terms of RMSE (at the cost of a small increase in bias), and maintains approximately nominal coverage, in all cases when the logistic model achieved nominal coverage (see Figure~\ref{fig:experiments-nonlin}).

\begin{figure}[H]
\centering
\begin{tabular}{lc|rrr|rrr|rrr}
\toprule
     &        & \multicolumn{3}{r}{$p=$10} & \multicolumn{3}{r}{$p=$100} & \multicolumn{3}{r}{$p=$500} \\
     &    LATE    &  rmse &  bias & coverage &  rmse &  bias & coverage &  rmse &  bias & coverage \\
\midrule
\multirow[t]{2}{*}{$n=2000$} & (1, 0) & 0.205 & 0.040 & 0.940 & 0.254 & 0.043 & 0.910 & 0.406 & 0.038 & 0.800 \\
 & (0, 1) & 0.136 & 0.029 & 0.960 & 0.136 & 0.031 & 0.970 & 0.182 & 0.042 & 0.960 \\
\cline{1-11}
\multirow[t]{2}{*}{$n=5000$} & (1, 0) & 0.161 & 0.034 & 0.870 & 0.159 & 0.054 & 0.940 & 0.166 & 0.025 & 0.940 \\
 & (0, 1) & 0.079 & 0.007 & 0.960 & 0.093 & 0.011 & 0.950 & 0.098 & 0.034 & 0.910 \\
\cline{1-11}
\bottomrule
\end{tabular}
\caption{RMSE, Bias and Coverage on Synthetic Data with two periods.}\label{fig:experiments}
\end{figure}

\begin{figure}[H]
\centering
\begin{tabular}{lc|rrr|rrr}
\toprule
     &           & \multicolumn{3}{r}{$p=$10} & \multicolumn{3}{r}{$p=$100} \\
     &           &  rmse &  bias & coverage &  rmse &  bias & coverage \\
\midrule
\multirow{3}{*}{$n=2000$} & (1, 0, 0) & 0.563 & 0.050 & 0.930 & 0.668 & 0.089 & 0.930 \\
 & (0, 1, 0) & 0.377 & 0.031 & 0.930 & 0.521 & 0.115 & 0.960 \\
 & (0, 0, 1) & 0.347 & 0.039 & 0.910 & 0.468 & 0.050 & 0.930 \\
\cline{1-8}
\multirow[t]{3}{*}{$n=5000$} & (1, 0, 0) & 0.269 & 0.029 & 0.920 & 0.342 & 0.073 & 0.940 \\
 & (0, 1, 0) & 0.183 & 0.009 & 0.920 & 0.233 & 0.010 & 0.970 \\
 & (0, 0, 1) & 0.159 & 0.001 & 0.930 & 0.209 & 0.017 & 0.930 \\
\cline{1-8}
\bottomrule
\end{tabular}
\caption{RMSE, Bias and Coverage on Synthetic Data with three periods.}\label{fig:experiments-3}
\end{figure}

\begin{figure}[H]
\centering
\begin{tabular}{lc|rrr|rrr|rrr}
\toprule
     &        & \multicolumn{3}{r}{$p=$10} & \multicolumn{3}{r}{$p=$100} & \multicolumn{3}{r}{$p=$500} \\
     &        &  rmse &  bias & coverage &  rmse &  bias & coverage & rmse &  bias & coverage\\
\midrule
\multirow{2}{*}{$n=2000$} & (1, 0) & 0.206 & 0.025 & 0.930 & 0.234 & 0.091 & 0.890 & 0.296 & 0.182 & 0.790 \\
 & (0, 1) & 0.126 & 0.030 & 0.950 & 0.112 & 0.040 & 0.940 & 0.122 & 0.045 & 0.910 \\
\cline{1-11}
\multirow[t]{2}{*}{$n=5000$} & (1, 0) & 0.150 & 0.026 & 0.900 & 0.144 & 0.069 & 0.920 & 0.146 & 0.076 & 0.890 \\
 & (0, 1) & 0.076 & 0.006 & 0.960 & 0.080 & 0.022 & 0.940 & 0.088 & 0.026 & 0.890 \\
\cline{1-11}
\bottomrule
\end{tabular}
\caption{RMSE, Bias and Coverage on Synthetic Data with two periods and gradient boosted forests for all classification models.}\label{fig:experiments-nonlin}
\end{figure}

\subsection{Always-Treat LATEs under Staggered Compliance}

We examine the performance of our debiased ML inference procedure for Always-Treat LATEs using a simple markovian logistic-linear structural equation model that satisfies the staggered compliance assumption:\footnote{Replication code can be found in the following link: \href{https://colab.research.google.com/drive/1LBuJPAIYytMU1gt4aLbNt5GTBMdlDilm}{Alwas-Treat Colab Notebook}}
\begin{align*}
    U \sim~& \text{clip}(N(0, 1), -2, 2)\\
    S_0 \sim~& N(0, I_p)\\
\forall t \in \{1,\ldots, T\}: 
    Z_{t} \sim~& \text{Binomial}\left(1, \text{Logistic}(S_{t-1}[0])\right)\\
    D_1 \sim~& Z_1 \cdot \text{Binomial}\left(1, \text{Logistic}(2Z_1 - 1 + U)\right)\\
\forall t \in \{2,\ldots, T\}: 
    D_t \sim~& \I(D_1=1)\, Z_t + \I(D_1=0)\, Z_t \cdot \text{Binomial}\left(1, \text{Logistic}(2Z_t - 1 + U)\right)\\
\forall t \in \{1,\ldots, T\}: 
    S_t \sim~& .5 S_{t-1} + D_{t-1} + U + N(0, I_p)\\
    Y \sim~& D_T + S_{T-1} + U + N(0, 1)
\end{align*}

We implemented the debiased estimation and confidence interval construction procedure described in Section~\ref{sec:dml-always} and Section~\ref{sec:always-many}. We used the plug-in approach for estimating the Riesz representers, as opposed to the automatic approach. We used Lasso as a regression oracle for all nuisance functions that correspond to regression problems and $\ell_2$-penalized Logistic regression for all the nuisance functions that correspond to classification problems and clipped the propensity produced by the models to lie in $[0.01, 1]$ (whenever these propensities appear in denominators) to avoid instabilities due to extreme predicted propensities. We estimated the nuisance models in a cross-fitting manner, with 5-fold cross-validation. The regularization weight for each method was chosen via nested cross-validation.

In Figure~\ref{fig:experiments} we report the root-mean-squared error and the bias of our point estimate for the When-to-Treat LATE, as well as the coverage of the 95\% confidence interval, over 100 repetitions of the synthetic experiment. We find that our estimation procedure recovers accurate LATE estimates for $n=5000$ samples and always has approximately nominal coverage, even in high-dimensional state settings with $p=100$-dimensional state variables, under sparsity. For $n=2000$ samples, we find that the procedure recovers accurate LATE estimates and has approximately correct coverage up to $p=100$-dimensional state variables but suffers substantial bias when $p=100$. In Figure~\ref{fig:experiments-3}, we find similar performance for the case of three period always-treat LATEs, with approximately nominal coverage and bias substantially lower than RMSE. We also find that using non-linear gradient boosted forests with early stopping for all classification models, substantially improves performance in terms of RMSE (at the cost of a small increase in bias), and maintains approximately nominal coverage (see Figure~\ref{fig:experiments-always-nonlin}).

\begin{figure}[H]
\centering
\begin{tabular}{lc|rrr|rrr}
\toprule
 &  & \multicolumn{3}{r}{$p=$10} & \multicolumn{3}{r}{$p=$100} \\
 & LATE & rmse & bias & coverage & rmse & bias & coverage \\
\midrule
$n=$2000 & (1, 1) & 0.184 & 0.035 & 0.960 & 0.224 & 0.015 & 0.970  \\
\cline{1-8}
$n=$5000 & (1, 1) & 0.138 & 0.022 & 0.930 & 0.147 & 0.023 & 0.930  \\
\cline{1-8}
\bottomrule
\end{tabular}

\caption{RMSE, Bias and Coverage of Always-Treat LATE estimate on Synthetic Data with two periods and staggered compliance.}\label{fig:experiments-always}
\end{figure}

\begin{figure}[H]
\centering
\begin{tabular}{lc|rrr|rrr}
\toprule
 &  & \multicolumn{3}{r}{$p=$10} & \multicolumn{3}{r}{$p=$100} \\
 & LATE & rmse & bias & coverage & rmse & bias & coverage \\
\midrule
$n=$2000 & (1, 1, 1) & 0.523 & 0.069 & 0.930 & 1.357 & 0.274 & 0.970 \\
\cline{1-8}
$n=$5000 & (1, 1, 1) & 0.223 & 0.038 & 0.910 & 0.277 & 0.023 & 0.900 \\
\cline{1-8}
\bottomrule
\end{tabular}
\caption{RMSE, Bias and Coverage of Always-Treat LATE estimate on Synthetic Data with three periods and staggered compliance.}\label{fig:experiments-always-3}
\end{figure}

\begin{figure}[H]
\centering
\begin{tabular}{lc|rrr|rrr}
\toprule
 &  & \multicolumn{3}{r}{10} & \multicolumn{3}{r}{100} \\
 & LATE & rmse & bias & coverage & rmse & bias & coverage \\
\midrule
2000 & (1, 1) & 0.176 & 0.018 & 0.950 & 0.191 & 0.080 & 0.920 \\
\cline{1-8}
5000 & (1, 1) & 0.132 & 0.010 & 0.930 & 0.132 & 0.046 & 0.910 \\
\cline{1-8}
\bottomrule
\end{tabular}
\caption{RMSE, Bias and Coverage on Synthetic Data with two periods and gradient boosted forests for all classification models.}\label{fig:experiments-always-nonlin}
\end{figure}

\bibliographystyle{plainnat}
\bibliography{references}

\newpage

\appendix 

\section{Examples of Intervention Counterfactuals}\label{app:examples}

For clarity of exposition, we explicitly present the recursive generative process of some intervention counterfactuals that will be used throughout our analysis. These counterfactuals are generated by first drawing the primitive counterfactual random processes from Assumption~\ref{assume:2_consistency} from their fixed distribution and then recursively generating the observed counterfactual variables based on the recursive equations below.

The intervention counterfactuals $V(Z\to z)$, which we short-hand in this context as $V(z)$, for the intervention $Z\to z$, are recursively defined as:
\begin{align*}
S_0(z) :=~& S_0(.)   \\
Z_1(z) :=~& Z_{1}(S_0(z))           \\
D_1(z) :=~& D_{1}(z_1, S_0(z))   \\
S_1(z) :=~& S_{1}(D_1(z), S_0(z))   \\
Z_2(z) :=~& Z_{2}(z_1, D_1(z), S(z))   \\
D_2(z) :=~& D_{2}(z, D_1(z), S(z))  \\
Y(z) :=~& Y(D(z), S(z))
\end{align*}
These counterfactual variables are also depicted visually in Figure~\ref{fig:swig_1}.

The intervention counterfactuals $V(D\to d)$, which we short-hand in this context as $V(d)$, for the intervention $D\to d$, are recursively defined as:
\begin{align*}
S_0(d) :=~& S_0(.)   \\
Z_1(d) :=~& Z_{1}(S_0(d))           \\
D_1(d) :=~& D_{1}(Z_1(d), S_0(d))   \\
S_1(d) :=~& S_{1}(d_1, S_0(d))   \\
Z_2(d) :=~& Z_{2}(Z_1(d), d_1, S(d))   \\
D_2(d) :=~& D_{2}(Z(d), d_1, S(d))  \\
Y(d) :=~& Y(d, S(d))
\end{align*}

The intervention counterfactuals $V((Z,D)\to (z,d))$, which we short-hand in this context as $V(z,d)$, for the intervention $(Z,D)\to (z,d)$, are recursively defined as:
\begin{align*}
S_0(z,d) :=~& S_0(.)   \\
Z_1(z,d) :=~& Z_{1}(S_0(z,d))           \\
D_1(z,d) :=~& D_{1}(z_1, S_0(z,d))   \\
S_1(z,d) :=~& S_{1}(d_1, S_0(z,d))   \\
Z_2(z,d) :=~& Z_{2}(z_1, d_1, S(z,d))   \\
D_2(z,d) :=~& D_{2}(z, d_1, S(z,d))  \\
Y(z,d) :=~& Y(d, S(z,d))
\end{align*}
These counterfactual varialbes are also depicted visually in Figure~\ref{fig:swig_3}.

Note that $S_0(z)=S_0(d)=S_0(z,d)=S_0$, $Z_1(z)=Z_1(d)=Z_1(z,d)=Z_1$, $D_1(z)=D_1(z_1)$, $D_1(d)=D_1$, and most crucially $S_1(z)=S_1(D_1(z))$, $S_1(z,d)=S_1(d_1)$ and $Y(z)=Y(D(z))$.

\section{Verifying Sequential Ignorability Properties}\label{app:ignorability}

In this section, we verify the sequential ignorability properties presented in Assumption~\ref{assume:2_ignorability} and Assumption~\ref{assume:3_ignorability}, for the strucutral causal model depicted in Figure~\ref{fig:dag_1}, under the FFRCISTG assumption on the primitive counterfactual processes. We will use the machinery of single-world intervention graphs developed in \citep{richardson2013single} and simply verify that the $d$-separation (directed separation) graphical criterion holds on appropriately constructed intervention graphs that depict both original intervening variables and counterfactual downstream variables in the same Bayesian network. Each of the graphs can be thought as corresponding to a hypothetical real world intervention.

For any fixed $z\in {\cal Z}^2$, the sequential ignorability property:
\begin{align}
    \{Y(D(z)), D(z)\} \ci Z_1 \mid S_0
\end{align}
can be verified by visually verifying that the $d$-separation criterion holds in the single-world intervention graph depicted in Figure~\ref{fig:swig_1}. For simplicity in the graph we use $Y(z)$ as short hand for $Y(D(z))$. We see that the node $S_0$, $d$-separates the set of nodes $\{Y(z), D_1(z), D_2(z)\}$ from the node $Z_1$, implying the joint conditional independent statement.

\begin{figure}[H]
\centering
\begin{tikzpicture}[
  -{Latex[length=3mm,width=2mm]}, thick, node distance=.75cm,
  every node/.style={scale=.8},
  unobserved/.style={
    circle,
    font=\small,
    very thick,
      draw=black!60,
      fill=white!60,
      minimum size=12mm,
  },
  intervene/.style={
    circle, 
    font=\small,
    very thick,
    dashed,
      draw=black!60,
      fill=white!60,
      minimum size=12mm,
  },
  observed/.style={
    circle,
    font=\small,
    draw=black!60,
    fill=gray!25,
    very thick,
    minimum size=12mm,
  },
  placeholder/.style={
    circle,
    draw=white!60,
    fill=white!25,
    very thick,
    minimum size=12mm
  }
]

\node[intervene]     (Z1)            {$Z_{1}$};
\node[intervene]     (z1)   [right=.2 of Z1]                  {$z_{1}$};
\node[placeholder]  (P1)  [right=of z1]      {};
\node[placeholder]  (P2)  [below=of P1]      {};
\node[intervene]     (Z2)  [right=of P1]      {$Z_2(z_1)$};
\node[intervene]     (z2)  [right=.2 of Z2]      {$z_2$};

\node[observed]     (D1)  [below=of z1]      {$D_1(z_1)$};
\node[observed]     (D2)  [below=of z2]      {$D_2(z)$};

\node[observed]     (S1)   [below=of P2]     {$S_1(z_1)$};

\node[unobserved]   (C1)  [left=of S1]       {$U$};
\node[observed]     (S0)  [left=1.5 of C1]       {$S_0$};

\node[placeholder]  (P3)  [right=of D2]      {};

\node[observed]     (Y)   [below=of P3]      {$Y(z)$};

\path[every node/.style={font=\sffamily\small}]
    (z1) edge[solid] node[] {} (D1)
    (z1) edge[solid] node[] {} (Z2)
    (z1) edge[solid] node[] {} (D2)

    (z2) edge[solid] node[] {} (D2)
    
    (S0)  edge[bend left] node[] {} (Z1)
    (S0)  edge[bend left=70, looseness=1.5] node[] {} (Z2)
    (S1)  edge[solid] node[] {} (Y)
    (S1)  edge[solid] node[] {} (Z2)
    (S1)  edge[solid] node[] {} (D2)

    (D1) edge[solid] node[] {} (Z2)
    (D1) edge[solid] node[] {} (D2)
    (D1) edge[solid] node[] {} (S1)
    (D1) edge[solid] node[] {} (Y)
    
    (D2) edge[solid] node[] {} (Y)
;

\path[every path/.style={font=\sffamily\small,draw=orange}]
    (C1) edge[bend left]     node  [left] {} (D1)
    (C1) edge[bend right]    node  [left] {} (S1)
    (C1) edge[bend left=20]  node  [left] {} (D2)
    (C1) edge[bend left]    node  [left] {} (S0)
    (C1) edge[bend right]    node  [left] {} (Y)
    
;
\end{tikzpicture}
\caption{
  Single-World Intervention Graph for the intervention on all instruments $Z\to z$. Note that due to the exclusion restrictions implied from the graph, we have $Y(z) = Y(D(z))$. The node $S0$ also contains edges to all the gray nodes, which are omitted for simplicity.
}\label{fig:swig_1}
\end{figure}

For any fixed $z\in {\cal Z}^2$, the sequential ignorability property:
\begin{align}
    Y(D_1, D_2(Z_1, z_2)), D_2(Z_1, z_2)\} \ci Z_2 \mid S_0, S_1, Z_1, D_1
\end{align}
can be verified by visually verifying that the $d$-separation criterion holds in the single-world intervention graph depicted in Figure~\ref{fig:swig_2}. For simplicity, in the graph we use $Y(z_2)$ ad short hand for $Y(D_1, D_2(Z_1, z_2))$ and $D_2(z_2)$ as short hand for $D_2(Z_1, z_2)$. We see that the set of nodes $\{S_0, S_1, Z_1, D_1\}$, $d$-separates the set of nodes $\{Y(z_2), D_2(z_2)\}$ from the node $Z_2$, implying the joint conditional independent statement.

\begin{figure}[H]
\centering
\begin{tikzpicture}[
  -{Latex[length=3mm,width=2mm]}, thick, node distance=.75cm,
  every node/.style={scale=.8},
  unobserved/.style={
    circle,
    font=\small,
    very thick,
      draw=black!60,
      fill=white!60,
      minimum size=12mm,
  },
  intervene/.style={
    circle, 
    font=\small,
    very thick,
    dashed,
      draw=black!60,
      fill=white!60,
      minimum size=12mm,
  },
  observed/.style={
    circle,
    font=\small,
    draw=black!60,
    fill=gray!25,
    very thick,
    minimum size=12mm,
  },
  placeholder/.style={
    circle,
    draw=white!60,
    fill=white!25,
    very thick,
    minimum size=12mm
  }
]

\node[observed]     (Z1)            {$Z_{1}$};
\node[placeholder]  (P1)  [right=of Z1]      {};
\node[placeholder]  (P2)  [below=of P1]      {};
\node[intervene]     (Z2)  [right=of P1]      {$Z_2$};
\node[intervene]     (z2)  [right=.2 of Z2]      {$z_2$};

\node[observed]     (D1)  [below=of Z1]      {$D_1$};
\node[observed]     (D2)  [below=of z2]      {$D_2(z_2)$};

\node[observed]     (S1)   [below=of P2]     {$S_1$};

\node[unobserved]   (C1)  [left=of S1]       {$U$};
\node[observed]     (S0)  [left=1.5 of C1]       {$S_0$};

\node[placeholder]  (P3)  [right=of D2]      {};

\node[observed]     (Y)   [below=of P3]      {$Y(z_2)$};

\path[every node/.style={font=\sffamily\small}]
    (Z1) edge[solid] node[] {} (D1)
    (Z1) edge[solid] node[] {} (Z2)
    (Z1) edge[solid] node[] {} (D2)

    (z2) edge[solid] node[] {} (D2)

    (S0)  edge[bend left=70, looseness=1.5] node[] {} (Z2)
    (S1)  edge[solid] node[] {} (Y)
    (S1)  edge[solid] node[] {} (Z2)
    (S1)  edge[solid] node[] {} (D2)

    (D1) edge[solid] node[] {} (Z2)
    (D1) edge[solid] node[] {} (D2)
    (D1) edge[solid] node[] {} (S1)
    (D1) edge[solid] node[] {} (Y)
    
    (D2) edge[solid] node[] {} (Y)
;

\path[every path/.style={font=\sffamily\small,draw=orange}]
    (C1) edge[bend left]     node  [left] {} (D1)
    (C1) edge[bend right]    node  [left] {} (S1)
    (C1) edge[bend left=20]  node  [left] {} (D2)
    (C1) edge[bend left]    node  [left] {} (S0)
    (C1) edge[bend right]    node  [left] {} (Y)
    
;
\end{tikzpicture}
\caption{
  Single-World Intervention Graph for the intervention on the second period instrument $Z_2\to z_2$. Note that due to the exclusion restrictions implied from the graph, we use the short-hand notations on the nodes: $Y(z_2) \leftrightarrow Y(D_1, D_2(Z_1, z_2))$ and $D_2(z_2) \leftrightarrow D_2(Z_1, z_2)$. The node $S0$ also contains edges to all the gray nodes, which are omitted for simplicity.
}\label{fig:swig_2}
\end{figure}

For any fixed $d\in {\cal D}^2$ and $z\in {\cal Z}^2$, the sequential ignorability property:
\begin{align}
    \{Y(d), D_1(z_1)\} \ci Z_1 \mid S_0
\end{align}
can be verified by visually verifying that the $d$-separation criterion holds in the single-world intervention graph depicted in Figure~\ref{fig:swig_3}. We see that the node $S_0$, $d$-separates the set of nodes $\{Y(d), D_1(z_1)\}$ from the node $Z_1$, implying the joint conditional independent statement.

\begin{figure}[H]
\centering
\begin{tikzpicture}[
  -{Latex[length=3mm,width=2mm]}, thick, node distance=.75cm,
  every node/.style={scale=.8},
  unobserved/.style={
    circle,
    font=\small,
    very thick,
      draw=black!60,
      fill=white!60,
      minimum size=12mm,
  },
  intervene/.style={
    circle, 
    font=\small,
    very thick,
    dashed,
      draw=black!60,
      fill=white!60,
      minimum size=12mm,
  },
  observed/.style={
    circle,
    font=\small,
    draw=black!60,
    fill=gray!25,
    very thick,
    minimum size=12mm,
  },
  placeholder/.style={
    circle,
    draw=white!60,
    fill=white!25,
    very thick,
    minimum size=12mm
  }
]

\node[intervene]     (Z1)            {$Z_{1}$};
\node[intervene]     (z1)   [right=.2 of Z1]                  {$z_{1}$};
\node[placeholder]  (P1)  [right=of z1]      {};
\node[placeholder]  (P2)  [below=of P1]      {};
\node[intervene]     (Z2)  [right=of P1]      {$Z_2(z_1, d_1)$};
\node[intervene]     (z2)  [right=.2 of Z2]      {$z_2$};

\node[intervene]     (D1)  [below=of z1]      {$D_1(z_1)$};
\node[intervene]     (d1)  [right=.2 of D1]      {$d_1$};
\node[intervene]     (D2)  [below=of z2]      {$D_2(z_1, d_1)$};
\node[intervene]     (d2)  [right=.2 of D2]      {$d_2$};

\node[observed]     (S1)   [below=of P2]     {$S_1(d_1)$};

\node[unobserved]   (C1)  [left=of S1]       {$U$};
\node[observed]     (S0)  [left=1.5 of C1]       {$S_0$};

\node[placeholder]  (P3)  [right=of d2]      {};

\node[observed]     (Y)   [below=of P3]      {$Y(d)$};

\path[every node/.style={font=\sffamily\small}]
    (z1) edge[solid] node[] {} (D1)
    (z1) edge[solid] node[] {} (Z2)
    (z1) edge[solid] node[] {} (D2)

    (z2) edge[solid] node[] {} (D2)
    
    (S0)  edge[bend left] node[] {} (Z1)
    (S0)  edge[bend left=70, looseness=1.5] node[] {} (Z2)
    (S0)  edge[bend left] node[] {} (D1)
    (S0)  edge[bend left=52] node[] {} (D2)
    (S1)  edge[solid] node[] {} (Y)
    (S1)  edge[solid] node[] {} (Z2)
    (S1)  edge[solid] node[] {} (D2)

    (d1) edge[solid] node[] {} (Z2)
    (d1) edge[solid] node[] {} (D2)
    (d1) edge[solid] node[] {} (S1)
    (d1) edge[solid] node[] {} (Y)
    
    (d2) edge[solid] node[] {} (Y)
;

\path[every path/.style={font=\sffamily\small,draw=orange}]
    (C1) edge[bend left]     node  [left] {} (D1)
    (C1) edge[bend right]    node  [left] {} (S1)
    (C1) edge[bend left=20]  node  [left] {} (D2)
    (C1) edge[bend left]    node  [left] {} (S0)
    (C1) edge[bend right]    node  [left] {} (Y)
    
;
\end{tikzpicture}
\caption{
  Single-World Intervention Graph for the intervention on all instruments $Z\to z$ and all treatments $D\to d$. Note that due to the exclusion restrictions implied from the graph, we have $Y(d, z) = Y(d)$. The node $S0$ also contains edges to all the gray nodes, which are omitted for simplicity.
}\label{fig:swig_3}
\end{figure}

The sequential exogeneity property:
\begin{align}
    \{Y(D_1, d_2), D_2(Z_1, z_2)\} \cindep& Z_2 \mid S_0, S_1, Z_1, D_1
\end{align}
can be verified by visually verifying that the $d$-separation criterion holds in the single-world intervention graph depicted in Figure~\ref{fig:swig_4}. For simplicity, in the graph we use $Y(d_2)$ ad short hand for $Y(D_1, d_2)$ and $D_2(z_2)$ as short hand for $D_2(Z_1, z_2)$. We see that the set of nodes $\{S_0, S_1, Z_1, D_1\}$, $d$-separates the set of nodes $\{Y(d_2), D_2(z_2)\}$ from the node $Z_2$, implying the joint conditional independent statement.

\begin{figure}[H]
\centering
\begin{tikzpicture}[
  -{Latex[length=3mm,width=2mm]}, thick, node distance=.75cm,
  every node/.style={scale=.8},
  unobserved/.style={
    circle,
    font=\small,
    very thick,
      draw=black!60,
      fill=white!60,
      minimum size=12mm,
  },
  intervene/.style={
    circle, 
    font=\small,
    very thick,
    dashed,
      draw=black!60,
      fill=white!60,
      minimum size=12mm,
  },
  observed/.style={
    circle,
    font=\small,
    draw=black!60,
    fill=gray!25,
    very thick,
    minimum size=12mm,
  },
  placeholder/.style={
    circle,
    draw=white!60,
    fill=white!25,
    very thick,
    minimum size=12mm
  }
]

\node[observed]     (Z1)            {$Z_{1}$};
\node[placeholder]  (P1)  [right=of Z1]      {};
\node[placeholder]  (P2)  [below=of P1]      {};
\node[intervene]     (Z2)  [right=of P1]      {$Z_2$};
\node[intervene]     (z2)  [right=.2 of Z2]      {$z_2$};

\node[observed]     (D1)  [below=of Z1]      {$D_1$};
\node[intervene]     (D2)  [below=of z2]      {$D_2(z_2)$};
\node[intervene]     (d2)  [right=.2 of D2]      {$d_2$};

\node[observed]     (S1)   [below=of P2]     {$S_1$};

\node[unobserved]   (C1)  [left=of S1]       {$U$};
\node[observed]     (S0)  [left=1.5 of C1]       {$S_0$};

\node[placeholder]  (P3)  [right=of d2]      {};

\node[observed]     (Y)   [below=of P3]      {$Y(d_2)$};

\path[every node/.style={font=\sffamily\small}]
    (Z1) edge[solid] node[] {} (D1)
    (Z1) edge[solid] node[] {} (Z2)
    (Z1) edge[solid] node[] {} (D2)

    (z2) edge[solid] node[] {} (D2)

    (S0)  edge[bend left=70, looseness=1.5] node[] {} (Z2)
    (S0)  edge[bend left=52] node[] {} (D2)
    (S1)  edge[solid] node[] {} (Y)
    (S1)  edge[solid] node[] {} (Z2)
    (S1)  edge[solid] node[] {} (D2)

    (D1) edge[solid] node[] {} (Z2)
    (D1) edge[solid] node[] {} (D2)
    (D1) edge[solid] node[] {} (S1)
    (D1) edge[solid] node[] {} (Y)
    
    (d2) edge[solid] node[] {} (Y)
;

\path[every path/.style={font=\sffamily\small,draw=orange}]
    (C1) edge[bend left]     node  [left] {} (D1)
    (C1) edge[bend right]    node  [left] {} (S1)
    (C1) edge[bend left=20]  node  [left] {} (D2)
    (C1) edge[bend left]    node  [left] {} (S0)
    (C1) edge[bend right]    node  [left] {} (Y)
    
;
\end{tikzpicture}
\caption{
  Single-World Intervention Graph for the intervention on the second period instrument $Z_2\to z_2$ and treatment $D_2\to d_2$. We use the short-hand notation: $Y(d_2) \leftrightarrow Y(D_1, d_2)$ and $D_2(z_2) \leftrightarrow D_2(Z_1, z_2)$. The node $S0$ also contains edges to all the gray nodes, which are omitted for simplicity.
}\label{fig:swig_4}
\end{figure}

\section{Proofs}

\subsection{Proof of \Cref{thm:2_identification}}
\begin{proof}
First, we show that identification of $\beta_z$ for every $z\in {\cal Z}^2$ such that $z\neq(0,0)$ implies identification of $\tau_d$ for $d\in \{(0,1), (1,0)\}$ as a special case.
Note that for every $d=z\in \{(0,1), (1,0)\}$, we have that the event $D(z)\neq (0,0)$ is equivalent to the event $D(z)=d$.
This is because under One Sided Noncompliance, $D(z)=d$ and $D(z)=(0,0)$ are complements whenever $d=z\in \{(0,1), (1,0)\}$ since the unit can be treated only when encouraged.
Hence, for $d=z\in \{(0,1), (1,0)\}$ we almost surely have that 
\begin{align*}
    \E[Y(d)-Y(0,0)\mid D(z)=d]&\stackrel{(a)}{=} \E[Y(D(z)) - Y(0,0)\mid D(z)=d] \stackrel{(b)}{=} \E[Y(D(z)) - Y(0,0)\mid D(z)\neq (0,0)]\\
    \Pr(D(z)=d) &\stackrel{(c)}{=} \Pr(D(z)\neq (0,0)) = 1 - \Pr(D(z)=(0,0)), 
\end{align*}
\noindent 
where
 equality $(a)$ uses the conditioning event and 
 equalities $(b),(c)$ use that One Sided Noncompliance implies $D(z)\in \{(0,0), z\}$ with probability one for $d=z\in\{(0,1), (1,0)\}$.

Based on these two properties, it suffices to prove the second identification result of the Theorem that
\begin{align*}
    \beta_z \triangleq \E[Y(D(z)) - Y(0,0) &\mid D(z) \neq (0,0)]
 =\frac{
    \mathbb{E}[ Y(D(z)) ]
    - \mathbb{E}[ Y(D(0,0)) ]
  }{1 - \Pr\{ D(z)=(0,0) \}
  }
\end{align*}
for any $z\in {\cal Z}^2$. Then the first identification result
\begin{align*}
    \tau_d \triangleq \E[Y(d) - Y(0,0) &\mid D(z)=d]
 =\frac{
    \mathbb{E}[Y(D(z)) ]
    - \mathbb{E}[Y(D(0,0)) ]
  }{\Pr\{ D(z)=d \}
  }
\end{align*}
for $d\in \{(0,1), (1,0)\}$, follows as a special case.

Second, by Bayes rule:
\begin{align*}
\E[Y(D(z)) - Y(0,0) \mid D(z) \neq (0,0)]
=~&\frac{
    \mathbb{E}\left[
      \left( Y(D(z)) - Y(0,0) \right) 
      \cdot \mathbb{I} \{ D(z)\neq (0,0) \}
    \right]
  }{ 1 - \Pr\{ D(z)=(0,0) \} }.
\end{align*}
When $D(z)=(0,0)$, we have $Y(D(z)) - Y(0,0)=0$, which almost surely implies
\begin{align*}
    \left( Y(D(z)) - Y(0,0) \right) 
      \cdot \mathbb{I} \{ D(z)\neq (0,0) \} 
      =~& Y(D(z)) - Y(0,0).
\end{align*}
Combining the last two two equalities, we derive
\begin{align*}
 \E[Y(D(z)) - Y(0,0) \mid D(z) \neq (0,0)]=~& \frac{
    \mathbb{E}\left[ Y(D(z)) - Y(0,0) \right]
  }{ 1 - \Pr\{ D(z)=(0,0) \} }.
\end{align*}
Moreover, under one sided noncompliance we have
\begin{align*}
    D(0,0) = (0,0), \quad \text{a.s.}
\end{align*}
Thus, we conclude
\begin{align*}
 \E[Y(D(z)) - Y(0,0) \mid D(z) \neq (0,0)]=~& \frac{
    \mathbb{E}\left[ Y(D(z)) - Y(D(0,0)) \right]
  }{ 1 - \Pr\{ D(z)=(0,0) \} }=\frac{
    \mathbb{E}\left[ Y(D(z))\right] - \E\left[Y(D(0,0)) \right]
  }{ 1 - \Pr\{ D(z)=(0,0) \} }.
\end{align*}
The second part of the Theorem follows by applying Lemma~\ref{lem:2_dte} to each of the counterfactual means in the fractions.
\end{proof}

\begin{lemma}[Expected Outcomes and Treatments Under Encouragements]\label{lem:2_dte}
Assume \ref{assume:2_consistency},  \ref{assume:2_ignorability} and \ref{assume:2_overlap}.
Then, 
\begin{align*}
\E[Y(D(z))\mid S_0]
=~& \E[\E[Y \mid D_1, Z=z, S] \mid Z_1=z_1, S_0]\\
\Pr(D(z)=d\mid S_0)
=~& \E[\Pr(D=d\mid S, D_1, Z=z)\mid Z_1=z_1, S_0].
\end{align*}
and the quantities $\E[Y(D(z))]$ and $\Pr(D(z)=d)$ can be identified by applying the tower rule and taking an expectation of the above two equations over $S_0$.
\end{lemma}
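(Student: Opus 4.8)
The plan is to prove both identities by ``peeling off'' the dependence on the instruments one period at a time, alternating between (i) the consistency and exclusion restrictions of Assumption~\ref{assume:2_consistency} and (ii) the two conditional-independence statements of Assumption~\ref{assume:2_ignorability}, with Assumption~\ref{assume:2_overlap} guaranteeing that every conditioning event below has positive probability so that the conditional expectations are well defined. Since the probability identity is obtained from the outcome identity by replacing $Y$ with the indicator $\I(D=d)$ (and noting $\I(D(z)=d)=\I(D=d)$ on the event $Z=z$ by consistency), I would establish the outcome identity in full and treat the probability identity as a verbatim repetition with $\I(D=d)$ in place of $Y$.

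First I would handle the outer period. Conditioning on $S_0$ and applying the first statement of Assumption~\ref{assume:2_ignorability}, namely $\{Y(D(z)),D(z)\}\cindep Z_1\mid S_0$, I can insert the event $Z_1=z_1$ for free:
\begin{align*}
\E[Y(D(z))\mid S_0]=\E[Y(D(z))\mid S_0, Z_1=z_1].
\end{align*}
On this event the consistency and exclusion restrictions give $D_1(z)=D_1$ and $S_1(D_1(z),S_0)=S_1$, so that the recursively defined counterfactual $Y(D(z))$ coincides with the $Z_2\to z_2$ intervention counterfactual $Y(D_1,D_2(Z_1,z_2))$ and the period-two counterfactual $D_2(z)$ coincides with $D_2(Z_1,z_2)$. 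Verifying this substitution carefully---tracking exactly which arguments are fixed by the intervention $Z\to z$ versus inherited from the observed data once $Z_1=z_1$ is imposed, and confirming that the states agree because $S_1$ depends only on $(D_1,S_0)$ by exclusion---is where I expect the bookkeeping to be most delicate.

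Next I would tower over the intermediate history. Writing $S=(S_0,S_1)$ and conditioning additionally on $(S,D_1)$,
\begin{align*}
\E[Y(D(z))\mid S_0,Z_1=z_1]=\E\big[\E[Y(D_1,D_2(Z_1,z_2))\mid S, D_1, Z_1=z_1]\mid S_0, Z_1=z_1\big].
\end{align*}
I then apply the second statement of Assumption~\ref{assume:2_ignorability}, $\{Y(D_1,D_2(Z_1,z_2)),D_2(Z_1,z_2)\}\cindep Z_2\mid S,D_1,Z_1$, to insert the event $Z_2=z_2$ into the inner expectation. On $\{Z_1=z_1,Z_2=z_2\}=\{Z=z\}$, consistency again collapses the counterfactual to the observed outcome, $Y(D_1,D_2(Z_1,z_2))=Y$, giving $\E[Y(D_1,D_2(Z_1,z_2))\mid S,D_1,Z=z]=\E[Y\mid S,D_1,Z=z]$. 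Substituting back yields exactly
\begin{align*}
\E[Y(D(z))\mid S_0]=\E\big[\E[Y\mid S,D_1,Z=z]\mid S_0, Z_1=z_1\big].
\end{align*}

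Finally, I would note that running the identical argument with $\I(D=d)$ in place of $Y$ delivers the probability identity, and that the unconditional quantities $\E[Y(D(z))]$ and $\Pr(D(z)=d)$ follow by the tower rule, taking expectations over $S_0$. The principal difficulty throughout is the repeated consistency substitution in the middle step: one must confirm that intervening on $Z$ and then conditioning on the matching observed event produces the same quantity as the natural (observed) evaluation, which is precisely the content of the exclusion restrictions and of the consistency property stated after Definition~\ref{defn:int-cnt}.
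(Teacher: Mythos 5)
Your proposal is correct and follows essentially the same route as the paper: the paper first observes that $Y(D(z))=Y(Z\to z)$ by the exclusion restrictions and then proves a general g-formula lemma for instrument interventions, whose proof is exactly your chain of steps (ignorability to insert $Z_1=z_1$, consistency to trade $z_1$ for $Z_1$ in the counterfactual, tower rule over $(S,D_1,Z_1)$, ignorability to insert $Z_2=z_2$, consistency to collapse to $Y$), applied to $V=Y$ and $V=\I(D=d)$. The only difference is organizational—you carry out the chain directly on $Y(D(z))$ rather than isolating the g-formula as a reusable lemma—and your handling of the consistency substitution matches the paper's.
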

\begin{proof}
First note that due to the exclusion restrictions imposed by Assumption~\ref{assume:2_consistency}, we have that $Y(D(z))=Y(Z\to z)$. In particular, this follows since states and outcomes only depend on prior treatments and states and not directly on encouragements. More formally, note that $Y(d)=Y(d, S(d))$, which implies that $Y(D(z))=Y(D(z), S(D(z)))$. Furthermore, by Definition~\ref{defn:int-cnt} of intervention counterfactuals (see also Appendix~\ref{app:examples}), $Y(Z\to z) = Y(D(z), S(z))$ and $S(z)=(S_0, S_1(D_1(z_1), S_0))=S(D(z))$. Thus, $Y(Z\to z)=Y(D(z), S(D(z))=Y(D(z))$. Hence, $Y(D(z))$ corresponds to the mean counterfactual outcome $Y$ under interventions on the instruments $(Z_1, Z_2)$. Since the instruments satisfy the sequential conditional ignorability Assumption~\ref{assume:2_ignorability} and the sequential overlap Asssumption~\ref{assume:2_overlap}, the following identification result follows from the generalized g-formula (see e.g. \cite[Chapter 21]{hernancausal} or \cite[Theorem 1]{chernozhukov2022automatic}), which we also derive in Lemma~\ref{lem:g-formula} for completeness:
\begin{align*}
    \mathbb{E}[Y(D(z))\mid S_0] = \mathbb{E}[ 
    \mathbb{E}[Y \mid Z_1, S, D_1, Z_2=z_2] \mid Z_1=z_1, S_0] = 
    \mathbb{E}[\mathbb{E}[Y \mid S, D_1, Z=z] \mid Z_1=z_1, S_0]
  ]
\end{align*}
Similarly, define the variable $W=\I(D=d)$. Note that $\Pr(D(z)=d\mid S_0) = \E[W(Z\to z)\mid S_0]$. Thus, $\Pr(D(z)=d\mid S_0)$ corresponds to the mean counterfactual outcome of the variable $W$, under interventions on the instruments $(Z_1, Z_2)$. Thus we can apply again the g-formula (see Lemma~\ref{lem:g-formula}) to derive:
\begin{align*}
    \Pr(D(z)=d\mid S_0) =~& \E[\E[W\mid S, D_1, Z=z]\mid Z_1=z_1, S_0]\\
    =~& \E[\E[\I(D=d)\mid S, D_1, Z=z]\mid Z_1=z_1, S_0]\\
    =~& \E[\Pr(D=d\mid S, D_1, Z=z)\mid Z_1=z_1, S_0]
\end{align*}

\end{proof}

\begin{lemma}[g-formula for Instrument Interventions]\label{lem:g-formula} For any variable $V$, let $V(z)$ be shorthand notation for $V(Z\to z)$. If the following sequential ignorability holds for $V$:
\begin{align*}
    V(z) \cindep~& Z_1 \mid S_0 & 
    V(Z_1, z_2) \cindep~& Z_2 \mid S, D_1, Z_1
\end{align*}
and the sequential overlap Assumption~\ref{assume:2_overlap} holds, then:
\begin{align}
    \E[V(z)] =~& \E[\E[\E[V\mid S, D_1, Z=z]\mid Z_1=z_1, S_0]]\\
    \E[V(z)\mid S_0] =~& \E[\E[V\mid S, D_1, Z=z]\mid Z_1=z_1, S_0]
\end{align}
\end{lemma}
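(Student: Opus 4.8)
The plan is to prove the conditional identity $\E[V(z)\mid S_0] = \E[\E[V\mid S, D_1, Z=z]\mid Z_1=z_1, S_0]$ by a two-step backward-peeling argument on the two instrument interventions, and then obtain the unconditional statement by taking expectations over $S_0$ and invoking the tower rule. Throughout I would work with the \emph{partial} intervention counterfactual $V(Z_1, z_2)$, i.e. the outcome when only $Z_2$ is fixed to $z_2$ while $Z_1$ retains its natural (recursively generated) value. This object serves as the bridge between the fully intervened $V(z)$ and the observed $V$: since $Z_1$ is generated upstream of $Z_2$, intervening on $Z_2$ does not alter $Z_1$, so on the event $\{Z_1=z_1\}$ the consistency property gives $V(z)=V(Z_1,z_2)$, and on the event $\{Z_2=z_2\}$ it gives $V(Z_1,z_2)=V$.

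First I would peel off the period-one intervention. By the first sequential ignorability hypothesis $V(z)\cindep Z_1\mid S_0$, conditioning on $Z_1=z_1$ leaves the conditional mean unchanged, so $\E[V(z)\mid S_0] = \E[V(z)\mid S_0, Z_1=z_1]$. On the event $\{Z_1=z_1\}$ the consistency argument above gives $V(z)=V(Z_1,z_2)$, hence $\E[V(z)\mid S_0, Z_1=z_1] = \E[V(Z_1, z_2)\mid S_0, Z_1=z_1]$.

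Second I would peel off the period-two intervention. Applying the tower rule with the finer history $(S, D_1, Z_1)$ (which contains both $S_0$ and $Z_1$), I would write $\E[V(Z_1, z_2)\mid S_0, Z_1=z_1] = \E[\,\E[V(Z_1, z_2)\mid S, D_1, Z_1]\mid S_0, Z_1=z_1\,]$. The inner expectation is where the second sequential ignorability $V(Z_1, z_2)\cindep Z_2\mid S, D_1, Z_1$ enters: it lets me further condition on $Z_2=z_2$ without changing the inner mean, and on $\{Z_2=z_2\}$ consistency gives $V(Z_1, z_2)=V$, so the inner mean equals $\E[V\mid S, D_1, Z_1, Z_2=z_2]$. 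Because the outer expectation fixes $Z_1=z_1$, this becomes $\E[V\mid S, D_1, Z=z]$, yielding the claimed conditional formula. Sequential overlap (Assumption~\ref{assume:2_overlap}) is needed to guarantee that the conditioning events have positive probability, so these conditional expectations are well defined. Finally, taking $\E[\cdot]$ over $S_0$ and using the tower rule gives the unconditional identity.

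The main obstacle I anticipate is the bookkeeping around the two flavors of counterfactual: making precise that on $\{Z_1=z_1\}$ one has $V(z)=V(Z_1,z_2)$, and on $\{Z_2=z_2\}$ one has $V(Z_1,z_2)=V$. Both steps rest on the recursive definition of the intervention counterfactuals (Definition~\ref{defn:int-cnt}) together with the consistency property, and the peel must be carried out in the correct order (the earliest intervention handled in the outermost step) so that each ignorability statement is invoked against its matching conditioning set. Once this correspondence is pinned down, the remaining manipulations are routine applications of the tower rule, and the result is exactly the generalized g-formula specialized to instrument interventions.
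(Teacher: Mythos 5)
Your proposal is correct and follows essentially the same route as the paper's proof: condition on $Z_1=z_1$ via the first ignorability statement, replace $V(z)$ by $V(Z_1,z_2)$ by consistency on that event, apply the tower rule with the finer history $(S,D_1,Z_1)$, invoke the second ignorability statement to condition on $Z_2=z_2$, replace $V(Z_1,z_2)$ by $V$ by consistency, and finally average over $S_0$ for the unconditional version. The bookkeeping you flag about the two flavors of counterfactual is exactly the "Conditioning Event" steps in the paper's derivation, so there is nothing missing.
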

\begin{proof}
The conditional part of the lemma follows by the following sequence of equalities.
    \begin{align*}
        \E[V(z)\mid S_0]
        =~& \E[V(z)\mid S_0, Z_1=z_1] \tag{Ignorability}\\
        =~& \E[V(Z_1, z_2)\mid S_0, Z_1=z_1] \tag{Conditioning Event} \\
        =~& \E[\E[V(Z_1, z_2)\mid S, D_1, Z_1]\mid S_0, Z_1=z_1] \tag{Tower Rule}\\
        =~& \E[\E[V(Z_1, z_2)\mid S, D_1, Z_1, Z_2=z_2]\mid S_0, Z_1=z_1] \tag{Ignorability}\\
        =~& \E[\E[V(Z_1, Z_2)\mid S, D_1, Z_1, Z_2=z_2]\mid S_0, Z_1=z_1] \tag{Conditioning Event}\\
        =~& \E[\E[V\mid S, D_1, Z_1, Z_2=z_2]\mid S_0, Z_1=z_1] \tag{Consistency}\\
        =~& \E[\E[V\mid S, D_1, Z=z]\mid S_0, Z_1=z_1]
    \end{align*}
    The unconditional expectation follows by taking the expectation over $S_0$: $\E[V(z)] = \E[\E[V(z)\mid S_0]]$.
\end{proof}

\subsection{Proof of \Cref{thm:2_identification_hetero}}

Following identical reasoning as in the proof of Theorem~\ref{thm:2_identification}, it suffices to only prove the theorem for the quantity $\beta_z(S_0)$. For the latter, by a standard application of the Bayes rule:
\begin{align*}
\E[Y(D(z)) - Y(0,0) \mid D(z) \neq (0,0), S_0]
=~&\frac{
    \mathbb{E}\left[
      \left( Y(D(z)) - Y(0,0) \right) 
      \cdot \mathbb{I} \{ D(z)\neq (0,0) \}\mid S_0
    \right]
  }{ 1 - \Pr\{ D(z)=(0,0) \mid S_0\} }
\end{align*}
However, when $D(z)=(0,0)$, we have $Y(D(z)) - Y(0,0)=0$. 
The latter implies that almost surely:
\begin{align*}
    \left( Y(D(z)) - Y(0,0) \right) 
      \cdot \mathbb{I} \{ D(z)\neq (0,0) \} 
      =~& Y(D(z)) - Y(0,0) 
\end{align*}
Thus we derive that:
\begin{align*}
 \E[Y(D(z)) - Y(0,0) \mid D(z) \neq (0,0), S_0]=~& \frac{
    \mathbb{E}\left[ Y(D(z)) - Y(0,0) \mid S_0\right]
  }{ 1 - \Pr\{ D(z)=(0,0) \mid S_0\} }
\end{align*}
Moreover, under one sided noncompliance we have:
\begin{align*}
    D(0,0) = (0,0), \quad \text{a.s.}
\end{align*}
Thus we can conclude:
\begin{align*}
 \E[Y(D(z)) - Y(0,0) \mid D(z) \neq (0,0), S_0]=~& \frac{
    \mathbb{E}\left[ Y(D(z))\mid S_0\right] - \E\left[Y(D(0,0)) \mid S_0\right]
  }{ 1 - \Pr\{ D(z)=(0,0) \mid S_0\} }
\end{align*}
as desired for the first part of the theorem. The theorem then follows by applying Lemma~\ref{lem:2_dte}.

\subsection{Proof of \Cref{prop:mixture}}
\begin{proof}
    \begin{align*}
        \E[Y(D(z)) - Y(0,0) &\mid D(z) \neq (0,0)] \\
        &\stackrel{(a)}{=} \sum_{d \preceq z:~d\neq (0,0)} \E[Y(D(z)) - Y(0,0) \mid D(z)=d] \Pr(D(z)=d\mid D(z)\neq (0,0))\\
        &\stackrel{(b)}{=} \sum_{d \preceq z:~d\neq (0,0)} \E[Y(d) - Y(0,0) \mid D(z)=d] \Pr(D(z)=d\mid D(z)\neq (0,0))\\
        &\stackrel{(c)}{=} \sum_{d \preceq z:~d\neq (0,0)} \E[Y(d) - Y(0,0) \mid D(z)=d] \frac{\Pr(D(z)=d)}{\Pr(D(z)\neq (0,0))}\\
        &\stackrel{(d)}{=} \sum_{d \preceq z:~d\neq (0,0)} \theta(z,d) w(z,d), 
    \end{align*}
where
 equality $(a)$ applies Law of total probability; 
 equality $(b)$ uses the conditioning event; 
 equality $(c)$ uses Bayes rule and the fact that the event $D(z)\neq(0,0)$ contains the event $D(z)=d$; and
 equality $(d)$ applies the definitions of $w$ and $\theta$ in the Proposition and Estimand \ref{def:dyn_late}, respectively.
\end{proof}

\subsection{Proof of \Cref{thm:late11_id}}

We first prove identification of the conditional LATE and in the subsequent section we invoke this result to identify the un-conditional LATE.

\subsubsection{Identification of Conditional Always-Treat LATE}
\begin{proof}
Define $\Delta = Y(D(1,1)) - Y(0,0)$.
\begin{align}
\tau_{1,1}(S_0)
&\triangleq \E[ Y(1,1) - Y(0,0) \mid D(1,1)=(1,1), S_0 ]        \nonumber \\
&\stackrel{(b)}{=} \E[ Y(D(1,1)) - Y(0,0) \mid D(1, 1)=(1, 1), S_0 ]  \nonumber \\
&\stackrel{(c)}{=} \E[ \Delta \mid D_1(1)=1, D_2(1,1)=1, S_0 ]  \nonumber \\
&\stackrel{(d)}{=} \frac{
    \E[ \Delta \mid D_1(1)=1, S_0 ]
    - \E[\Delta \mid D_1(1)=1, D_2(1,1)=0, S_0]
       \Pr\{ D_2(1,1)=0 \mid D_1(1)=1, S_0 \}
  }{ \Pr\{ D_2(1,1)=1 \mid D_1(1)=1, S_0 \} } \label{eqn:initial-formula}
\end{align}
where
  $(b)$ uses the conditioning event $D(1,1)=(1,1)$; 
  $(c)$ plugs in $\Delta$; and 
  $(d)$ uses Law of Total Probability.

If $\Pr\{D_2(1,1)=0 \mid D_1(1)=1, S_0\}=0$ as in Staggered Compliance settings, then $\E[\Delta \mid D_1(1)=1, D_2(1,1)=0, S_0]$ is not well defined since the conditioning event has zero probability, but the product with $\Pr\{D_2(1,1)=0 \mid D_1(1)=1, S_0\}$ will equal zero and the denominator will equal one, so the ratio will simplify to 
$\tau_{1,1}(S_0)=\E[ \Delta \mid D_1(1)=1, S_0 ]$.
However, if $\Pr\{D_2(1,1)=0 \mid D_1(1)=1, S_0\}\neq0$, we need to identify both of the conditional expectations and one of the conditional probabilities since $\Pr\{ D_2(1,1)=1 \mid D_1(1)=1, S_0 \} = 1- \Pr\{ D_2(1,1)=0 \mid D_1(1)=1, S_0 \}$.
We first rewrite the definitions in the Theorem as three terms: 
\begin{align*}
\beta(S_0) :=\E[ \Delta \mid D_1(1)=1, S_0 ]
&=\E[ Y(D(1,1)) - Y(0,0) \mid D_1(1)=1, S_0 ]                           \\
&=\E[ Y(D(1,1)) \mid D_1(1)=1, S_0 ] - \E[ Y(0,0) \mid D_1(1)=1, S_0 ]  \\
\gamma_{1,1}(S_0) = \Pr\{ D_2(1,1)=1 \mid D_1(1)=1, S_0 \}
&=\E[ D_2(1,1) \mid D_1(1)=1, S_0 ].
\end{align*}
Then, we identify two of the terms by defining $\E[V(1,1) \mid D_1(1)=1, S_0]$ for $V(1,1) \in \{Y(D(1,1)), D_2(1,1)\}$: 
\begin{align*}
\E[ V(1,1) \mid D_1(1)=1, S_0 ]
&\stackrel{(a)}{=}
  \E[ V(1,1) \mid D_1(1)=1, Z_1=1, S_0 ]  \\
&\stackrel{(b)}{=}
  \E[ V(1,1) \mid D_1=1, Z_1=1, S_0 ]     \\
&\stackrel{(c)}{=}
  \E[ \E[ V(1,1) \mid Z_1=1, D_1=1, S_0, S_1 ] \mid Z_1=1, D_1=1, S_0  ] \\
&=
  \E[ \E[ V(Z_1,1) \mid Z_1=1, D_1=1, S_0, S_1 ] \mid Z_1=1, D_1=1, S_0  ] \\
&\stackrel{(d)}{=}
  \E[ \E[ V(Z_1,1) \mid Z=(1,1), D_1=1, S ] \mid Z_1=1, D_1=1, S_0  ]      \\
&\stackrel{(e)}{=}
  \E[ \E[ V \mid Z=(1,1), D_1=1, S] \mid Z_1=1, D_1=1, S_0  ], \\
&=
  \E[ \E[ V \mid S, D_1, Z_1, Z_2=1] \mid Z_1=1, D_1=1, S_0  ], 
\end{align*}
\noindent where
 $(a)$ uses ignorability (i.e. $Y(D(z)), D(z) \cindep Z_1 \mid S_0$, which together with Lemma~\ref{lem:alternative-ignorability} also implies $Y(D(z)) \cindep Z_1 \mid D_1(z), S_0$ and $D_2(z) \cindep Z_1 \mid D_1(z), S_0$);  
 $(b)$ uses Consistency; 
 $(c)$ uses Tower Rule; 
 $(d)$ uses Ignorability; and 
 $(e)$ use Consistency.
The other conditional probability is equal to one minus this quantity.
Next, we identify the third term with
\begin{align*}
\E[ Y(0,0) \mid D_1(1)=1, S_0 ]
&=\frac{
    \E[ Y(0,0) \mid S_0 ]
    - \E[ Y(0,0) \mid D_1(1)=0, S_0 ] \times \Pr\{ D_1(1)=0 \mid S_0 \}
  }{ \Pr\{ D_1(1)=1 \mid S_0 \} }, 
\end{align*}
where
\begin{align*}
\E[ Y(0,0) \mid S_0 ]
=\E[ Y(D(0,0)) \mid S_0 ]                      
&\stackrel{(a)}{=}\E[ \E[ Y \mid Z=(0,0), D=(0,0), S ] \mid Z_1=0, D_1=0, S_0 ]\\
&=\E[ \E[ Y \mid S, D_1, Z_1, Z_2=0] \mid Z_1=0, S_0],\\
\E[ Y(0,0) \mid D_1(1)=0, S_0 ]
&\stackrel{(b)}{=}\E[ \E[ Y \mid Z=(1,0), D=(0,0), S ] \mid Z_1=1, D_1=0, S_0 ]          \\
&=\E[ \E[ Y \mid S, D_1, Z_1, Z_2=0] \mid Z_1=1, D_1=0, S_0 ]          \\
\Pr\{ D_1(1)=1 \mid S_0 \}
&=\Pr\{ D_1(1)=1 \mid Z_1=1, S_0 \}  
\stackrel{(c)}{=}\Pr\{ D_1=1 \mid Z_1=1, S_0 \}     \\
\Pr\{ D_1(1)=0 \mid S_0 \}
&=1 - \Pr\{ D_1(1)=1 \mid S_0 \}, 
\end{align*}
which 
 use the G-formula and One Sided Noncompliance to obtain $(a)$ and
 Consistency to obtain $(c)$.
For $(b)$, we use the new ignorability Assumption~\ref{assume:3_ignorability}, (i.e., that the following holds $\{Y(d), D_1(z)\} \cindep Z_1 \mid S_0$, which together with Lemma~\ref{lem:alternative-ignorability} implies $Y(d) \cindep Z_1 \mid D_1(z), S_0$) and the original ignorability Assumption~\ref{assume:2_ignorability} (i.e. that $Y(D_1, D_2(Z_1, z_2)) \ci Z_2 \mid S, D_1, Z_1$) as follows:
\begin{align*}
\E[ Y(0,0) \mid S_0, D_1(1)=0 ]
&=
  \E[ Y(0,0) \mid S_0, D_1(1)=0, Z_1=1 ]  \\
&=
  \E[ Y(0,0) \mid S_0, D_1=0, Z_1=1 ]     \\
&=
  \E[ Y(D_1,0) \mid S_0, D_1=0, Z_1=1 ]   \\
&=
  \E[ \E[ Y(D_1,0) \mid S, D_1, Z_1] \mid D_1=0, S_0, Z_1=1 ]   \\
&=
  \E[ \E[ Y(D_1, D_2(Z_1, 0)) \mid S, D_1, Z_1 ] \mid S_0, D_1=0, Z_1=1 ]        \tag{One-Sided} \\
&=
  \E[ \E[ Y(D_1, D_2(Z_1, 0)) \mid S, D_1, Z_1, Z_2=0] \mid S_0, D_1=0, Z_1=1 ]       \\
&=
  \E[ \E[ Y(D_1, D_2(Z_1, Z_2)) \mid S, D_1, Z_1, Z_2=0] \mid S_0, D_1=0, Z_1=1 ]       \\
&=
  \E[ \E[ Y(D_1, D_2) \mid S, D_1, Z_1, Z_2=0] \mid S_0, D_1=0, Z_1=1 ]       \\
&=
  \E[ \E[ Y \mid S, D_1, Z_1, Z_2=0] \mid S_0, D_1=0, Z_1=1 ]
\end{align*}

Thus, letting $H_1=S_0$ and $H_2=(S, D_1, Z_1)$ and $f_2(H_2, Z_2)=\E[Y\mid H_2, Z_2]$, we have derived that: 
\begin{align*}
    \beta(S_0) =~& \E[f_2(H_2, 1) \mid H_1, Z_1=1, D_1=1]\\
    ~&~~ - \frac{\E[f_2(H_2, 0) \mid H_1, Z_1=0] - \E[f_2(H_2, 0)\mid H_1, Z_1=1, D_1=0]\Pr(D_1=0\mid Z_1=1, S_0)}{\Pr\{D_1=1\mid H_1, Z_1=1\}}\\
    \gamma(S_0) =~& \E[\Pr(D_2=1 \mid H_2, Z_2=1)\mid H_1, Z_1=1, D_1=1]
\end{align*}
The simplified form given in the Theorem for $\beta(S_0)$ is provided in Lemma~\ref{lem:simplification} below.

\noindent
  Finally, to identify the second conditional expectation, we invoke the Conditional Independence in Means restriction (Assumption \ref{assume:mean_independence}): 
\begin{align*}
\E[ \Delta \mid D_1(1)=1, D_2(1,1)=0, S_0 ]           
&\stackrel{(a)}{=}\E[ Y(D(1,1)) - Y(0,0) \mid D_1(1)=1, D_2(1,1)=0, S_0 ]  \\
&\stackrel{(b)}{=}\E[ Y(1,0) - Y(0,0) \mid D_1(1)=1, D_2(1,1)=0, S_0 ]  \\
&\stackrel{(c)}{=}\E[ Y(1,0) - Y(0,0) \mid D_1(1)=1, S_0 ]\\
&\stackrel{(d)}{=}\E[ Y(1,0) - Y(0,0) \mid D_1(1)=1, D_2(1,0)=0, S_0 ] \\
&\stackrel{(e)}{=} \tau_{10}(S_0),
\end{align*}
where
 $(a)$ uses the definition of $\Delta$; 
 $(b)$ uses the conditioning event; 
 $(c)$ uses Assumption~\ref{assume:mean_independence}; 
 $(d)$ uses One Sided Noncompliance; and
 $(e)$ uses the identification result in Theorem \ref{thm:2_identification}.

We have thus completed the identification of all the quantities that are involved on the right hand side of Equation~\eqref{eqn:initial-formula}.

\end{proof}

\begin{lemma}\label{lem:simplification}
Letting $H_1=S_0$ and $H_2=(S, D_1, Z_1)$ and $f_2(H_2, Z_2)=\E[Y\mid H_2, Z_2]$, the quantity $\beta(S_0)$ can be written in the following simplified forms:
\begin{align*}
\beta(S_0) =~& \frac{\E[f_2(H_2, 1)\mid Z_1=1, S_0]-\E[f_2(H_2, 0)\mid Z_1=0, S_0]}{\Pr(D_1=1\mid Z_1=1, S_0)} \\
~& - \frac{\E[f_2(H_2, 1)-f_2(H_2,0)\mid Z_1=1, D_1=0, S_0]\Pr(D_1=0\mid Z_1=1, S_0)}{\Pr(D_1=1\mid Z_1=1, S_0)}
\end{align*}
and
\begin{align*}
    \beta(S_0) =~& \frac{\E[f_2(H_2, D_1)\mid H_1, Z_1=1] - \E[f_2(H_2, 0)\mid H_1, Z_1=0]}{\Pr(D_1=1\mid Z_1=1, H_1)} 
\end{align*}
\end{lemma}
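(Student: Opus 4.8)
The plan is to begin from the expression for $\beta(S_0)$ that was already derived in the course of proving Theorem~\ref{thm:late11_id}. There, with $f_2(H_2,Z_2)=\E[Y\mid H_2, Z_2]$, $H_1=S_0$, and $H_2=(S, D_1, Z_1)$, it was shown that
\[
\beta(S_0) = \E[f_2(H_2,1)\mid H_1, Z_1=1, D_1=1] - \frac{\E[f_2(H_2,0)\mid H_1, Z_1=0] - \E[f_2(H_2,0)\mid H_1, Z_1=1, D_1=0]\,(1-p)}{p},
\]
where I abbreviate $p:=\Pr(D_1=1\mid Z_1=1, S_0)$ so that $1-p=\Pr(D_1=0\mid Z_1=1, S_0)$. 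The goal is to verify that each of the two displayed forms in the lemma equals this same quantity. The whole argument is elementary: it relies only on the law of total expectation, partitioning on the binary variable $D_1$ within the event $\{Z_1=1, S_0\}$, followed by algebraic recombination over the common denominator $p$.

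For the first form, the only term requiring work is $\E[f_2(H_2,1)\mid Z_1=1, S_0]$ in its numerator. I would expand it by conditioning on $D_1$:
\[
\E[f_2(H_2,1)\mid Z_1=1, S_0] = p\,\E[f_2(H_2,1)\mid Z_1=1, D_1=1, S_0] + (1-p)\,\E[f_2(H_2,1)\mid Z_1=1, D_1=0, S_0].
\]
Substituting this into the first form and combining everything over $p$, the two occurrences of $\E[f_2(H_2,1)\mid Z_1=1, D_1=0, S_0]$ — one arising from this expansion and one contained in the second fraction via $\E[f_2(H_2,1)-f_2(H_2,0)\mid Z_1=1, D_1=0, S_0]$ — cancel, leaving exactly the intermediate formula for $\beta(S_0)$ stated above.

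For the second form, I would treat the key term $\E[f_2(H_2, D_1)\mid H_1, Z_1=1]$. Since $D_1$ is a component of $H_2$, and hence $H_2$-measurable, one has the pointwise identity $f_2(H_2, D_1)=f_2(H_2,1)\,\I(D_1=1) + f_2(H_2,0)\,\I(D_1=0)$, so that
\[
\E[f_2(H_2, D_1)\mid H_1, Z_1=1] = p\,\E[f_2(H_2,1)\mid Z_1=1, D_1=1, S_0] + (1-p)\,\E[f_2(H_2,0)\mid Z_1=1, D_1=0, S_0].
\]
Dividing by $p$ and subtracting $\E[f_2(H_2,0)\mid H_1, Z_1=0]/p$ then reproduces the intermediate formula for $\beta(S_0)$ verbatim, establishing the second identity.

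The main obstacle is purely bookkeeping rather than conceptual: one must keep the three distinct conditioning events ($\{Z_1=1, D_1=1\}$, $\{Z_1=1, D_1=0\}$, and $\{Z_1=0\}$, all within a fixed $S_0$) cleanly separated, and must use that $D_1$ is $H_2$-measurable so that plugging the random $D_1$ into the second argument of $f_2$ is well defined and decomposes as above. Once these points are handled correctly, both identities follow from a single application of the law of total expectation and a short algebraic simplification.
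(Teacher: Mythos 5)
Your proposal is correct and follows essentially the same route as the paper's proof: both start from the intermediate expression for $\beta(S_0)$ derived in the proof of Theorem~\ref{thm:late11_id}, apply the law of total expectation over $D_1$ within the event $\{Z_1=1\}$, and use the pointwise decomposition $f_2(H_2,D_1)=D_1 f_2(H_2,1)+(1-D_1)f_2(H_2,0)$ to obtain the second form. The only difference is direction — you verify that each displayed form reduces back to the intermediate formula, whereas the paper derives the forms forward from it — which is an immaterial distinction.
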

\begin{proof}
The proof of Theorem~\ref{thm:late11_id} implies that $\beta(S_0)$ is identified as
\begin{multline*}
    \E[f_2(H_2, 1)\mid Z_1=1, D_1=1, S_0] \\
    - \frac{\E[f_2(H_2, 0) \mid Z_1=0, S_0] - \E[f_2(H_2, 0)\mid Z_1=1, D_1=0, S_0]\Pr(D_1=0\mid Z_1=1, S_0)}{\Pr(D_1=1\mid Z_1=1, S_0)}.
\end{multline*}
By the law of total probability, the first term equals
\begin{align*}
    \frac{\E[f_2(H_2, 1)\mid Z_1=1, S_0] - \E[f_2(H_2, 1)\mid Z_1=1, D_1=0, S_0]\Pr(D_1=0\mid Z_1=1, S_0)}{\Pr(D_1=1\mid Z_1=1, S_0)}.
\end{align*}
Thus, 
\begin{align*}
\beta(S_0) =~& \frac{\E[f_2(H_2, 1)\mid Z_1=1, S_0]-\E[f_2(H_2, 0)\mid Z_1=0, S_0]}{\Pr(D_1=1\mid Z_1=1, S_0)} \\
~& - \frac{\E[f_2(H_2, 1)-f_2(H_2,0)\mid Z_1=1, D_1=0, S_0]\Pr(D_1=0\mid Z_1=1, S_0)}{\Pr(D_1=1\mid Z_1=1, S_0)}.
\end{align*}
Note that the numerator of the second term can be re-written as
\begin{align*}
     (II) :=~& \E[f_2(H_2, 1)-f_2(H_2,0)\mid Z_1=1, D_1=0, H_1]\Pr(D_1=0\mid Z_1=1, H_1)]\\
     =~& \E[(f_2(H_2, 1)-f_2(H_2,0))(1-D_1)\mid Z_1=1, H_1].
\end{align*}
Merging this with the term,
\begin{align*}
    (I) := \E[f_2(H_2, 1)\mid Z_1=1, H_1], 
\end{align*}
we can simplify: 
\begin{align*}
    (I) + (II) =~& \E[f_2(H_2, 1) - (f_2(H_2, 1)-f_2(H_2,0))(1-D_1)\mid Z_1=1, H_1]\\
    =`& \E[D_1\, f_2(H_2, 1) + f_2(H_2, 0) (1-D_1)\mid Z_1=1, H_1]\\
    =~&\E[f_2(H_2, D_1)\mid Z_1=1, H_1].
\end{align*}
Thus,
\begin{align*}
    \beta(S_0) = \frac{\E[f_2(H_2, D_1)\mid H_1, Z_1=1] - \E[f_2(H_2, 0)\mid H_1, Z_1=0]}{\Pr(D_1=1\mid Z_1=1, H_1)}.
\end{align*}
\end{proof}

\begin{lemma}[From Joint Independence to Conditional Independence]\label{lem:alternative-ignorability}
    Suppose that four variables $L,M,O,N$ satisfy the conditional independence property:
    \begin{align}\label{eqn:ci-prop}
        \{L, M\} \cindep N \mid O
    \end{align}
    then for any $M$ such that $\Pr\{M \mid O\}>0$, we also have
    \begin{align}
        L \cindep N \mid M, O.
    \end{align}
\end{lemma}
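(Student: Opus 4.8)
The plan is to prove this by a direct manipulation of conditional laws, using the factorization characterization of conditional independence. To keep the exposition clean I would work with regular conditional probabilities (or densities where they exist), writing $p(\cdot \mid \cdot)$ for the relevant conditional distributions; the positivity hypothesis $\Pr\{M \mid O\} > 0$ is exactly what guarantees that conditioning on the event involving $M$ is well defined, so it should be invoked at the point where a ratio is formed.

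First, I would restate the hypothesis in factorized form. The statement $\{L, M\} \cindep N \mid O$ means that the conditional law of the pair $(L, M)$ given $(N, O)$ does not depend on $N$, i.e. $p(l, m \mid n, o) = p(l, m \mid o)$ almost surely. Second, I would extract the marginal independence $M \cindep N \mid O$ as a consequence: integrating (or summing) the identity $p(l, m \mid n, o) = p(l, m \mid o)$ over $l$ yields $p(m \mid n, o) = p(m \mid o)$. This is the only step that genuinely uses marginalization over $L$, and it produces the denominator I will need next.

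Third, I would assemble the target conditional law of $L$ given $(M, N, O)$ via the elementary factorization
\[
p(l \mid m, n, o) = \frac{p(l, m \mid n, o)}{p(m \mid n, o)}.
\]
Substituting the two identities from the previous steps—the numerator equals $p(l, m \mid o)$ by the hypothesis and the denominator equals $p(m \mid o)$ by the derived marginal independence—gives
\[
p(l \mid m, n, o) = \frac{p(l, m \mid o)}{p(m \mid o)} = p(l \mid m, o),
\]
which is exactly the claim $L \cindep N \mid M, O$.

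The main obstacle is bookkeeping rather than conceptual: one must ensure the ratio in the third step is well defined, which is precisely guaranteed by $\Pr\{M \mid O\} > 0$ (so that $p(m \mid o) > 0$ on the relevant event), and that all equalities are understood to hold almost surely rather than merely formally. If one prefers to avoid densities entirely, I would phrase the same argument measure-theoretically: show that $\E[\1\{L \in A\} \mid M, O]$ is a version of $\E[\1\{L \in A\} \mid M, N, O]$ by checking the defining integral identity against arbitrary test sets measurable with respect to $(M, O)$, using the joint independence to drop the dependence on $N$. I would keep this measure-theoretic phrasing as the rigorous backbone and retain the density computation above as the intuitive guide.
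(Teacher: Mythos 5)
Your proposal is correct and follows essentially the same route as the paper's proof: both start from the factorization characterization of $\{L,M\}\cindep N\mid O$, extract the marginal independence $M\cindep N\mid O$ as an intermediate step, and then divide by $\Pr\{M\mid O\}>0$ to obtain the target factorization. The only cosmetic difference is that you phrase the conclusion via the conditional law of $L$ given $(M,N,O)$ not depending on $N$, whereas the paper writes the equivalent factorization of the joint law of $(L,N)$ given $(M,O)$.
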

\begin{proof}
By the conditional independence property in Equation~\ref{eqn:ci-prop}, we have that the probability law of $L,M,N,O$ can be factorized as 
\begin{align*}
    \Pr(L,M,N \mid O)=~&
  \Pr\{L,M \mid O\}\times \Pr\{N\mid O\}.
\end{align*}
From this we can conclude
\begin{align*}
\Pr\{ M \mid O \} \times \Pr\{ L,N  \mid M,O \}
=~& \Pr\{M\mid O\}\times \Pr\{L\mid O, M\}\times \Pr\{N\mid O\}.
\end{align*}
Since $\Pr\{M\mid O\} > 0$, this implies
\begin{align*}
\Pr\{ L,N  \mid M,O \}
=~& \Pr\{L\mid O, M\}\times \Pr\{N\mid O\}.
\end{align*}
The assumption in Equation~\ref{eqn:ci-prop} implies that $M\cindep N \mid O$, which implies $\Pr\{N\mid O\} = \Pr\{N\mid O, M\}$. Thus, 
\begin{align*}
\Pr\{ L,N  \mid M,O \}
=~& \Pr\{L\mid O, M\}\times \Pr\{N\mid O, M\}.
\end{align*}
The latter statement shows that the conditional probability law of $L, N$, conditional on $M,O$, factorizes and therefore $L\ci N \mid M, O$.
\end{proof}

\subsubsection{Identification of Unconditional Always-Treat LATE}

We will reduce the identification to the conditional LATE case. Note that we can write:
\begin{align}
    \tau_{11} =~& \E\left[Y(1,1) - Y(0,0) \mid D(1,1)=(1,1)\right]\\
    =~& \E\left[\E\left[Y(1,1) - Y(0,0)\mid S_0, D(1,1)=(1,1)\right] \mid D(1,1)=(1,1)\right]\\
    =~& \E\left[\tau_{1,1}(S_0) \mid D(1,1)=(1,1)\right]\\
    =~& \frac{\E\left[\tau_{1,1}(S_0)\, \I(D(1,1)=(1,1))\right]}{\Pr(D(1,1)=(1,1))}\\
    =~& \frac{\E\left[\tau_{1,1}(S_0)\, \Pr(D(1,1)=(1,1)\mid S_0)\right]}{\Pr(D(1,1)=(1,1))}
\end{align}
The quantity $\tau_{1,1}(S_0)$ was identified in the prior section. The quantity $\Pr(D(1,1)=(1,1))$ was identified in Theorem~\ref{thm:2_identification}. It only remains to identify $\Pr(D(1,1)=(1,1)\mid S_0)$, which can be identified by the following g-formula (applying Lemma~\ref{lem:g-formula} to the random variable $V=\I(D(1,1)=(1,1))$):
\begin{align*}
    \Pr(D(1,1)=(1,1)\mid S_0)
    =~& \E[\E[\I(D=(1,1)\mid S, D_1, Z_1, Z_2=1] \mid S_0, Z_1=1]\\
    =~& \E[\Pr(D=(1,1)\mid S, D_1, Z_1, Z_2=1) \mid S_0, Z_1=1]
\end{align*}
This completes the identification of the quantity $\tau_{11}$.

We further simplify the identification formula as follows. The quantity inside the expectation in the numerator of $\tau_{11}$ takes the form
\begin{align*}
\tau_{1,1}(S_0) \Pr(D(1,1)=(1,1)\mid S_0) = \frac{\beta(S_0) - \tau_{10}(S_0) (1 - \gamma_{1,1}(S_0))}{\gamma_{1,1}(S_0)}\Pr(D(1,1)=(1,1)\mid S_0)
\end{align*}
Moreover, $\beta(S_0)$ has $\Pr(D_1(1)\mid S_0)$ in the denominator. Thus we can write the first term as:
\begin{align*}
    (I) :=~& \frac{\beta(S_0)}{\gamma_{1,1}(S_0)}\Pr(D(1,1)=(1,1)\mid S_0) \\
    =~& \frac{\Gamma}{\Pr(D_1(1)=1\mid S_0) \Pr(D_2(1,1)=1\mid D_1(1)=1, S_0)} \Pr(D(1,1)=(1,1)\mid S_0)\\
    =~& \Gamma
\end{align*}
with
\begin{align*}
    \Gamma = \E[\E[Y\mid H_2, Z_2=D_1]\mid S_0, Z_1=1] - \E[\E[Y\mid H_2, Z_2=0]\mid S_0, Z_1=0]
\end{align*}
The second negative term, then also takes the form:
\begin{align*}
    (II) :=~& \tau_{10}(S_0) \frac{1 - \gamma_{1,1}(S_0)}{\Pr(D_2(1,1)=1\mid D_1(1)=1,S_0)} \Pr(D(1,1)=(1,1)\mid S_0)\\
    =~&
    \tau_{10}(S_0) (1 - \gamma_{1,1}(S_0))\Pr(D_1(1)=1\mid S_0)\\
    =~&\tau_{10}(S_0) (\Pr(D_1(1)=1\mid S_0) - \Pr(D(1,1)=(1,1)\mid S_0))
\end{align*}
Thus overall we get that the numerator in the $\tau_{11}$ can be written as:
\begin{align*}
    \E[\Gamma + \tau_{10}(S_0) (\Pr(D_1(1)=1\mid S_0) - \Pr(D(1,1)=(1,1)\mid S_0)] 
\end{align*}
Note that:
\begin{align*}
    \Pr(D(1,1)=(1,1)\mid S_0) =~& \E[\Pr(D=(1,1)\mid H_2, Z_2=1)\mid S_0, Z_1=1] \\
    =~& \E[D_1 \Pr(D_2=1\mid H_2, Z_2=1)\mid S_0, Z_1=1]\\
    \Pr(D_1(1)=1\mid S_0) =~& \E[D_1 \mid S_0, Z_1=1]
\end{align*}
We can thus write:
\begin{align*}
    \Pr(D_1(1)=1\mid S_0) - \Pr(D(1,1)=(1,1)\mid S_0) =~& \E[D_1 (1 - \Pr(D_2=1\mid H_2,Z_2=1))\mid S_0, Z_1=1]\\
    =~& \E[D_1 \Pr(D_2=0\mid H_2, Z_2=1)\mid S_0, Z_1=1]\\
    =~& \E[\Pr(D=(1, 0)\mid H_2, Z_2=1)\mid S_0, Z_1=1]
\end{align*}

\subsubsection{Proof of Lemma~\ref{lem:staggered}: Staggered Compliance Special Case}\label{sec:staggered}

Note that under Staggered Compliance,  Theorem~\ref{thm:late11_id} implies that:
\begin{align}\label{eqn:delta}
    \tau_{1,1}(S_0) = \beta(S_0) = \E[\Delta \mid D_1(1)=1, S_0]
\end{align}
where $\beta(S_0)$ is identified by the simplified formulas provided in Lemma~\ref{lem:simplification}.

Moreover, note that under Staggered Compliance $\Pr(D(1,1)=(1,1)\mid S_0)=\Pr(D_1(1)=1\mid S_0)=\Pr(D_1=1\mid Z_1=1, S_0)$. Thus the unconditional Always-Treat LATE, also simplifies to:
\begin{align*}
    \tau_{11} =~& \frac{\E[\E[f_2(H_2, 1)\mid Z_1=1, S_0]-\E[f_2(H_2, 0)\mid Z_1=0, S_0]]}{\E[\Pr(D_1=1\mid Z_1=1, S_0)]} \\
~& - \frac{\E[\E[f_2(H_2, 1)-f_2(H_2,0)\mid Z_1=1, D_1=0, S_0]\Pr(D_1=0\mid Z_1=1, S_0)]}{\E[\Pr(D_1=1\mid Z_1=1, S_0)]}
\end{align*}
and
\begin{align*}
    \tau_{11} = \frac{\E[\E[f_2(H_2, D_1)\mid H_1, Z_1=1] - \E[f_2(H_2, 0)\mid H_1, Z_1=0]]}{\E[\Pr(D_1=1\mid Z_1=1, H_1)]} 
\end{align*}

\subsection{Proof of \Cref{thm:2_identification_many}}
\begin{proof} Note that under one sided noncompliance, for any $z\in \{(\0_{<t}, 1, \0_{>t}): t\in \{1, \ldots, T\}\}$, the complement of $D(z)=d$ is $D(z)=\0$, since the unit can choose the treatment only when it is recommended. Thus for any $d=z\in \{(\0_{<t}, 1, \0_{>t}): t\in \{1, \ldots, T\}\}$, we have that the event $D(z)\neq \0$ is equivalent to $D(z)=d$. Hence, for $d=z\in \{(\0_{<t}, 1, \0_{>t}): t\in \{1, \ldots, T\}\}$ we have that:
\begin{align*}
    \E[Y(d)-Y(\0)\mid D(z)=d]=~& \E[Y(D(z)) - Y(\0)\mid D(z)=d] = \E[Y(D(z)) - Y(\0)\mid D(z)\neq \0]\\
    \Pr(D(z)=d) =~& \Pr(D(z)\neq \0) = 1 - \Pr(D(z)=\0) 
\end{align*}
Based on these two properties, it suffices to prove the second identification statement, i.e., 
\begin{align*}
    \E[Y(D(z)) - Y(\0) &\mid D(z) \neq \0]
 =\frac{
    \mathbb{E}[ Y(D(z)) ]
    - \mathbb{E}[ Y(D(\0)) ]
  }{1 - \Pr\{ D(z)=\0 \}
  }.
\end{align*}
and the first identification statement, i.e., that
\begin{align*}
    \tau_d := \E[Y(d) - Y(\0) &\mid D(z)=d]
 =\frac{
    \mathbb{E}[Y(D(z)) ]
    - \mathbb{E}[Y(D(\0)) ]
  }{\Pr\{ D(z)=d \}
  }.
\end{align*}
follows as a special case.

By a standard application of the Bayes rule:
\begin{align*}
\E[Y(D(z)) - Y(\0) \mid D(z) \neq \0]
=~&\frac{
    \mathbb{E}\left[
      \left( Y(D(z)) - Y(\0) \right) 
      \cdot \mathbb{I} \{ D(z)\neq \0 \}
    \right]
  }{ 1 - \Pr\{ D(z)=\0 \} }
\end{align*}
However, when $D(z)=\0$, we have $Y(D(z)) - Y(\0)=0$. 
The latter implies that almost surely:
\begin{align*}
    \left( Y(D(z)) - Y(\0) \right) 
      \cdot \mathbb{I} \{ D(z)\neq \0 \} 
      =~& Y(D(z)) - Y(\0) 
\end{align*}
Thus we derive that:
\begin{align*}
 \E[Y(D(z)) - Y(\0) \mid D(z) \neq \0]=~& \frac{
    \mathbb{E}\left[ Y(D(z)) - Y(\0) \right]
  }{ 1 - \Pr\{ D(z)=\0 \} }
\end{align*}
Moreover, under one sided noncompliance we have:
\begin{align*}
    D(\0) = \0, \quad \text{a.s.}
\end{align*}
Thus we can conclude:
\begin{align*}
 \E[Y(D(z)) - Y(\0) \mid D(z) \neq \0]=~& \frac{
    \mathbb{E}\left[ Y(D(z)) - Y(D(\0)) \right]
  }{ 1 - \Pr\{ D(z)=\0 \} }=\frac{
    \mathbb{E}\left[ Y(D(z))\right] - \E\left[Y(D(\0)) \right]
  }{ 1 - \Pr\{ D(z)=\0 \} }
\end{align*}
as desired for the first part of the theorem.
The second part of the Theorem follows by the following Lemma~\ref{lem:2_dte_many} to each of the terms in the fractions.
\end{proof}

\begin{lemma}[Expected Outcomes and Treatments Under Encouragement Interventions]\label{lem:2_dte_many}
Assume \ref{assume:scm_many}, \ref{assume:2_ignorability_many} and \ref{assume:2_overlap_many}.
for any $d\in {\cal D}^T$ and $z\in {\cal Z}^T$ the counterfactual average $\E[Y(D(z))]$ is identified recursively as: 
\begin{align*}
\E[Y(D(z))]
=~&\E[f_1(H_1, z_1)]\\
\forall t\in \{1,\ldots, T-1\}: f_t(H_t, Z_t) :=~& \E[f_{t+1}(H_{t+1}, z_{t+1})\mid H_t, Z_t]\\
f_{T}(H_T, Z_T) :=~& \E[Y\mid H_T, Z_T]
\end{align*}
and the counterfactual probability $\Pr(D(z)=d)$ is identified recursively as:
\begin{align*}
\Pr(D(z)=d)
=~& \E[g_1(H_1, z_1)]\\
\forall t\in \{1,\ldots, T-1\}: g_t(H_t, Z_t) :=~& \E[g_{t+1}(H_{t+1}, z_{t+1})\mid H_t, Z_t]\\
g_{T}(H_T, Z_T) :=~& \Pr(D=d\mid H_T, Z_T)\\
\end{align*}
\end{lemma}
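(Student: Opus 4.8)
The plan is to generalize the two-period argument of Lemma~\ref{lem:2_dte} and its underlying g-formula (Lemma~\ref{lem:g-formula}) to $T$ periods by a backward induction over episodes. First I would establish the exclusion-restriction identity $Y(D(z)) = Y(Z\to z)$: because under Assumption~\ref{assume:scm_many} each state $S_t$ is a function only of $D_{\leq t}, S_{<t}$ and $Y\equiv S_T$ is a function only of $D, S$, intervening on the instruments propagates into the outcome solely through the induced treatment vector $D(z)$, so the instrument-intervention counterfactual $Y(Z\to z)$ coincides with the treatment-intervention counterfactual $Y(D(z))$. Hence $\E[Y(D(z))]=\E[Y(Z\to z)]$, and it suffices to identify the counterfactual mean of $Y$ under the instrument intervention $Z\to z$. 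The same reduction applies to $\Pr(D(z)=d)=\E[W(Z\to z)]$ for the indicator $W=\I(D=d)$, which is a function of the counterfactual treatment vector alone.

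Next I would define, for each episode $t$, the partial-intervention counterfactual $Y(Z_{\geq t}\to z_{\geq t})$ that fixes instruments from period $t$ onward to $z_{\geq t}$ while letting $Z_{<t}$ (and hence $H_t$) be generated naturally; note this is precisely the object $Y(D_{<t}, D_{\geq t}(Z_{<t}, z_{\geq t}))$ appearing in the sequential ignorability Assumption~\ref{assume:2_ignorability_many}. The core of the proof is the induction hypothesis that $f_t(H_t, z_t) = \E[Y(Z_{\geq t}\to z_{\geq t})\mid H_t]$. The base case $t=T$ follows from ignorability to insert the conditioning event $Z_T=z_T$ and then consistency to replace the counterfactual by the observed $Y$, giving $f_T(H_T,z_T)=\E[Y\mid H_T, Z_T=z_T]$. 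For the inductive step I would chain, exactly as in the proof of Lemma~\ref{lem:g-formula}, the following four moves: (i) ignorability to pass from conditioning on $H_t$ to conditioning on $H_t, Z_t=z_t$; (ii) a consistency argument to replace $Y(Z_{\geq t}\to z_{\geq t})$ by $Y(Z_{\geq t+1}\to z_{\geq t+1})$ on the event $Z_t=z_t$; (iii) the tower rule over $H_{t+1}$; and (iv) the induction hypothesis to recognize the inner conditional expectation as $f_{t+1}(H_{t+1}, z_{t+1})$, yielding $f_t(H_t,z_t)=\E[f_{t+1}(H_{t+1},z_{t+1})\mid H_t, Z_t=z_t]$, which is the stated recursion. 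Evaluating at $t=1$ and taking expectations over $H_1=S_0$ gives $\E[Y(Z\to z)]=\E[f_1(H_1,z_1)]$.

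The treatment-probability claim follows verbatim with $Y$ replaced by $W=\I(D=d)$ and $f_t$ by $g_t$, since Assumption~\ref{assume:2_ignorability_many} states the independence jointly for the counterfactual treatment vector $D_{\geq t}(Z_{<t}, z_{\geq t})$ (and $D_{<t}$ is $H_t$-measurable), so every ignorability and consistency step used for $Y$ applies unchanged to the bounded function $W$ of the counterfactual treatments. Sequential overlap (Assumption~\ref{assume:2_overlap_many}) is invoked throughout to guarantee that the conditioning events $Z_t=z_t$ have positive probability so that the conditional expectations are well defined.

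The main obstacle I anticipate is the careful bookkeeping in step (ii): justifying that, on the event $Z_t=z_t$, the partial intervention $Z_{\geq t}\to z_{\geq t}$ is indistinguishable from $Z_{\geq t+1}\to z_{\geq t+1}$, and simultaneously that the naturally generated history $H_{t+1}$ agrees with the history that the counterfactual process would produce, so that the tower rule and the induction hypothesis can be applied on the same conditioning $\sigma$-algebra. This is where the precise form of the exclusion restrictions in Assumption~\ref{assume:scm_many} and the fact that Assumption~\ref{assume:2_ignorability_many} is indexed by the partial-intervention counterfactual (rather than the full one) do the essential work; aligning these indices across periods is the only genuinely delicate part, and once it is set up each individual equality is a routine consequence of ignorability, consistency, and the tower rule.
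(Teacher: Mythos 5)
Your proposal is correct and follows essentially the same route as the paper: reduce $\E[Y(D(z))]$ and $\Pr(D(z)=d)$ to instrument-intervention counterfactual means via the exclusion restrictions, then identify those by a backward induction (the generalized g-formula) whose inductive step chains exactly the ignorability, consistency, and tower-rule moves you describe, applied uniformly to $Y$ and to $W=\I(D=d)$. The only cosmetic difference is that the paper carries the induction hypothesis as $\E[V(Z_{\leq t}, z_{>t})\mid H_t, Z_t]=f_t(H_t,Z_t)$ with $Z_t$ left free and inserted via ignorability one step earlier, whereas you condition on $H_t$ alone and evaluate at $z_t$ directly; the two formulations are equivalent.
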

\begin{proof}
First note that due to the exclusion restriction imposed by Assumption~\ref{assume:scm_many}, we have that $Y(D(z))=Y(Z\to z)$. Hence, $Y(D(z))$ corresponds to the mean counterfactual outcome $Y$ under interventions on the instruments $Z_{1:T}$. Since the instruments satisfy the sequential conditional ignorability Assumption~\ref{assume:2_ignorability_many} and the sequential overlap Asssumption~\ref{assume:2_overlap_many}, the following identification result follows from the generalized g-formula (see e.g. \cite[Chapter 21]{hernancausal} or \cite[Theorem 1]{chernozhukov2022automatic}), which we also derive in Lemma~\ref{lem:g-formula_many} for completeness:
\begin{align*}
\E[Y(D(z))]
=~&\E[f_1^z(H_1, z_1)]\\
\forall t\in \{1,\ldots, T-1\}: f_t^z(H_t, Z_t) :=~& \E[f_{t+1}^z(H_{t+1}, z_{t+1})\mid H_t, Z_t]\\
f_{T}^z(H_T, Z_T) :=~& \E[Y\mid H_T, Z_T]
\end{align*}
Similarly, define the variable $W=\I(D=d)$. Note that $\Pr(D(z)=d) = \E[W(Z\to z)]$. Thus, $\Pr(D(z)=d)$ corresponds to the mean counterfactual outcome of the variable $W$, under interventions on the instruments $Z_{1:T}$, i.e., $\E[W(z)]$. Thus we can apply again the g-formula (see Lemma~\ref{lem:g-formula_many}) to derive:
\begin{align*}
\Pr(D(z)=d)
=\E[W(z)]=~& \E[g_1^z(H_1, z_1)]\\
\forall t\in \{1,\ldots, T-1\}: g_t^z(H_t, Z_t) :=~& \E[g_{t+1}^z(H_{t+1}, z_{t+1})\mid H_t, Z_t]\\
g_{T}^z(H_T, Z_T) :=~& \E[\I(D=d)\mid H_T, Z_T] = \Pr(D=d\mid H_T, Z_T)\\
\end{align*}

\end{proof}

\begin{lemma}[g-formula for Instrument Interventions]\label{lem:g-formula_many} For any variable $V$, let $V(z)$ be shorthand notation for $V(Z\to z)$. If the following sequential ignorability holds for $V$:
\begin{align*}
    V(Z_{<t}, z_{\geq t}) \cindep~& Z_t \mid H_t
\end{align*}
and the sequential overlap Assumption~\ref{assume:2_overlap_many} holds, then:
\begin{align*}
\E[V(z) \mid S_0] = \E[V(z) \mid H_1]
=~& f_1(H_1, z_1)\\
\forall t\in \{1,\ldots, T-1\}: f_t(H_t, Z_t) :=~& \E[f_{t+1}(H_{t+1}, z_{t+1})\mid H_t, Z_t]\\
f_{T}(H_T, Z_T) :=~& \E[V\mid H_T, Z_T]
\end{align*}
Moreover:
\begin{align*}
\E[V(z)] = \E[\E[V(z)\mid S_0]] = \E[f_1(H_1, z_1)]
\end{align*}
and for any $t\in \{1, \ldots, T\}$:
\begin{align}\label{eqn:middle-step-g-formula}
    \E[V(Z_{\leq t}, z_{> t})\mid H_t, Z_t] = f_t(H_t, Z_t)
\end{align}
\end{lemma}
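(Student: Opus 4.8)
The plan is to prove the intermediate identity in Equation~\eqref{eqn:middle-step-g-formula} by backward induction on $t$ from $T$ down to $1$, mirroring the sequence of manipulations (ignorability, tower rule, conditioning-event substitution, consistency) used in the two-period Lemma~\ref{lem:g-formula}. The conditional statement $\E[V(z)\mid S_0]=f_1(H_1,z_1)$ and the unconditional statement $\E[V(z)]=\E[f_1(H_1,z_1)]$ then follow as immediate consequences. For the base case $t=T$, the intervention vector $(Z_{\leq T},z_{>T})$ sets every instrument to its observed value (the block $z_{>T}$ is empty), so by the consistency property $V(Z_{\leq T},z_{>T})=V(Z)=V$, and hence $\E[V(Z_{\leq T},z_{>T})\mid H_T,Z_T]=\E[V\mid H_T,Z_T]=f_T(H_T,Z_T)$.

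For the inductive step I would assume the identity at $t+1$ and establish it at $t$ via four moves carried out in this order. First, since $H_{t+1}=\{Z_{\leq t},D_{\leq t},S_{\leq t}\}$ refines the $\sigma$-algebra generated by $(H_t,Z_t)$, the tower rule gives $\E[V(Z_{\leq t},z_{>t})\mid H_t,Z_t]=\E[\,\E[V(Z_{\leq t},z_{>t})\mid H_{t+1}]\mid H_t,Z_t]$. Second, because $V(Z_{\leq t},z_{>t})=V(Z_{<t+1},z_{\geq t+1})$, the lemma's sequential ignorability hypothesis yields $V(Z_{\leq t},z_{>t})\cindep Z_{t+1}\mid H_{t+1}$, so I may insert the conditioning event $Z_{t+1}=z_{t+1}$ without changing the inner conditional expectation; Assumption~\ref{assume:2_overlap_many} guarantees this event has positive probability, so the resulting quantity is well defined. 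Third, on the event $\{Z_{t+1}=z_{t+1}\}$ I replace the fixed value $z_{t+1}$ by the now-matching observed $Z_{t+1}$ to rewrite $V(Z_{\leq t},z_{>t})=V(Z_{\leq t+1},z_{>t+1})$. Fourth, applying the inductive hypothesis at the value $Z_{t+1}=z_{t+1}$ gives $\E[V(Z_{\leq t+1},z_{>t+1})\mid H_{t+1},Z_{t+1}=z_{t+1}]=f_{t+1}(H_{t+1},z_{t+1})$. Substituting back into the outer expectation, the recursive definition of $f_t$ yields $\E[V(Z_{\leq t},z_{>t})\mid H_t,Z_t]=\E[f_{t+1}(H_{t+1},z_{t+1})\mid H_t,Z_t]=f_t(H_t,Z_t)$, completing the induction.

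Finally, to obtain the stated conclusions I would specialize the intermediate identity. Setting $t=1$ (where $H_1=S_0$ by the convention $Z_0,D_0,S_{-1}=\emptyset$) and applying the ignorability hypothesis once more to handle the first-period instrument, I get $\E[V(z)\mid H_1]=\E[V(z)\mid H_1,Z_1=z_1]=\E[V(Z_{\leq 1},z_{>1})\mid H_1,Z_1=z_1]=f_1(H_1,z_1)$, which is the conditional claim; the unconditional claim is then $\E[V(z)]=\E[\E[V(z)\mid S_0]]=\E[f_1(H_1,z_1)]$ by the tower rule. The main obstacle I anticipate is purely bookkeeping: carefully justifying the interchange between the observed instrument $Z_{t+1}$ and the intervention value $z_{t+1}$, which requires pairing ignorability (to insert the conditioning event) with the consistency/conditioning-event substitution, while invoking overlap to ensure that every conditioning event encountered along the way has positive probability and the nested conditional expectations remain well defined.
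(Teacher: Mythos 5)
Your proposal is correct and follows essentially the same route as the paper's own proof: backward induction on the intermediate identity $\E[V(Z_{\leq t}, z_{>t})\mid H_t, Z_t]=f_t(H_t,Z_t)$, with the same four moves (tower rule over $H_{t+1}$, ignorability at index $t+1$ to insert $Z_{t+1}=z_{t+1}$, conditioning-event substitution, then the inductive hypothesis), the same consistency-based base case at $t=T$, and the same final specialization at $t=1$ plus a tower-rule step for the unconditional claim. The only cosmetic difference is that you explicitly flag overlap as guaranteeing well-definedness of the inserted conditioning events, which the paper leaves implicit.
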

\begin{proof}
We prove the conditional on $S_0$ statement. The unconditional statement then follows via a simple application of the tower rule of expectations. We also note that the history sets are a nested sequence, in that $H_t=\{Z_{<t}, D_{<t}, S_{<t}\}$ contains all the information contained in $H_{t-1}\cup Z_{t-1}$. Then we note that:
    \begin{align*}
        \E[V(z)\mid S_0] =~& \E[V(z)\mid H_1]\\
        =~& \E[V(z)\mid H_1, Z_1=z_1] \tag{ignorability}\\
        =~& \E[V(Z_1, z_{> 1})\mid H_1, Z_1=z_1]
    \end{align*}
We then show by induction that for all $t\in \{1, \ldots, T\}$:
\begin{align}
    \E[V(Z_{\leq t}, z_{> t})\mid H_t, Z_t] = \E[f_{t+1}(H_{t+1}, z_{t+1})\mid H_t, Z_t] =: f_t(H_t, Z_t)
\end{align}
with the convention that $f_{T+1}(H_{T+1}, z_{T+1})\equiv V$ and $f_t$ is recursively defined in the statement of the Lemma for any $t\leq T$.
The lemma then follows by noting that given this property, we can re-write:
\begin{align*}
    \E[V(z)\mid S_0] = f_1(H_1, z_1)
\end{align*}
For the inductive proof, we first prove this fact for $t=T$, which follows simply by consistency:
\begin{align*}
\E[V(Z_{\leq T})\mid H_T, Z_T] = \E[V\mid H_T, Z_T]
\end{align*}
Assuming this holds for all $t'>t$, we prove this property for $t$:
\begin{align*}
    \E[V(Z_{\leq t}, z_{>t})\mid H_{t}, Z_{t}] =~& \E[\E[V(Z_{\leq t}, z_{>t})\mid H_{t+1}]\mid H_{t}, Z_{t}] \tag{tower rule}\\
    =~& \E[\E[V(Z_{\leq t}, z_{>t})\mid H_{t+1}, Z_{t+1}=z_{t+1}]\mid H_{t}, Z_{t}] \tag{ignorability}\\
    =~& \E[\E[V(Z_{\leq t+1}, z_{>t+1})\mid H_{t+1}, Z_{t+1}=z_{t+1}]\mid H_{t}, Z_{t}]\\
    =~& \E[f_{t+1}(H_{T+1}, z_{t+1}) \mid H_{t}, Z_{t}] \tag{induction hypothesis}\\
    =~& f_{t}(H_t, Z_t)
\end{align*}
which proves the induction step and completes the proof of the lemma.
\end{proof}

\subsection{Proof of \Cref{thm:late11_id_many}}

We first prove identification of the conditional LATE and in the subsequent section we invoke this result to identify the un-conditional LATE.

\subsubsection{Identification of Conditional Always-Treat LATE}
\begin{proof}
Define $\Delta = Y(D(\1)) - Y(\0)$.
\begin{align}
\tau_{\1}(S_0)
&\triangleq \E[ Y(\1) - Y(\0) \mid D(\1)=\1, S_0 ]        \nonumber \\
&= \E[ Y(D(\1)) - Y(\0) \mid D(\1)=\1, S_0 ]  \nonumber \\
&= \E[\Delta \mid D(\1)=\1, S_0 ]  \nonumber \\
&= \E[ \Delta \mid D_1=1, D_{>1}(\1)=\1_{>1}, S_0 ] \nonumber\\
&= \frac{
    \E[ \Delta \mid D_1(1)=1, S_0 ]
    - \E[\Delta \mid D_1(1)=1, D_{>1}(\1)\neq \1_{>1}, S_0]
       \Pr\{ D_{>1}(\1)\neq \1_{>1} \mid D_1(1)=1, S_0 \}
  }{ \Pr\{ D_{>1}(\1)= \1_{>1}  \mid D_1(1)=1, S_0 \} } \nonumber
\end{align}
Note that sequential ignorability implies that, which implies, by invoking Lemma~\ref{lem:alternative-ignorability}, that $D_{>1}(\1) \ci Z_1 \mid D_1(1), S_0$. Thus we can invoke the staggered compliance Assumption~\ref{assume:staggered} to derive:
\begin{align}
    \Pr\{ D_{>1}(\1)\neq \1_{>1} \mid D_1(1)=1, S_0 \} =~& \Pr\{ D_{>1}(\1)\neq \1_{>1} \mid D_1(1)=1, Z_1=1, S_0 \} \nonumber\\
    =~& \Pr\{ D_{>1}(\1)\neq \1_{>1} \mid D_1=1, Z_1=1, S_0 \} \nonumber\\
    =~& 0 \tag{Staggered Compliance}
\end{align}
Hence, the second term in the numerator vanishes and the denominator is equal to one, leading to:
\begin{align}
\tau_{\1}(S_0) &= \E[ \Delta \mid  D_1(1)=1, S_0] 
\end{align}
We then split the final quantity as:
\begin{align*}
\E[ \Delta \mid D_1(1)=1, S_0 ]
&=\E[ Y(D(\1)) - Y(\0) \mid D_1(1)=1, S_0 ]                           \\
&=\E[ Y(D(\1)) \mid D_1(1)=1, S_0 ] - \E[ Y(\0) \mid D_1(1)=1, S_0 ]  
\end{align*}
we first apply the following variant of the g-formula to $\E[Y(D(\1)) \mid D_1(1)=1, S_0]$. In the equations below we use sequential ignorability, consistency, the tower rule of expectations and the fact that $H_t \supseteq H_{t-1} \cup \{D_{t-1}, Z_{t-1}\}$. Moreover, we use Lemma~\ref{lem:alternative-ignorability} to state the alternative conditional ignorability statements: $Y(D(z)) \cindep Z_1 \mid D_1(z), S_0$. With these properties we can write:
\begin{align*}
\E[ Y(D(\1)) \mid D_1(1)=1, S_0 ]
&= 
\E[ Y(D(\1)) \mid D_1(1)=1, H_1 ] \tag{$H_1=S_0$}\\
&=
  \E[ Y(D(\1)) \mid D_1(1)=1, Z_1=1, H_1 ]  \tag{$Y(D(\1))\cindep Z_1\mid D_1(1), S_0$}\\
&=
  \E[ Y(D(\1)) \mid D_1=1, Z_1=1, H_1 ]     \\
&=
  \E[ Y(D(Z_1, \1_{>1})) \mid D_1=1, Z_1=1, H_1 ]     \\
&=
  \E[ \E[ Y(D(Z_1, \1_{>1})) \mid H_2 ] \mid Z_1=1, D_1=1, H_1  ] \tag{$H_2 \supseteq H_1 \cup \{D_1, Z_1\}$}\\
&=
  \E[ \E[ Y(D(Z_1, \1_{>1})) \mid H_2, Z_2=1 ] \mid Z_1=1, D_1=1, H_1]     \tag{ignorability}\\
&=
  \E[ \E[ Y(D(Z_{\leq 2}, \1_{>2})) \mid H_2, Z_2=1 ] \mid Z_1=1, D_1=1, H_1] 
\end{align*}
Applying Lemma~\ref{lem:g-formula_many} and invoking Equation~\eqref{eqn:middle-step-g-formula} from Lemma~\ref{lem:g-formula_many}, we now have that:
\begin{align*}
    \E[ Y(D(Z_{\leq 2}, \1_{>2})) \mid H_2, Z_2=1 ] = f_2^{\1}(H_2, 1)
\end{align*}
where $f_t^{\1}$ as recursively defined in Theorem~\ref{thm:2_identification_many}. Thus we can recursively identify the above quantity as:
\begin{equation}
\begin{aligned}
\E[ Y(D(\1)) \mid D_1(1)=1, S_0 ] =~& \E[ f_2^{\1}(H_2, 1)\mid Z_1=1, D_1=1, H_1]
\end{aligned}
\end{equation}
Next, we argue identification of the quantity $\E[Y(\0)\mid D_1(1)=1, S_0]$.
\begin{align*}
\E[ Y(\0) \mid D_1(1)=1, S_0 ]
&=\frac{
    \E[ Y(\0) \mid S_0 ]
    - \E[ Y(\0) \mid D_1(1)=0, S_0 ] \times \Pr\{ D_1(1)=0 \mid S_0 \}
  }{ \Pr\{ D_1(1)=1 \mid S_0 \} }, 
\end{align*}
First we note that the probabilities can be easily identified by conditional ignorability:
\begin{align}\label{eqn:d1-1}
\Pr\{ D_1(1)=1 \mid S_0 \} &= \Pr\{ D_1(1)=1 \mid H_1 \}
=\Pr\{ D_1(1)=1 \mid Z_1=1, H_1 \}  
=\Pr\{ D_1=1 \mid Z_1=1, H_1 \}     \\
\Pr\{ D_1(1)=0 \mid S_0 \}
&=1 - \Pr\{ D_1(1)=1 \mid S_0 \}, 
\end{align}

Next we delve into the two conditional counterfactual outcome expectations. 
By one sided noncompliance, we have $\E[Y(\0)\mid S_0] = \E[Y(D(\0))\mid S_0]$. The latter falls into the quantities that can be recursively identified by Lemma~\ref{lem:g-formula_many}, as:
\begin{equation}
\begin{aligned}
    \E[Y(\0)\mid S_0] =~& \E[Y(D(\0))\mid H_1] = f_1^{\0}(H_1, 0)
\end{aligned}
\end{equation}
with $f_t^{\0}(H_t, Z_t)$ as defined in Theorem~\ref{thm:2_identification_many}.
The quantity $\E[Y(\0)\mid D_1(1)=0, S_0]$ can be identified by another variant of the g-formula, as follows. We invoke the ignorability Assumption~\ref{assume:3_ignorability_many}, (i.e., that the following holds $\{Y(d), D_1(z_1)\} \cindep Z_1 \mid H_1$, which together with Lemma~\ref{lem:alternative-ignorability} implies $Y(d) \cindep Z_1 \mid D_1(z_1), H_1$). We can then write:
\begin{align*}
\E[ Y(\0) \mid S_0, D_1(1)=0 ] &= \E[ Y(\0) \mid H_1, D_1(1)=0 ]\\
&=
  \E[ Y(\0) \mid H_1, D_1(1)=0, Z_1=1 ]  \tag{$Y(d) \cindep Z_1 \mid D_1(z_1), H_1$}\\
&=
  \E[ Y(\0) \mid H_1, D_1=0, Z_1=1 ]     \\
&=
  \E[ Y(D_1, \0_{>1}) \mid S_0, D_1=0, Z_1=1 ]    \\
&=
  \E[ \E[ Y(D_1, \0_{>1}) \mid H_2] \mid H_1, D_1=0, Z_1=1 ]         \tag{$H_2\supseteq H_1\cup \{D_1, Z_1\}$}\\
&=
  \E[ \E[ Y(D_1, D_{>1}(\0_{>1})) \mid H_2] \mid H_1, D_1=0, Z_1=1 ]        \tag{one-sided}\\
&=
  \E[ \E[ Y(D_1, D_{>1}(\0_{>1})) \mid H_2, Z_2=0] \mid H_1, D_1=0, Z_1=1 ]        \tag{ignorability}\\
&=
  \E[ \E[ Y(D(Z_{\leq 2}, \0_{>2})) \mid H_2, Z_2=0] \mid H_1, D_1=0, Z_1=1 ] 
\end{align*}
Invoking Lemma~\ref{lem:g-formula_many} and, in particular, Equation~\eqref{eqn:middle-step-g-formula} from Lemma~\ref{lem:g-formula_many}, we observe that:
\begin{align*}
    \E[ Y(D(Z_{\leq 2}, \0_{>2})) \mid H_2, Z_2=0] = f_2^{\0}(H_2, 0)
\end{align*}
Thus we can identify the quantity $\E[Y(\0)\mid S_0, D_1(1)=0]$, recursively as:
\begin{align}
    \E[Y(\0)\mid S_0, D_1(1)=0] = \E[Y(\0)\mid H_1, D_1(1)=0] = \E[f_2^{\0}(H_2, 0)\mid H_1, D_1=0, Z_1=1]
\end{align}

Overall, we have derived that:
\begin{align*}
    \tau_{\1}(S_0) =~& \E[ \Delta \mid D_1(1)=1, S_0 ]\\
    =~& \E[f_2^{\1}(H_2, 1)\mid Z_1=1, D_1=1, H_1] \\
    &~ - \frac{f_1^{\0}(H_1, 0) - \E[f_2^{\0}(H_2, 0)\mid H_1, D_1=0, Z_1=1]\Pr(D_1=0\mid Z_1=1, H_1)}{\Pr(D_1=1\mid Z_1=1, H_1)}
\end{align*}
Note also that by the law of total probability the first term on the RHS can be written as:
\begin{align}
\frac{\E[f_2^{\1}(H_2, 1)\mid Z_1=1, H_1] - \E[f_2^{\1}(H_2, 1)\mid Z_1=1, D_1=0, H_1] \Pr(D_1=0\mid Z_1=1, H_1)}{\Pr(D_1=1\mid Z_1=1, H_1)}
\end{align}
Moreover, note that by definition $\E[f_2^{\1}(H_2, 1)\mid Z_1=1, H_1]=f_1^{\1}(H_1, 1)$.
Thus we can simplify $\tau_\1(S_0)$ as:
\begin{align*}
    \tau_{\1}(S_0)
    =~& \frac{f_1^{\1}(H_1,1) - f_1^{\0}(H_1, 0) - \E[f_2^{\1}(H_2, 1) - f_2^{\0}(H_2,0)\mid Z_1=1, D_1=0, H_1]\Pr(D_1=0\mid Z_1=1, H_1)}{\Pr(D_1=1\mid Z_1=1, H_1)}
\end{align*}
which concludes the proof of the identification formula given in the Theorem.

\subsubsection{Identification of Unconditional Always-Treat LATE}

We will reduce the identification to the conditional LATE case. Note that we can write:
\begin{align*}
    \tau_{\1} =~& \E\left[Y(\1) - Y(\0) \mid D(\1)=\1\right]\\
    =~& \E\left[\E\left[Y(\1) - Y(\0)\mid S_0, D(\1)=(\1)\right] \mid D(\1)=\1\right]\\
    =~& \E\left[\tau_{\1}(S_0) \mid D(\1)=\1\right]\\
    =~& \frac{\E\left[\tau_{\1}(S_0)\, \I(D(\1)=\1)\right]}{\Pr(D(\1)=(\1))}\\
    =~& \frac{\E\left[\tau_{\1}(S_0)\, \Pr(D(\1)=\1 \mid S_0)\right]}{\Pr(D(\1)=\1)}
\end{align*}
The quantity $\tau_{\1}(S_0)$ was identified in the prior section. Note that under Staggered Compliance, we have that $\Pr(D(\1)=\1)=\Pr(D_1(1)=1)$ and $\Pr(D(\1)=\1\mid S_0)=\Pr(D_1(1)=1\mid S_0)$. Thus we derive:
\begin{align}
    \tau_{\1} =~& \frac{\E\left[\tau_{\1}(S_0)\, \Pr(D_1(1)=1 \mid S_0)\right]}{\Pr(D_1(1)=1)}
\end{align}
The quantity $\Pr(D_1(1)=1\mid S_0)$, was identified as $\Pr(D=1\mid Z_1=1, S_0)$ in Equation~\eqref{eqn:d1-1}. Hence, by tower rule, we also have that $\Pr(D_1(1)=1)=\E[\Pr(D_1(1)=1\mid S_0)] = \E[\Pr(D=1\mid Z_1=1, S_0)]$.
We further note that the formula we derived for $\tau_\1(S_0)$ is a fraction with $\Pr(D_1=1\mid Z_1=1, S_0)$ in the denominator. Thus we can simplify the formula as:
\begin{align*}
    \tau_\1 = \frac{\E\left[f_1^{\1}(S_0,1) - f_1^{\0}(S_0, 0) - \E[f_2^{\1}(H_2, 1) - f_2^{\0}(H_2,0)\mid Z_1=1, D_1=0, S_0]\Pr(D_1=0\mid Z_1=1, S_0)\right]}{\E[\Pr(D_1=1\mid Z_1=1, S_0)]}
\end{align*}

\subsubsection{Proof of Lemma~\ref{lem:staggered-many}: Simplified Identification Formula}

The product term in the formula for $\tau_\1$, can also be re-written as:
\begin{align*}
     \E[(f_2^\1(H_2, 1)-f_2^\0(H_2,0))(1-D_1)\mid Z_1=1, H_1]
\end{align*}
Observing that:
\begin{align*}
    f_1^\1(H_1, 1) = \E[f_2^\1(H_2, 1)\mid Z_1=1, H_1]
\end{align*}
we can merge the two terms in the numerator as:
\begin{align*}
    \E\left[f_2^\1(H_2, 1)\, D_1 + f_2^\0(H_2, 0)\,(1-D_1)\mid Z_1=1, H_1\right] =  \E\left[f_2^{D_1\cdot \1}(H_2, D_1)\mid Z_1=1, H_1\right]
\end{align*}
Thus we can simplify the identification formula as
\begin{align*}
    \tau_\1 = \frac{\E\left[\E\left[f_2^{D_1\cdot \1}(H_2, D_1)\mid H_1, Z_1=1\right] - f_1^{\0}(H_1, 0)\right]}{\E[\Pr(D_1=1\mid Z_1=1, H_1)]}
\end{align*}
\end{proof}
\end{document}